\def\Comment#1{\textsl{$\langle\!\langle$#1\/$\rangle\!\rangle$}}
\newcommand{\myparskip}{3pt}
\newtheorem{lemma}{Lemma}[section]
\newtheorem{theorem}[lemma]{Theorem}
\newtheorem{corollary}[lemma]{Corollary}
\newtheorem{prop}[lemma]{Proposition}
\newcommand{\opt}{\textsc{OPT}}
\newcommand{\optcr}{\textsc{OPT}_{\textsc{frac}}}
\newcommand{\costcr}{\textsc{COST}_{\textsc{frac}}}
\newcommand{\partcr}{\textsc{PART}_{\textsc{frac}}}
\newcommand{\costint}{\textsc{COST}_{\textsc{int}}}
\newcommand{\partint}{\textsc{PART}_{\textsc{int}}}
\newcommand{\costballs}{\textsc{COST}_{\textsc{balls}}}
\newcommand{\partballs}{\textsc{PART}_{\textsc{balls}}}
\newcommand{\etal}{{\em et al.}~}
\DeclareMathOperator*{\Ex}{\mathbb{E}}
\renewenvironment{proof}{\vspace{-0.1in}\noindent{\bf Proof:}}%
        {\hspace*{\fill}$\Box$\par}
\newenvironment{proofof}[1]{\smallskip\noindent{\bf Proof of #1:}}%
        {\hspace*{\fill}$\Box$\par}
\newenvironment{proofsketch}{\vspace{-0.1in}\noindent{\bf Proof Sketch:}}%
        {\hspace*{\fill}$\Box$\par}
\def\eps{\varepsilon}
\def\script#1{\mathcal{#1}}
\def\b1{{\bf 1}}
\def\bx{{\bf x}}
\def\bd{{\bf d}}
\def\sep{\;|\;}
\def\CR{\textsc{LE-Rel}\xspace}
\def\MCSAfull{Minimum Submodular-Cost Allocation\xspace}
\def\MCSA{\textsc{MSCA}\xspace}
\def\monMCSA{\textsc{Monotone-MSCA}\xspace}
\def\monGreedy{\textsc{Monotone-MSCA-Greedy}\xspace}
\def\SubMPfull{\textsc{Submodular Multiway Partition}\xspace}
\def\SubMP{\textsc{Sub-MP}\xspace}
\def\SubMPRel{\textsc{SubMP-Rel}\xspace}
\def\SymSubMP{\textsc{Sym-Sub-MP}\xspace}
\def\MMCSA{\textsc{Monotone MSCA}\xspace}
\def\AHMCfull{\textsc{Hypergraph Multiway Cut}\xspace}
\def\AHMC{\textsc{Hypergraph-MC}\xspace}
\def\HMPfull{\textsc{Hypergraph Multiway Partition}\xspace}
\def\HMP{\textsc{Hypergraph-MP}\xspace}
\def\SubMLfull{\textsc{Submodular Cost Labeling}\xspace}
\def\SubML{\textsc{Sub-Label}\xspace}
\def\CKR{\textsc{CKR-Rounding}\xspace}
\def\HR{\textsc{Half-Rounding}\xspace}
\def\KT{\textsc{KT-Rounding}\xspace}
\def\SymMPR{\textsc{SymSubMP-Rounding}\xspace}
\def\SymMLR{\textsc{SymSubLabel-Rounding}\xspace}
\def\nodeMC{\textsc{Node-wt-MC}\xspace}
\def\MC{\textsc{Graph-MC}\xspace}
\begin{document}
\title{Submodular Cost Allocation Problem and
Applications\footnote{An extended abstract of this paper 
will appear in {\em Proc.\ of ICALP}, July 2011.}}
\author{
Chandra Chekuri\thanks{Dept. of Computer Science, University of Illinois, 
Urbana, IL 61801. Supported in part by NSF grants CCF-0728782 and
CCF-1016684. {\tt chekuri@cs.illinois.edu}}
\and
Alina Ene\thanks{Dept. of Computer Science, University of Illinois, Urbana,
    IL 61801. Supported in part by NSF grants CCF-0728782 and CCF-1016684.
{\tt  ene1@illinois.edu}}
}
\date{\today}

\maketitle

\thispagestyle{empty}

\begin{abstract}
We study the \MCSAfull problem (\MCSA). In this problem we are given a
finite ground set $V$ and $k$ non-negative submodular set functions
$f_1,\ldots,f_k$ on $V$. The objective is to partition $V$
into $k$ (possibly empty) sets $A_1, \cdots, A_k$ such
that the sum $\sum_{i = 1}^k f_i(A_i)$ is minimized. Several well-studied
problems such as the non-metric facility location problem, multiway-cut in
graphs and hypergraphs, and uniform metric labeling and its generalizations
can be shown to be special cases of \MCSA.
In this paper we consider a convex-programming relaxation obtained via
the Lov\'{a}sz-extension for submodular functions. This allows us to
understand several previous relaxations and rounding procedures in a
unified fashion and also develop new formulations and approximation
algorithms for several problems. In particular, we give a
$(1.5 - 1/k)$-approximation for the hypergraph multiway partition problem. We
also give a $\min\{2(1-1/k), H_\Delta\}$-approximation for the
hypergraph multiway cut problem when $\Delta$ is the maximum hyperedge
size. Both problems generalize the multiway cut problem in graphs
and the hypergraph cut problem is approximation equivalent to the
node-weighted multiway cut problem in graphs.
\end{abstract}

\newpage
\setcounter{page}{1}
\section{Introduction}
%%%%%%%%%%%%%%%%%%%%%%%%%%%%%%%%%%%%%%%%%%%%%%
We consider the following allocation problem with submodular costs.

\noindent
\textbf{\MCSAfull}~(\MCSA).  Let $V$ be a finite ground set and let
$f_1, \cdots, f_k$ be $k$ non-negative submodular set functions on
$V$. That is, for $1 \le i \le k$, $f_i:2^V \rightarrow \mathbb{R}_+$
and $f_i(A) + f_i(B) \ge f_i(A \cup B) + f_i(A \cap B)$ for all $A, B
\subseteq V$. In the \textsc{\MCSA}~problem the goal is to partition
the ground set $V$ into $k$ (possibly empty) sets $A_1, \cdots, A_k$ such
that the sum $\sum_{i = 1}^k f_i(A_i)$ is minimized.

We observe that the problem is interesting only if the $f_i$'s are
different for otherwise allocating all of $V$ to $f_1$ is trivially an
optimal solution. We assume that the functions $f_i$ are given as
value oracles, although in specific applications they may be available
as explicit poly-time computable functions of some auxiliary input.
The special case of this problem in which all of the functions are
monotone ($f(A) \le f(B)$ if $A \subseteq B$) has been previously
considered by Svitkina and Tardos \cite{SvitkinaT06}. In this paper,
we consider the problem with both monotone and non-monotone functions.
We show that several well-studied problems such as non-metric facility
location, multiway cut problems in graphs and hypergraphs, uniform
metric labeling and its generalization to hub location among others
can be cast as special cases of \MCSA. In particular, we investigate
the integrality gap of a simple and natural convex-programming
relaxation for \MCSA that is obtained via the use of the Lov\'asz
extension of a submodular function.

\smallskip
\noindent
\textbf{Lov\'asz extension and a convex program for \MCSA:} Let $V$
be a finite ground set of cardinality $n$.
\iffalse
For a subset $S \subseteq
V$ let $\b1_S \in \{0,1\}^n$ denote its characteristic vector.
\fi
Each real-valued set function on $V$ corresponds to a function
$f:\{0,1\}^n \rightarrow \mathbb{R}$ on the vertices of the
$n$-dimensional hypercube. The Lov\'asz extension of $f$ to the
continuous domain $[0,1]^n$ denoted by $\hat{f}$ is defined
as\footnote{The definition is not the standard one but is equivalent
to it; see \cite{Vondrak09} or Appendix~\ref{app:lovasz}. This
definition is convenient to us in describing and understanding
rounding procedures.}
	$$\hat{f}(\bx) = \Ex_{\theta \in
	[0,1]}\left[f(\bx^{\theta})\right] =
	\int_0^1 f(\bx^{\theta}) d\theta$$
where $\bx^{\theta} \in \{0, 1\}^n$ for a given vector $\bx \in
[0,1]^n$ is defined as: $x_i^{\theta} = 1$ if $x_i \geq \theta$ and
$0$ otherwise.

Lov\'asz showed that $\hat{f}$ is convex if and only if $f$ is a
submodular set function \cite{Lovasz83}. Moreover, it is easy to see
that, given $\bx$, the value $\hat{f}(\bx)$ can be computed in
polynomial time by using a value oracle for $f$.  Via this extension,
we obtain a straightforward relaxation for \MCSA with a convex
objective function and linear constraints.  Let $v_1, \cdots, v_n$
denote the elements of $V$. The relaxation has variables $x(v,i)$ for
$v \in V$ and $1 \le i \le k$ with the interpretation that $x(v,i)$ is
$1$ if $v$ is assigned to $A_i$ and $0$ otherwise.  Let $\bx_i =
(x(v_1, i), \cdots, x(v_n, i))$. The relaxation is given below.

\begin{center}
\begin{boxedminipage}{0.5\linewidth}
\vspace{-0.2in}
\begin{align*}
& \textbf{\CR} &\\
\min \qquad & \sum_{i = 1}^k \hat{f}_i(\bx_i) &\\
& \sum_{i = 1}^k x(v, i) = 1 & \forall v\\
& x(v,i) \geq 0  & \forall v, i
\end{align*}
\end{boxedminipage}
\end{center}

\smallskip
\noindent
Throughout, we use $\opt$ and $\optcr$ to denote the value of an
optimal integral and an optimal fractional solution to
\CR (respectively).

We remark that \CR can be solved in time that is polynomial in $n$ and
$\log\left(\max_{S \subseteq V} f(S)\right)$ via the ellipsoid method;
we give some of the details in Appendix~\ref{app:lovasz}.
Moreover, for some problems of interest the above convex program can
be rewritten into an equivalent linear program. We now describe
several problems that can be cast as special cases of \MCSA, and also
how some previously considered linear-programming relaxations can be
seen as being equivalent to the convex program above.

\subsection{Problems related to \MCSA}
\textbf{Monotone \MCSA (\monMCSA) and Facility Location}: In
facility location, we have a set of facilities $\mathcal{F}$ and a set
of clients or demands $\mathcal{D}$. There is a non-negative cost
$c_{ij}$ to connect facility $i$ to client $j$ (we do not necessarily
assume that these costs form a metric). Opening facility $i \in
\mathcal{F}$ costs $f_i$. The goal is to open a subset of the
facilities and assign each client to an open facility so as to
minimize the sum of the facility opening cost and the connection
costs. Svitkina and Tardos \cite{SvitkinaT06} considered the setting
where the cost of opening a facility $i$ is a monotone submodular
function $g_i$ of the clients assigned to it, and gave an $(1+\ln
|\mathcal{D}|)$-approximation, and matching hardness via a reduction
from set cover. We note that this problem is equivalent to \MCSA
when all the $f_i$ are monotone submodular functions, which we refer to
as \monMCSA. In \cite{SvitkinaT06} a greedy algorithm via submodular
function minimization is used to derive the approximation. Here we
prove that the integrality gap of \CR is $(1+\ln |\mathcal{D}|)$, and
describe how certain rounding algorithms achieve this bound. These
algorithms are useful when considering functions that are not
necessarily monotone.

\smallskip
\noindent 
\textbf{Submodular Multiway Partition (\SubMP)}: We define an abstract
problem and then specialize to known problems. Let $f:2^V \rightarrow
\mathbb{R}_+$ be a submodular set function over $V$ and let $S =
\{s_1,s_2,\ldots,s_k\}$ be $k$ terminals in $V$. The submodular
multiway partition problem is to find a partition of $V$ into
$A_1,\ldots,A_k$ such that $s_i \in A_i$ and $\sum_{i=1}^k f(A_i)$ is
minimized. This has been previously considered by Zhao, Nagamochi and
Ibaraki \cite{ZhaoNI05}. This can be seen as a special case of \MCSA
as follows. Define the ground set to be $V'=V\setminus S$ and, for $1
\le i \le k$, $f_i:2^{V'}\rightarrow \mathbb{R}_+$ is the function
defined as $f_i(S) = f(S \cup \{s_i\})$.  If in addition $f$ is
symmetric ($f(A) = f(V-A)$ for all $A$) we call this the symmetric
\SubMP problem (\SymSubMP). Note that although the problem is based
on a single function $f$, $k$ different submodular functions (induced
by the terminals) are needed to reduce it to \MCSA. We now discuss
some important special cases of this problem.

\noindent
{\em Multiway Cut in Graphs} (\MC): The input is an edge-weighted
undirected graph $G=(V,E)$ and $k$ terminal vertices $S = \{s_1,
\ldots, s_k\}$; the goal is to remove a minimum-weight set of edges to
disconnect the terminals. This can be seen as a special case of the
symmetric submodular multiway partition problem by simply choosing $f$
to be the cut-capacity function of $G$ scaled down by a factor of $2$.
That is, $f(A) = \frac12 \sum_{e \in \delta(A)} w(e)$ where $w(e)$ is
the weight of edge $e$. We observe that \CR for this problem is
equivalent to the well-known geometric LP relaxation of Calinescu,
Karloff and Rabani \cite{CalinescuKR98}, which led to significant
improvements ($1.5-1/k$ in \cite{CalinescuKR98} and $1.3438$ in
\cite{KargerKSTY99}) over the $2(1-1/k)$-approximation obtained via
the isolating-cut heuristic \cite{DahlhausJPSY92}.

\noindent
{\em Multiway Cut and Partition in Hyper-Graphs}: Given an
edge-weighted hypergraph $\script{G}=(V,\script{E})$ and terminal set
$S \subset V$, the \AHMCfull problem (\AHMC)
(see \cite{OkumotoFN10,Xiao10,Fukunaga10}) asks for the minimum weight
subset of hyperedges whose removal disconnects the terminals. This can
be seen as a special case of \SubMP \cite{OkumotoFN10}; this reduction
requires some care and the underlying submodular function is {\em
asymmetric}. A related problem is the \HMPfull problem (\HMP)
introduced by Lawler \cite{Lawler73} where the cost for hyperedge $e$
is proportional to the number of non-trivial pieces it is partitioned
into. This can be seen as a special case of the \SymSubMP
with $f$ being the hypergraph cut capacity function. We note
that \MC is a special case of both \AHMC and \HMP.

\noindent
{\em Node-weighted Multiway Cut in Graphs} (\nodeMC): In this problem
\cite{GargVY04} the graph has weights on nodes instead of edges and
the goal is to find a minimum weight subset of nodes whose removal
disconnects a given set of terminals. It is not difficult to show that
\AHMC and \nodeMC are approximation equivalent \cite{OkumotoFN10}.

Zhao \etal \cite{ZhaoNI05} consider generalizations of the above
problems where some set of terminals $S \subseteq V$ and $k$ are
specified and the goal is to partition $V$ into $k$ sets such that
each set contains at least one terminal and the total cost of the
partition is minimized. We do not discuss these further since they are
not directly related to \MCSA, although one can reduce them to \MCSA
if $k$ is a fixed constant.

\vspace{2mm}
\noindent
\textbf{Uniform Metric Labeling and \SubMLfull (\SubML):} The metric
labeling problem was introduced by Kleinberg and Tardos
\cite{KleinbergT02} as a general classification problem. We are given
an undirected edge-weighted graph $G=(V,E)$ and $k$ labels and the
goal is to assign a label to each vertex to minimize the labeling cost
and the edge-cut cost.  Assigning label $i$ to $v$ incurs a cost
$c_i(v)$ and if an edge $uv$ of weight $w(uv)$ has $u$ labeled with
$i$ and $v$ labeled with $j$ then the edge-cut cost incurred is $w(uv)
\cdot d(ij)$.  The uniform metric labeling problem is obtained when
$d(ij) = 1$ for all $i \neq j$. We consider the following
generalization that we call the \SubMLfull (\SubML) problem which is a
special case of \MCSA.  The $k$ labels correspond to the $k$ functions
$f_1,\ldots,f_k$.  We define $f_i$ as the sum of two functions, a
monotone function $g_i$ that models the label assignment cost, and a
non-monotone function $h$ that models the cut-cost. The goal then is
to partition $V$ into $A_1,\ldots,A_k$ to minimize $\sum_{i=1}^k
(g_i(A_i) + h(A_i))$. Note that uniform metric labeling is the special
case when $g_i$ are modular and $h$ is the graph cut function, which
is symmetric. We are motivated to consider this generalization by
problems that have been considered previously, such as metric labeling
on hypergraphs, hub location problem \cite{GeYZ07}, and the extension
of metric labeling to handle label opening costs \cite{DelongOIB10}.

\subsection{Overview of Results and Techniques}
In this paper we examine the complexity of \MCSA primarily through the
``integrality gap'' of the convex relaxation \CR which can be
optimized in polynomial time.  All the problems we consider are
NP-hard and our focus is on polynomial time approximation algorithms.

A significant portion of our contribution is to highlight the
naturalness of \MCSA and the Lov\'asz-extension based relaxation \CR
by showing connections to previously studied problems, linear
programming relaxations, and rounding strategies. Viewing these
problems in the more abstract setting of submodularity gives insights
into prior algorithms. In the process, we obtain new and interesting
results. Although one would like to obtain a single unifying algorithm
that achieves a good approximation for \MCSA, it turns out that \CR
has a large integrality gap and we believe that \MCSA is hard to
approximate to a polynomial factor. However, it is fruitful to examine
special cases of \MCSA that admit good approximations via \CR. We
describe several applications below by summarizing our results; all of
them are based on \CR.
\begin{itemize}
\item The integrality gap of \CR for \monMCSA is $\Theta(\log n)$.
\item There is a $(1.5-1/k)$-approximation for \HMP.
\item There is a $\min\{2(1-1/k), H_\Delta\}$-approximation for \AHMC,
  where $\Delta$ is the maximum hyperedge size and $H_i$ is the
  $i$-th harmonic number. For $\Delta=2$ this gives a
  $1.5$-approximation and for $\Delta=3$ this gives a
  $1.833$-approximation. 
\item \CR for \AHMC gives a new mathematical programming relaxation for \nodeMC
  and a new $2$-approximation. Moreover, if all non-terminal
  nodes have degree at most $3$ we obtain a $1.833$-approximation
  improving upon the $2(1-1/k)$ known via the distance-based
  relaxation \cite{GargVY04}.
\item The integrality gap of \CR for \SymSubMP is at most $2-2/k$;
this gives an alternative approximation to previous combinatorial
  algorithms \cite{Queyranne98,ZhaoNI05}. We raise the question as to whether
  the integrality gap is at most $1.5$.
%\item The integrality gap of \CR is $\Delta$ for uniform metric
%  labeling on hypergraphs with degree $\Delta$ which shows that the
%  integrality gap can be $\Omega(n)$ for \MCSA.
\item There is an $O(\log n)$ for \SubML when the cut function is
  symmetric. We derive results for other special cases of \SubML.
\end{itemize}

\noindent
{\bf Rounding the convex relaxation:} Recall that the objective
function in \CR is $\sum_{i=1}^k \hat{f}_i(\bx_i)$, where
$\hat{f}_i(\bx_i) = \Ex_{\theta \in [0,1]}[f(\bx^\theta_i)]$. How do
we round while preserving the objective function?  If we focus on a
specific $i$, the objective function suggests that we pick $\theta$
randomly from $[0,1]$ and assign the elements in $\bx^\theta_i$ to
$i$; we call this $\theta$-rounding. However, there are two issues to
contend with. First, if we independently round for each $i$ then the
same element may be assigned multiple times. Second, we need to ensure
that all elements are assigned, which is not guaranteed by the
$\theta$-rounding. We remark that there is an integrality gap example
for hypergraph metric labeling that shows that there is no effective
rounding strategy that works in general.

Our approach is to understand the rounding process by considering
various special cases of interest. In particular, we consider monotone
functions, symmetric functions, the hypergraph separation cost
function (which is asymmetric), and combinations of such functions.
Monotonocity helps in that if elements are assigned to a label $i$,
they can be removed without increasing the fractional cost.  Although
one can use different strategies to obtain an $O(\log
n)$-approximation and integrality gap, a useful strategy here is the
rounding of Kleinberg and Tardos \cite{KleinbergT02} that they
introduced for metric labeling. This has the additional property of
ensuring that an element $u$ is assigned to $i$ with probability
exactly $x(u,i)$. We then consider the rounding process for \SubMP, in
particular the symmetric case \SymSubMP. Here, we crucially take
advantage of the fact that there is a single underlying function $f$,
and moreover the fact that it is symmetric. We consider the \CKR
strategy from \cite{CalinescuKR98} and show its effectiveness for
hypergraphs by abstracting away some of the properties specific to
graphs that were previously exploited in the analysis. In the process,
we also observe that a variant is equally effective for graphs but is
more insightful for \SymSubMP.

Finally, \SubML combines a monotone function and a non-monotone
function. Here, we resort to \KT since it is a reasonable strategy to
approximately preserve the cost of the monotone component. For the
uniform metric labeling problem, \cite{KleinbergT02} showed that \KT
approximately (to within a factor of $2$) preserves the fractional
connection cost in the case of graphs. We show bounds for hypergraph
cut functions in an analogous fashion.  Our insights enable us to
develop a variant of the rounding that gives an $O(\log
n)$-approximation for \SubML when the cut function is an arbitrary
symmetric submodular function.

\noindent
\textbf{Other Related Work:}
There has been much recent interest in optimizing with submodular set
functions. In particular, maximization problems have been examined via
combinatorial techniques as well as the multilinear relaxation
\cite{CalinescuCPV07}. The submodular welfare problem \cite{Vondrak08}
is similar in spirit to \MCSA except that one is interested in
maximizing the value of an allocation rather than minimizing the cost.
Minimization problems with submodular costs have also received
substantial attention
\cite{SvitkinaF08,GoemansHIM09,IwataN09,GoelKTW09} with several
negative results for basic problems as well as positive approximation
results for problems such as the submodular cost vertex cover problem
\cite{IwataN09,GoelKTW09}. Lov\'asz-extension based convex programs
have been effectively used for these problems.  Various submodular cut
and partition problems and their special cases such as the hypergraph
cut and partition have been studied recently
\cite{ZhaoNI05,Xiao10,OkumotoFN10,Fukunaga10}; however, these papers
have typically focussed on greedy and divide-and-conquer based
approaches while we use \CR.

\noindent
\textbf{Recent Results for \SymSubMP and \SubMP:}
Very recently,
building on the work in this paper and a non-trivial new technical
theorem, we showed \cite{ChekuriE11b} that the integrality gap of
\SubMPRel is at most $1.5 - 1/k$ for \SymSubMP and at most $2$ for
\SubMP.

\section{Monotone MSCA}
\label{sec:monotone}

In this section we consider \monMCSA where $f_1,\ldots,f_k$ are
monotone submodular functions. We will assume for simplicity that
$f_i(\emptyset) = 0$ for all $i$. Svitkina and Tardos
\cite{SvitkinaT06} considered this problem in the context of facility
location and gave a $(1+\ln n)$-approximation and matching hardness
via an approximation preserving reduction from set cover. Let $\alpha
= \min_{S \subseteq V, 1\le i \le k} f_i(S)/|S|$. The main observation
in \cite{SvitkinaT06} is that $\alpha \le \opt/n$, and moreover a pair
$(S,i)$ such that $f_i(S)/|S| = \alpha$ can be computed in
polynomial time via submodular function minimization. One can then
iterate using a greedy scheme, by using the monotonicity of the
functions, to obtain a $(1+\ln n)$-approximation.  Using a similar
argument, we can prove the following theorem.

\begin{theorem}
The integrality gap of \CR for \monMCSA is at most $(1+\ln n)$.
In particular, $\alpha \le \optcr/n$.
\end{theorem}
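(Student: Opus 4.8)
The plan is to follow the greedy argument of Svitkina and Tardos~\cite{SvitkinaT06} but to charge against the fractional optimum $\optcr$ instead of $\opt$. The only genuinely new ingredient is the inequality $\alpha \le \optcr/n$ asserted in the theorem; once it is in hand, the rest is the familiar set-cover-style accounting, carried out on a sequence of shrinking ground sets.

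First I would establish $\alpha \le \optcr/n$ directly from the definition of the Lov\'asz extension. Let $\bx$ be an optimal solution to \CR. For each label $i$ and each threshold $\theta \in [0,1]$, the set $\bx_i^\theta = \{v : x(v,i) \ge \theta\}$ satisfies $f_i(\bx_i^\theta) \ge \alpha\,|\bx_i^\theta|$ by the definition of $\alpha$ (this holds trivially also when $\bx_i^\theta=\emptyset$ since $f_i(\emptyset)=0$). Integrating over $\theta$ and using that $\int_0^1 |\bx_i^\theta|\,d\theta = \sum_v x(v,i)$ gives $\hat f_i(\bx_i) \ge \alpha \sum_v x(v,i)$. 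Summing over $i$ and invoking the constraint $\sum_i x(v,i)=1$ for every $v$, we get $\optcr = \sum_{i=1}^k \hat f_i(\bx_i) \ge \alpha \sum_v \sum_i x(v,i) = \alpha n$. I would also recall (as in~\cite{SvitkinaT06}) that a pair $(S,i)$ attaining $f_i(S)/|S|=\alpha$ is computable in polynomial time, e.g.\ by minimizing $f_i(S)-\lambda|S|$ over $S$ (one submodular minimization per $i$) and binary searching on $\lambda$.

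Next I would describe the greedy iteration. Pick $(S,i)$ with $f_i(S)/|S|=\alpha$, commit every element of $S$ to label $i$, and recurse on ground set $V'=V\setminus S$ with each $f_j$ restricted to $V'$ (all $k$ labels are kept; restriction preserves non-negativity, monotonicity, submodularity, and $f_j(\emptyset)=0$). Two observations make this work. (i) The residual fractional optimum never exceeds $\optcr$: the restriction of the original optimal $\bx$ to $V'$ is feasible for the residual \CR, and since each $f_j$ is monotone we have $f_j(\bx_j^\theta\cap V')\le f_j(\bx_j^\theta)$ for every $\theta$, so the restricted solution costs at most $\optcr$; iterating, the residual fractional optimum at step $t$ is at most $\optcr$, hence $\alpha_t \le \optcr/n_t$, where $n_t = |V| - \sum_{s<t}|S_s|$ is the number of still-unassigned elements and $\alpha_t$ is the minimum ratio in the residual instance. (ii) The final integral cost is at most $\sum_t \alpha_t |S_t|$: if $T_i = \bigcup_{t:\ \text{step }t\text{ picks label }i} S_t$ is the set ultimately assigned to $i$, then by subadditivity of monotone submodular functions vanishing at $\emptyset$ we have $f_i(T_i) \le \sum_{t:\ t\to i} f_i(S_t) = \sum_{t:\ t\to i}\alpha_t|S_t|$, and summing over $i$ gives the claimed bound on the cost of the produced integral solution.

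Finally I would total the cost. Using $\alpha_t\le\optcr/n_t$ and $|S_t|=n_t-n_{t+1}$, the integral cost is at most $\optcr\sum_t (n_t-n_{t+1})/n_t \le \optcr\sum_t (H_{n_t}-H_{n_{t+1}}) = \optcr\,(H_n - H_0) \le (1+\ln n)\,\optcr$, so the integrality gap of \CR for \monMCSA is at most $1+\ln n$. I expect the one point that needs care --- the ``main obstacle'' --- to be observation (i), namely verifying that the fractional optimum of every residual instance stays bounded by $\optcr$; this is exactly where monotonicity of the $f_i$ is used, and without it the recursion would not close. The ratio-minimization oracle and the telescoping harmonic sum are routine.
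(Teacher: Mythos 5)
Your proof is correct and follows essentially the same route as the paper: the key inequality $\alpha \le \optcr/n$ obtained by averaging the Lov\'asz extension over thresholds $\theta$ and using $\sum_i x(v,i)=1$, monotonicity to control the fractional solution restricted to the still-unassigned elements, and the standard harmonic telescoping of the greedy, giving $H_n \le 1+\ln n$. The only cosmetic difference is that the paper's greedy (\monGreedy) selects among level sets of the fixed fractional solution restricted to the unassigned set, so no minimum-ratio oracle or re-bounding of a residual relaxation is needed, whereas you invoke the Svitkina--Tardos ratio oracle and argue the residual fractional optimum stays at most $\optcr$; the underlying averaging argument is identical.
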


\begin{algo}
\underline{\textbf{\monGreedy:}}
\\\> let $x$ be a solution to \CR
\\\> $A_i \leftarrow \emptyset$ for all $i$ $(1 \leq i \leq k)$
\Comment{the set of vertices that will be assigned to $i$}
\\\> $U \leftarrow V$ \Comment{the set of unassigned vertices}
\\\> while $U \neq \emptyset$
\\\>\> let $\tilde{x}$ be the restriction of $x$ to $U$
\\\>\> for each $\theta$, let $A(i, \theta) = \{v \;|\; v \in
U, \tilde{x}(v, i) \geq \theta\}$
\\\>\> let $0 = \theta_{i, 1} < \theta_{i, 2} < \cdots <
\theta_{i, \ell_i} = 1$ be the distinct entries of $\tilde{\bx}_i$
\\\>\> let $(i', j')$ be the pair of indices in the set $\{(i, j)
\sep 1 \leq i \leq k, 1 \leq j < \ell_i\}$
\\\>\>\> that minimizes the ratio $f_i(A(i, \theta_{i, j})) / |A(i,
\theta_{i, j})|$
\\\>\> $A_{i'} \leftarrow A_{i'} \cup A(i', \theta_{i', j'})$
\\\>\> $U \leftarrow U - A(i', \theta_{i', j'})$
\end{algo}

\begin{theorem} \label{thm:monotone-greedy}
	\monGreedy achieves an $H_n$-approximation for \monMCSA.
\end{theorem}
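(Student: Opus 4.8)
The plan is to analyze \monGreedy as a greedy set-cover-style argument, charging the cost of each chosen pair against the portion of $\optcr$ that it "uses up." First I would observe that the algorithm is well-defined and terminates: at each iteration, since every unassigned vertex $v$ has $\sum_i \tilde{x}(v,i)=1$, there is some $i$ with $\tilde{x}(v,i)>0$, hence $v \in A(i,\theta_{i,j})$ for the smallest positive threshold $\theta_{i,j}$ of $\tilde{\bx}_i$; so the chosen set $A(i',\theta_{i',j'})$ is nonempty and $|U|$ strictly decreases. The total number of iterations is therefore at most $n$, and each iteration does polynomial work (the distinct entries of $\tilde{\bx}_i$ number at most $n$, and $f_i$ is evaluated via its value oracle).

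The heart of the argument is the per-iteration density bound. Fix an iteration with current unassigned set $U$, and let $\tilde{x}$ be the restriction of $x$ to $U$. I claim that
$$\min_{(i,j)} \frac{f_i(A(i,\theta_{i,j}))}{|A(i,\theta_{i,j})|} \le \frac{\optcr}{|U|}.$$
To see this, apply the argument behind the first theorem of Section~\ref{sec:monotone} (the Svitkina--Tardos observation, now in fractional form) to the restricted instance on ground set $U$: since each $\hat{f}_i(\tilde{\bx}_i) = \int_0^1 f_i(\tilde{\bx}_i^\theta)\,d\theta$ and the level sets $\tilde{\bx}_i^\theta$ are exactly the sets $A(i,\theta)$ (constant on each interval between consecutive thresholds), we have $\sum_i \hat{f}_i(\tilde{\bx}_i) = \sum_i \sum_{j=1}^{\ell_i-1} (\theta_{i,j+1}-\theta_{i,j}) f_i(A(i,\theta_{i,j}))$. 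Meanwhile $\sum_i \sum_j (\theta_{i,j+1}-\theta_{i,j}) |A(i,\theta_{i,j})| = \sum_i \int_0^1 |\tilde{\bx}_i^\theta|\,d\theta = \sum_{v\in U}\sum_i \tilde{x}(v,i) = |U|$, using the relaxation constraint. Since $\tilde{x}$ is itself a feasible fractional solution to \CR for the restricted instance, $\sum_i \hat{f}_i(\tilde{\bx}_i) \le \optcr$ (restricting a global optimum to $U$ only decreases each $\hat f_i$, by monotonicity and $f_i(\emptyset)=0$ — more precisely, zeroing out coordinates of $\tilde\bx_i$ in $V\setminus U$ does not increase $\hat f_i$). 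Thus the weighted average $\sum_i\sum_j (\theta_{i,j+1}-\theta_{i,j}) f_i(A(i,\theta_{i,j}))$ over total weight $|U|$ is at most $\optcr$, so some term has density at most $\optcr/|U|$, proving the claim.

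Given the claim, the standard harmonic telescoping finishes the proof. Order the chosen sets $B_1, B_2, \ldots$ in the order the algorithm picks them, with sizes $b_1, b_2, \ldots$; let $U_t$ be the unassigned set just before $B_t$ is chosen, so $|U_1| = n$ and $|U_{t+1}| = |U_t| - b_t$. The iteration charges a cost increment at most $(\optcr/|U_t|)\, b_t$ (this uses monotonicity: adding $B_{i'}$'s contribution is at most $f_{i'}(A(i',\theta_{i',j'}))$ even when $A_{i'}$ was already nonempty, since $f_{i'}(A_{i'}\cup B) \le f_{i'}(A_{i'}) + f_{i'}(B)$ by submodularity and $f_{i'}(\emptyset)=0$). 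Summing, the algorithm's cost is at most $\optcr \sum_t b_t / |U_t| \le \optcr \sum_t \bigl(\tfrac{1}{|U_t|} + \cdots + \tfrac{1}{|U_t|-b_t+1}\bigr) = \optcr \sum_{m=1}^{n} \tfrac{1}{m} = H_n \cdot \optcr$, where the middle inequality is the usual bound $b_t/|U_t| \le H_{|U_t|} - H_{|U_t|-b_t}$. Since $\optcr \le \opt$, this gives an $H_n$-approximation.

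The main obstacle is making the per-iteration bound rigorous, specifically the step that the restriction $\tilde x$ of the global fractional optimum is a feasible solution on $U$ with objective at most $\optcr$: one must verify that deleting (zeroing) coordinates of a nonnegative vector does not increase $\hat f_i$ when $f_i$ is monotone with $f_i(\emptyset) = 0$, and then recheck that the covering constraint $\sum_i \tilde x(v,i) = 1$ still holds for $v \in U$ (it does, untouched). The rest is the textbook greedy-cover analysis.
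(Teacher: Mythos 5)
Your proposal is correct and follows essentially the same route as the paper: the per-iteration density bound $\min_{(i,j)} f_i(A(i,\theta_{i,j}))/|A(i,\theta_{i,j})| \le \optcr/|U|$, obtained by comparing $\sum_i \hat f_i(\tilde\bx_i) \le \optcr$ (monotonicity lets you drop assigned vertices) against $\sum_i \int_0^1 |A(i,\theta)|\,d\theta = |U|$, followed by the standard harmonic telescoping with subadditivity when a label is reused. The paper packages the same averaging as $\Ex_{i,\theta}[f_i(A(i,\theta))] \le \optcr/k$ and $\Ex_{i,\theta}[|A(i,\theta)|] = |U|/k$ for a uniformly random $(i,\theta)$, which is just your deterministic weighted average over the breakpoints of $\tilde\bx_i$ (modulo a harmless off-by-one in which breakpoint set represents each interval, an indexing slip the paper shares).
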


\noindent
Before we prove Theorem~\ref{thm:monotone-greedy}, we introduce
some notation. Consider iteration of \monGreedy. Consider an iteration
of the while loop of \monGreedy. Let $U$ be the set of elements that
are unassigned at the beginning of the iteration, and let $\tilde{x}$
denote the restriction of $x$ to $U$; more precisely, $\tilde{x}(v, i)
= x(v, i)$ for all terminals $i$ and all vertices $v \in U$. For any
$\theta$, Let $A(i, \theta) = \{v \;|\; v \in U, \tilde{x}(v, i) \geq
\theta\}$.  Let $0 = \theta_{i, 1} < \theta_{i, 2} < \cdots <
\theta_{i, \ell_i} = 1$ be the distinct entries of $\tilde{\bx}_i$.
Let $\optcr = \sum_{i = 1}^k f_i(\bx_i)$.
Theorem~\ref{thm:monotone-greedy} follows from the following
lemma.

\begin{lemma} \label{lem:monotone-greedy-main}
	$$\min_{1 \leq i \leq k} \min_{0 \leq j < \ell_i} {f_i(A(i,
	\theta_{i, j})) \over |A(i, \theta_{i, j})|} \leq {\optcr \over
	|U|} .$$
\end{lemma}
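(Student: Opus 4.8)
The plan is to establish the displayed inequality by lower-bounding $\optcr$ itself by $|U|$ times the minimum ratio, exploiting two features of \CR: its objective is an average over a threshold $\theta$ of set-function values, and the constraint $\sum_i x(v,i)=1$ survives restriction to $U$. Writing out the Lov\'asz extension, $\optcr = \sum_{i=1}^k \hat f_i(\bx_i) = \sum_{i=1}^k \int_0^1 f_i(\bx_i^\theta)\,d\theta$. For every $i$ and every $\theta$, the level set $A(i,\theta) = \{v\in U : \tilde{x}(v,i)\ge\theta\}$ is exactly $\bx_i^\theta\cap U$, hence a subset of $\bx_i^\theta$, so monotonicity of $f_i$ gives $f_i(\bx_i^\theta)\ge f_i(A(i,\theta))$; integrating and summing,
$$\optcr \ \ge\ \sum_{i=1}^k \int_0^1 f_i(A(i,\theta))\,d\theta .$$

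Write $\rho$ for the double minimum on the left-hand side of the claimed inequality. As $\theta$ ranges over $(0,1]$ the set $A(i,\theta)$ is piecewise constant, and its distinct values are precisely the candidate sets $A(i,\theta_{i,j})$ over which $\rho$ is taken; consequently $f_i(A(i,\theta)) \ge \rho\,|A(i,\theta)|$ for every $i$ and every $\theta\in(0,1]$, where both sides vanish whenever $A(i,\theta)=\emptyset$. Integrating this pointwise bound and summing over $i$,
$$\optcr \ \ge\ \rho \sum_{i=1}^k \int_0^1 |A(i,\theta)|\,d\theta .$$
It remains to evaluate the last sum. Exchanging the two summations, $\sum_{i=1}^k |A(i,\theta)| = \sum_{v\in U} |\{\,i : \tilde{x}(v,i)\ge\theta\,\}|$, and since $\int_0^1 \mathbf{1}[\tilde{x}(v,i)\ge\theta]\,d\theta = \tilde{x}(v,i)$,
$$\sum_{i=1}^k \int_0^1 |A(i,\theta)|\,d\theta \;=\; \sum_{v\in U}\sum_{i=1}^k \tilde{x}(v,i) \;=\; \sum_{v\in U} 1 \;=\; |U| ,$$
using $\tilde{x}(v,i) = x(v,i)$ for $v\in U$ together with the constraint $\sum_i x(v,i)=1$ of \CR. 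Combining the three displays gives $\optcr \ge \rho\,|U|$, i.e.\ $\rho \le \optcr/|U|$, which is the lemma; Theorem~\ref{thm:monotone-greedy} then follows by applying it in each iteration of \monGreedy and collapsing the per-iteration increments into the harmonic bound $H_n\cdot\optcr$ in the usual set-cover-style manner.

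The step I expect to need the most care is not the inequalities but the bookkeeping around which level sets the minimum ranges over: one must check that $\theta\mapsto A(i,\theta)$ is a step function whose values on $(0,1]$ are exactly the candidate sets $A(i,\theta_{i,j})$ (so that the pointwise bound $f_i(A(i,\theta))\ge\rho\,|A(i,\theta)|$ is valid simultaneously for all $\theta$), and one must dispose cleanly of the degenerate cases — empty level sets $A(i,\theta_{i,j})$, for which every inequality above reads $0\ge0$, and the conventions around the extreme thresholds $0$ and $1$. These are routine; the two facts that actually carry the proof are monotonicity, which lets us pass from $\bx_i^\theta$ to its restriction to $U$, and the averaging identity $\sum_i\int_0^1|A(i,\theta)|\,d\theta = |U|$, which is merely $\sum_i x(v,i)=1$ integrated over the threshold.
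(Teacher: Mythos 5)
Your argument is correct and is essentially the paper's own proof in contrapositive form: your monotonicity step is Proposition~\ref{prop:monotone-greedy-cost}, your Fubini-style identity $\sum_{i=1}^k\int_0^1|A(i,\theta)|\,d\theta=|U|$ is Proposition~\ref{prop:monotone-greedy-size} (proved there by induction on $|U|$ rather than by exchanging the sum and the integral), and the two are combined by the same averaging over $(i,\theta)$, with the $1/k$ normalization cancelling. The only difference is presentational: the paper extracts a witness pair $(i',\theta')$ from $\Ex_{i,\theta}\bigl[f_i(A(i,\theta))-\tfrac{\optcr}{|U|}|A(i,\theta)|\bigr]\le 0$ and then snaps $\theta'$ to a discrete threshold, whereas you integrate the pointwise bound $f_i(A(i,\theta))\ge\rho\,|A(i,\theta)|$ directly; the level-set bookkeeping you flag (which sets $A(i,\theta)$, $\theta\in(0,1]$, coincide with the candidate sets of the minimum) is handled no more carefully in the paper's own proof.
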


\noindent
In order to prove Lemma~\ref{lem:monotone-greedy-main}, we will show
that, if we choose a terminal $i \in \{1, 2, \cdots, k\}$ and a
threshold $\theta \in [0, 1]$ uniformly at random, the ratio
$\Ex[f_i(A(i, \theta))] / \Ex[|A(i, \theta)|]$ is at most $\optcr /
|U|$. The following propositions give a bound on the two
expectations $\Ex[f_i(A(i, \theta))]$ and $\Ex[|A(i, \theta)|]$;
their proofs are relatively straightforward and they have been moved
to Appendix~\ref{app:monotone}.

\begin{prop} \label{prop:monotone-greedy-cost}
	$$\Ex_{i, \theta}[f_i(A(i, \theta))] \leq {1 \over k} \optcr.$$
\end{prop}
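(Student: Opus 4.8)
The plan is to identify the inner expectation over $\theta$ with a Lov\'asz-extension value and then invoke monotonicity. Fix a terminal $i$ and consider $\Ex_\theta[f_i(A(i,\theta))]$. I would regard $\tilde{\bx}_i$ as a vector on the whole ground set $V$ whose coordinate on $v$ is $x(v,i)$ if $v \in U$ and $0$ otherwise; this is consistent with the paper's restriction of $x$ to $U$. Then for every $\theta \in (0,1]$ the thresholded set $\tilde{\bx}_i^{\theta} = \{v : \tilde{x}(v,i) \geq \theta\}$ is exactly $A(i,\theta)$ (coordinates equal to $0$ never meet the threshold when $\theta > 0$), and the endpoint $\theta = 0$ has measure zero. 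Hence, directly from the definition of the Lov\'asz extension,
\[
\Ex_{\theta}[f_i(A(i,\theta))] \;=\; \int_0^1 f_i(\tilde{\bx}_i^{\theta})\, d\theta \;=\; \hat{f}_i(\tilde{\bx}_i).
\]

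Next I would use monotonicity. Since each $f_i$ is monotone, $\bx \le \by$ coordinatewise implies $\bx^\theta \subseteq \by^\theta$ and hence $f_i(\bx^\theta) \le f_i(\by^\theta)$ for every $\theta$; integrating over $\theta$ shows that $\hat{f}_i$ is monotone as well. The nonnegativity constraints $x(v,i) \ge 0$ of \CR give $\tilde{\bx}_i \le \bx_i$ coordinatewise, and therefore $\hat{f}_i(\tilde{\bx}_i) \le \hat{f}_i(\bx_i)$.

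Averaging over the uniformly random terminal $i \in \{1,\ldots,k\}$ then yields
\[
\Ex_{i,\theta}[f_i(A(i,\theta))] \;=\; \frac1k \sum_{i=1}^k \hat{f}_i(\tilde{\bx}_i) \;\le\; \frac1k \sum_{i=1}^k \hat{f}_i(\bx_i) \;=\; \frac1k \optcr,
\]
which is the claimed bound. I do not expect a genuine obstacle here; the only points needing a little care are the bookkeeping that matches $\Ex_\theta[f_i(A(i,\theta))]$ with $\hat f_i(\tilde{\bx}_i)$ (handling the restriction to $U$ and the measure-zero endpoint consistently) and the observation that monotonicity of $f_i$ lifts to monotonicity of its Lov\'asz extension via the integral formula.
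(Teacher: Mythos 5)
Your proposal is correct and is essentially the paper's argument: the paper bounds $f_i(A(i,\theta))\le f_i(\{v\in V: x(v,i)\ge\theta\})$ pointwise in $\theta$ by monotonicity and integrates, which is exactly your "monotonicity of the Lov\'asz extension applied to $\tilde{\bx}_i\le\bx_i$" step, followed by the same averaging over the uniformly random terminal $i$. The only difference is bookkeeping: you name the intermediate quantity $\hat{f}_i(\tilde{\bx}_i)$ explicitly (and note the measure-zero endpoint), whereas the paper works directly with the integrand.
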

\begin{prop} \label{prop:monotone-greedy-size}
	$$\Ex_{i, \theta}[|A(i, \theta)|] = {1 \over k} |U|.$$
\end{prop}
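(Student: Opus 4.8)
The plan is to compute the expectation directly, reducing everything to the assignment constraint $\sum_{i=1}^k x(v,i) = 1$ of \CR together with linearity of expectation. First I would write the cardinality as a sum of indicators over the unassigned vertices, $|A(i,\theta)| = \sum_{v \in U} \mathbf{1}\!\left[\tilde{x}(v,i) \geq \theta\right]$, and apply linearity of expectation to obtain
\[
\Ex_{i,\theta}\bigl[|A(i,\theta)|\bigr] \;=\; \sum_{v \in U} \Pr_{i,\theta}\!\left[\tilde{x}(v,i) \geq \theta\right].
\]

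Next I would fix a vertex $v \in U$ and evaluate the inner probability by conditioning on the choice of $i$. Since $\theta$ is drawn uniformly from $[0,1]$ and $\tilde{x}(v,i) = x(v,i) \in [0,1]$, we have $\Pr_{\theta}\!\left[\tilde{x}(v,i) \geq \theta\right] = x(v,i)$. Averaging over the uniform choice of $i \in \{1,\ldots,k\}$ then gives
\[
\Pr_{i,\theta}\!\left[\tilde{x}(v,i) \geq \theta\right] \;=\; \frac{1}{k}\sum_{i=1}^k x(v,i) \;=\; \frac{1}{k},
\]
where the last equality is exactly the constraint of \CR, which is inherited by the restriction $\tilde{x}$ because $\tilde{x}(v,i) = x(v,i)$ for $v \in U$. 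Summing this over the $|U|$ choices of $v \in U$ yields $\Ex_{i,\theta}[|A(i,\theta)|] = |U|/k$, as claimed.

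There is no real obstacle here: the argument is just two applications of averaging plus feasibility of $x$. The only point worth a sentence of care is the boundary behavior of the threshold map $\bx \mapsto \bx^{\theta}$ (for instance when $x(v,i) = 0$ or $\theta = 0$), but these are measure-zero events for $\theta$ uniform on $[0,1]$, so the identity $\Pr_{\theta}[\tilde{x}(v,i) \geq \theta] = x(v,i)$ holds without qualification. I note that the companion bound, Proposition~\ref{prop:monotone-greedy-cost}, follows the same template of choosing $i$ and $\theta$ uniformly at random, except that there the nonlinearity of $f_i$ forces one to invoke convexity of the Lov\'asz extension in place of plain linearity, which is why that statement is an inequality while this one is an exact equality.
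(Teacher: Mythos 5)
Your proof is correct. You decompose $|A(i,\theta)|$ into indicator variables over the vertices of $U$, use $\Pr_{\theta}[\,\tilde{x}(v,i) \geq \theta\,] = x(v,i)$ for $\theta$ uniform on $[0,1]$, and then invoke the simplex constraint $\sum_{i=1}^k x(v,i) = 1$ after averaging over $i$; summing over $v \in U$ gives exactly $|U|/k$. The paper reaches the same conclusion by a different organization of the computation: it first splits off the average over $i$, and then proves $\sum_{i=1}^k \Ex_{\theta}[|A(i,\theta)|] = |U|$ by induction on $|U|$, peeling off one vertex $z$ at a time and showing via an explicit integral that removing $z$ decreases the sum by $\sum_{i=1}^k x(z,i) = 1$. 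The two arguments are mathematically equivalent — the paper's induction is just a vertex-by-vertex accounting of the same quantity your linearity-of-expectation step handles in one stroke — but your version is shorter and makes the role of the feasibility constraint more transparent, while the paper's inductive bookkeeping avoids any discussion of probabilities of threshold events (and hence of the measure-zero boundary issue you correctly note is harmless). Your closing remark about why Proposition~\ref{prop:monotone-greedy-cost} is only an inequality (monotonicity of $f_i$ and the definition of the Lov\'asz extension, versus exact linearity here) is also consistent with the paper's treatment.
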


\begin{proofof}{Lemma~\ref{lem:monotone-greedy-main}}
	Let $i$ be a terminal selected uniformly at random. Let $\theta$
	be a threshold selected uniformly at random from the interval $[0,
	1]$. It follows from Proposition~\ref{prop:monotone-greedy-cost}
	and Proposition~\ref{prop:monotone-greedy-size} that the ratio
	$\Ex[f_i(A(i, \theta))] / \Ex[|A(i, \theta)]$ is at most $\optcr /
	|U|$. By linearity of expectation,
		$$\Ex_{i, \theta}\Bigg[f_i(A(i, \theta)) - {\optcr \over |U|} \cdotp |A(i,
		\theta)|\Bigg] \leq 0$$
	and therefore there exists a terminal $i'$ and a threshold
	$\theta'$ for which the ratio $f_{i'}(A(i', \theta')) /
	|A(i', \theta')|$ is at most $\optcr / |U|$. Let $j'$ be the
	smallest index $j$ that satisfies $0 \leq j < \ell_{i'}$ and
	$\theta_{i', j} \geq \theta'$. Since $A(i', \theta_{i',
	j'}) = A(i', \theta')$, $(i', j')$ is the desired pair.
\end{proofof}

\noindent
In the remainder of this section, we consider a different algorithm
that achieves an $O(\log n)$-approximation for \monMCSA. We will use
this algorithm as a building block for submodular cost labeling
algorithms (see Section~\ref{sec:subml}).  The algorithm \KT is
derived from the work of Kleinberg and Tardos on metric labeling
\cite{KleinbergT02}.

\begin{algo}
\underline{\textbf{\KT}}
\\\> let $x$ be a solution to \CR
\\\> $S \leftarrow \emptyset$ \qquad \Comment{set of all assigned
vertices}
\\\> \Comment{set of vertices that are eventually assigned to $i$}
\\\> $A_i \leftarrow \emptyset$ for all $i$ $(1 \leq i \leq k)$
\\\> while $S \neq V$
\\\>\> pick $i \in \{1, 2, \cdots, k\}$ uniformly at random
\\\>\> pick $\theta \in [0, 1]$ uniformly at random
\\\>\> $A_i \leftarrow A_i \cup \left( \{v \;|\; x(v, i) \geq \theta\}
- S\right)$
\\\>\> $S \leftarrow S \cup A_i$
\\\> return $(A_1, \cdots, A_k)$
\end{algo}

\noindent
We prove the following theorem by building on some
useful properties that are shown in
\cite{KleinbergT02}; one of these is that the probability that
$v$ gets assigned to $i$ in the rounding is precisely $x(v,i)$. In
particular, this yields an optimal algorithm for modular functions.

\begin{theorem} \label{thm:monotone-kt-main}
  \KT achieves a randomized $O(\ln n)$-approximation for \MMCSA.
\end{theorem}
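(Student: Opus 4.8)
The plan is to charge the cost $\sum_i f_i(A_i)$ of the returned solution to the individual iterations of the while loop of \KT, to bound the expected cost added by a single iteration, and to separately bound the expected number of iterations $T$; combining these two quantities (carefully) will give the $O(\ln n)$ bound against $\optcr \le \opt$. Throughout, let $\bx$ be the solution to \CR used by the algorithm, so that $\sum_{i=1}^k \hat{f}_i(\bx_i) = \optcr$.

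\emph{Cost of one iteration.} Suppose an iteration starts with unassigned set $U = V \setminus S$, picks label $i$ and threshold $\theta$, and assigns $B = \{v \in U : x(v,i) \ge \theta\}$ to $i$. Since $f_i$ is submodular and non-negative, $f_i(A_i \cup B) - f_i(A_i) \le f_i(B) - f_i(A_i \cap B) \le f_i(B)$, so the total increase of the objective over the whole run is at most the sum over all iterations of $f_i(B)$. Averaging over the random choices of a single iteration, conditioned on $U$, gives $\Ex_{i,\theta}[f_i(B) \mid U] = \tfrac1k\sum_{i=1}^k \int_0^1 f_i(\{v \in U : x(v,i)\ge\theta\})\, d\theta = \tfrac1k \sum_{i=1}^k \hat{f}_i(\tilde{\bx}_i)$, where $\tilde{\bx}_i$ is $\bx_i$ with the coordinates of $V \setminus U$ zeroed out. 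Each $f_i$ is monotone, so $\hat{f}_i$ is monotone, and $\tilde{\bx}_i \le \bx_i$ coordinatewise; hence $\hat{f}_i(\tilde{\bx}_i) \le \hat{f}_i(\bx_i)$ and $\Ex_{i,\theta}[f_i(B) \mid U] \le \optcr/k$ for every $U$ that can arise.

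\emph{Number of iterations.} Conditioned on any history, a vertex $v$ unassigned at the start of an iteration is assigned during that iteration with probability exactly $\sum_{i=1}^k \tfrac1k \Pr_\theta[x(v,i)\ge\theta] = \tfrac1k\sum_{i=1}^k x(v,i) = \tfrac1k$, using the equality constraint of \CR. Thus $v$ is still unassigned after $t$ iterations with probability $(1-1/k)^t \le e^{-t/k}$, and a union bound over $v$ gives $\Pr[T > t] \le n\,e^{-t/k}$. Summing, $\Ex[T] = \sum_{t \ge 0}\Pr[T > t] \le \lceil k\ln n\rceil + \sum_{t > k\ln n} n\, e^{-t/k} = O(k\ln n)$.

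\emph{Combining.} Let $c_t$ be the cost added in iteration $t$ (with $c_t = 0$ if $T < t$) and let $\mathcal{H}_{t-1}$ be the history through iteration $t-1$. The event $\{T \ge t\}$ is determined by $\mathcal{H}_{t-1}$, so the first step gives $\Ex[c_t \mid \mathcal{H}_{t-1}] \le \tfrac{\optcr}{k}\,\mathbf{1}[T\ge t]$; taking expectations and summing over $t \ge 1$, $\Ex[\sum_i f_i(A_i)] \le \sum_{t\ge1}\Ex[c_t] \le \tfrac{\optcr}{k}\sum_{t\ge1}\Pr[T\ge t] = \tfrac{\optcr}{k}\,\Ex[T] = O(\ln n)\cdot\optcr \le O(\ln n)\cdot\opt$, which holds in expectation; a high-probability version follows from Markov's inequality and independent repetitions. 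I expect the only genuinely delicate point to be this last step --- decoupling the random stopping time $T$ from the per-iteration costs, so that one may legitimately combine the expected cost per iteration with the expected number of iterations --- which is precisely why the conditioning on $\mathcal{H}_{t-1}$ is needed; the other ingredients (submodularity with non-negativity for the per-iteration increment, monotonicity of the Lov\'asz extension, and the union bound for the iteration count) are routine.
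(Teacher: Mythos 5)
Your proof is correct and follows essentially the same route as the paper's (sketched) argument: monotonicity bounds each iteration's expected increment by $\optcr/k$ via the Lov\'asz-extension objective, each unassigned vertex is covered with probability $1/k$ per iteration so the loop runs for $O(k\ln n)$ iterations, and the two bounds combine to give $O(\ln n)\cdot\optcr$. The only difference is that you make the combining step rigorous by conditioning on the history to decouple the random stopping time from the per-iteration costs, a point the paper's proof sketch passes over by citing the high-probability termination bound from Kleinberg--Tardos.
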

\begin{proofsketch}
	It is shown in \cite{KleinbergT02}, and not difficult to see, that
	the rounding terminates in $O(k \log n)$ iterations of the while
	loop with high probability. In each iteration the algorithm does a
	$\theta$-rounding on an index chosen uniformly at random. Let $i$
	be the random index and $A(i, \theta) = \{ v \mid x(v,i) \ge
	\theta \}$. Then it is easy to see that $\Ex[f_i(A(i, \theta))] =
	\sum_{i=1}^k \frac{1}{k} \hat{f}_i(\bx_i) = \frac{1}{k} \optcr$.
	Since the functions are monotone, we have that $\Ex[f(A(i, \theta)
	- S)] \le \frac{1}{k} \optcr$.  Since the algorithm terminates in
	$O(k \log n)$ iterations, by linearity of expectation and the
	sub-additivity of the functions (since the functions are
	submodular and $f(\emptyset) = 0$), the total expected cost is
	$O(\log n) \optcr$.
\end{proofsketch}

%%%%%%%%%%%%
\section{Submodular Multiway Partition}
\label{sec:smp}

We consider \MCSA when the $f_i$ can be non-monotone.  We can show
that the integrality gap of \CR even for a special case of labeling on
hypergraphs can be $\Omega(n)$, and we suspect that the problem is
hard to approximate to a polynomial factor in $n$. We therefore focus
on \SubMPfull (\SubMP) and \SubMLfull (\SubML); these are broad
special cases which capture several problems that have been considered
previously.

The reduction of \SubMP to \MCSA requires one to work with the
non-terminals $V'$ as the ground set. It is however more convenient to
work with the terminals and non-terminals.  In particular, we work
with the relaxation below. Recall that $\bx_i = (x(v_1, i), \cdots,
x(v_n, i))$.

\begin{center}
\begin{boxedminipage}{0.55\linewidth}
\vspace{-0.2in}
\begin{align*}
& \textbf{\SubMPRel} &\\
\min \qquad  & \sum_{i = 1}^k \hat{f}(\bx_i) &\\
& \sum_{i = 1}^k x(v, i) &= 1 & \qquad \forall v\\
& x(s_i, i) &= 1 & \qquad \forall i\\
%& x(s_i, j) = 0 & \forall i, \forall j \neq i\\
& x(v, i) &\geq 0 & \qquad \forall v, i
\end{align*}
\end{boxedminipage}
\end{center}

\noindent
As before, a starting point for rounding the relaxation is the basic
$\theta$-rounding that preserves the objective function. Suppose we do
$\theta$-rounding for each $i$ to obtain sets $A(1,
\theta),\ldots,A(k, \theta)$ where each $A(i, \theta) \subseteq V$.
Here we could use independent random $\theta$ values for each $i$ or
the same $\theta$. Note that the constraints ensure that $s_i \in A(j,
\theta)$ iff $i=j$. However, the sets $A(1, \theta), \ldots, A(k,
\theta)$ may intersect and also may not cover the entire set $V$, in
which case we have to allocate the remaining elements in some fashion.
First we show how to take advantage of the case when $f$ is symmetric
and then discuss how to obtain results for hypergraph problems that
are special cases of \SubMP.

\smallskip
\noindent
\textbf{A $2(1-1/k)$-approximation for \SymSubMP:} A
$2(1-1/k)$-approximation for \SymSubMP is known via greedy
combinatorial algorithms \cite{Queyranne98,ZhaoNI05}. However, no
mathematical programming formulation for the problem has been
previously considered. Here we show that, on instances of \SymSubMP,
the integrality gap of \CR is $2(1-1/k)$ by using an uncrossing
property of symmetric functions.

The following lemma is standard and it has been used in previous work
\cite{SvitkinaT04}.

\begin{lemma} \label{lem:uncross}
	Let $f$ be a symmetric submodular set function over $V$ and let
	$A_1,\ldots, A_k$ be subsets of $V$. Then there exist sets
	$A'_1,\ldots,A'_k$ such that (i) $A'_i \subseteq A_i$ for $1\le i\le
	k$, (ii) $A'_1,\ldots,A'_k$ are mutually disjoint (iii) $\cup_i A'_i
	= \cup_i A_i$ and (iv) $\sum_i f(A'_i) \le \sum_i f(A_i)$.
	Moreover, given the $A_i$'s a collection of sets $A'_i$ satisfying
	the above properties can be found in polynomial time via a value
	oracle for $f$.
\end{lemma}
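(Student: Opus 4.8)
The plan is to uncross the sets $A_1,\ldots,A_k$ greedily, one overlapping pair at a time, using the submodularity and symmetry of $f$, and argue termination via a potential function. Concretely, suppose $A_i$ and $A_j$ ($i\neq j$) intersect. I would like to replace one of them by removing the intersection, say set $A_j \leftarrow A_j \setminus A_i$, leaving $A_i$ unchanged. The new pair is disjoint, the union $A_i \cup A_j$ is unchanged, and containment $A'_\ell \subseteq A_\ell$ holds. The only thing to check is that the cost does not increase: $f(A_i) + f(A_j \setminus A_i) \le f(A_i) + f(A_j)$, i.e.\ $f(A_j \setminus A_i) \le f(A_j)$. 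This is where symmetry enters: by submodularity applied to $A_j$ and $V \setminus A_i$,
\[
 f(A_j) + f(V\setminus A_i) \ \ge\ f(A_j \cup (V\setminus A_i)) + f(A_j \cap (V\setminus A_i)) \ =\ f(V\setminus(A_i\setminus A_j)) + f(A_j\setminus A_i),
\]
and applying $f(S) = f(V\setminus S)$ to the first term on each side turns this into $f(A_j) + f(A_i) \ge f(A_i\setminus A_j) + f(A_j\setminus A_i) \ge f(A_j\setminus A_i)$ using nonnegativity of $f(A_i\setminus A_j)$. So the replacement is cost non-increasing. (One should double-check the degenerate cases $A_i\subseteq A_j$ or $A_j\subseteq A_i$ are handled, but they are: the inequality above still holds.)

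The second step is to iterate this operation until the sets are pairwise disjoint, and to argue the process terminates in polynomial time. The natural potential is $\Phi = \sum_{i} |A_i|$, which strictly decreases whenever we remove a nonempty intersection $A_i\cap A_j$ from $A_j$; since $\Phi$ is a nonnegative integer bounded by $k|V|$ initially, at most $k|V|$ iterations occur, and each iteration is a trivial set computation plus one oracle comparison (actually the comparison is not even needed since we proved it always holds). When no pair intersects, properties (i)--(iv) are all in place: (iii) holds because each step preserves the union, (iv) because each step is cost non-increasing, (i) by transitivity of containment across steps, and (ii) by the termination condition.

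The main thing to get right is the symmetry manipulation in the cost bound — making sure the complement identities are applied to the correct sets and that no hidden assumption (like $f(\emptyset)=0$, which is not stated, only nonnegativity) is used. I would present that inequality carefully, as above, relying only on submodularity, symmetry, and $f \ge 0$. The rest is bookkeeping: everything else is routine once the single-pair uncrossing step is established. A minor alternative, if one prefers a cleaner termination argument, is to process pairs in a fixed order (e.g.\ always shrink the higher-indexed set), but the potential-function argument is simplest and suffices for the polynomial-time claim.
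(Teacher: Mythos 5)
There is a genuine gap in your single-pair step. You claim that the one-sided replacement $A_j \leftarrow A_j \setminus A_i$ (always shrinking $A_j$, leaving $A_i$ untouched) is cost non-increasing, i.e.\ that $f(A_j\setminus A_i)\le f(A_j)$ always holds. Your derivation does not prove this: applying submodularity to $A_j$ and $V\setminus A_i$ and then symmetry gives exactly posi-modularity, $f(A_i)+f(A_j)\ge f(A_i\setminus A_j)+f(A_j\setminus A_i)$, and dropping the term $f(A_i\setminus A_j)\ge 0$ only yields $f(A_j\setminus A_i)\le f(A_i)+f(A_j)$, which bounds the new pair cost $f(A_i)+f(A_j\setminus A_i)$ by $2f(A_i)+f(A_j)$, not by $f(A_i)+f(A_j)$ as property (iv) requires. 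In fact the claim is false: for the (symmetric, submodular) cut function of the path on vertices $a,b,c$ with unit edges $ab$ and $bc$, take $A_i=\{a\}$ and $A_j=\{a,b\}$; then $f(A_j\setminus A_i)=f(\{b\})=2>1=f(A_j)$, so always shrinking the second set can strictly increase the cost. Accordingly, your parenthetical remark that the oracle comparison ``is not even needed'' is wrong --- the comparison is precisely what the algorithm needs.

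The correct use of the inequality you derived (which is the paper's route) is: posi-modularity forces at least one of $f(A_i\setminus A_j)\le f(A_i)$ or $f(A_j\setminus A_i)\le f(A_j)$ to hold, since if both failed the sum of the two strict inequalities would contradict $f(A_i)+f(A_j)\ge f(A_i\setminus A_j)+f(A_j\setminus A_i)$. So the uncrossing step must test, via the value oracle, whether $f(A_j\setminus A_i)\le f(A_j)$ and shrink $A_j$ if so, and otherwise shrink $A_i$ to $A_i\setminus A_j$ (in the example above it is the first set that must be shrunk, to $\emptyset$, which properties (i)--(iii) permit). With this fix the remainder of your argument is correct and matches the paper: each step preserves the union and containments, is cost non-increasing, and strictly decreases the potential $\sum_i|A_i|$ because the removed intersection is nonempty, so at most $k|V|$ iterations and polynomially many oracle calls suffice.
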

\begin{proof}
	Since $f$ is symmetric, it satisfies posi-modularity; that is,
	$f(X) + f(Y) \ge f(X-Y) + f(Y-X)$. From this we see that either
	$f(X) + f(Y-X)$ or $f(Y) + f(X-Y)$ is no larger than $f(X) +
	f(Y)$. This allows us to uncross $A_1,\ldots,A_k$ as follows. If
	the $A_i$'s are mutually disjoint then we can set $A'_i = A_i$ for
	each $i$ and they satisfy the desired properties. Otherwise, there
	exist distinct $i$ and $j$ such that $A_i \cap A_j \neq
	\emptyset$. We can replace $A_i$ and $A_j$ with $A_i$ and $A_j -
	A_i$ if $f(A_i) + f(A_j-A_i) \le f(A_i)+f(A_j)$; otherwise, we
	replace them by $A_i-A_j$ and $A_j$. We repeat this process
	to get the desired sets.
\end{proof}

\begin{theorem}
	The integrality gap of \CR for \SymSubMP is $\le 2(1-1/k)$.
\end{theorem}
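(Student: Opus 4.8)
The plan is to round an optimal fractional solution of \SubMPRel (equivalent to \CR on \SymSubMP instances) by $\theta$-rounding, using the uncrossing Lemma~\ref{lem:uncross} to restore disjointness and a ``drop the most expensive bin'' step to gain the factor $1-1/k$. Fix an optimal fractional solution $x$, write $\optcr=\sum_{i=1}^k\hat f(\bx_i)$, and let $\ell$ be an index maximizing $\hat f(\bx_\ell)$, so $\hat f(\bx_\ell)\ge\optcr/k$. For $\theta\in(0,1]$ let $A(i,\theta)=\{v:x(v,i)\ge\theta\}$. The constraints $x(s_i,i)=1$ together with $\sum_j x(s_i,j)=1$ force $s_i\in A(i,\theta)$ and $s_j\notin A(i,\theta)$ for every $j\ne i$, and by the definition of the Lov\'asz extension $\Ex_\theta[f(A(i,\theta))]=\hat f(\bx_i)$ when $\theta$ is uniform on $(0,1]$.

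For a fixed $\theta$, apply Lemma~\ref{lem:uncross} to the $k-1$ sets $\{A(i,\theta):i\ne\ell\}$ to obtain mutually disjoint sets $A'_i\subseteq A(i,\theta)$ ($i\ne\ell$) with $\bigcup_{i\ne\ell}A'_i=\bigcup_{i\ne\ell}A(i,\theta)$ and $\sum_{i\ne\ell}f(A'_i)\le\sum_{i\ne\ell}f(A(i,\theta))$. Since the uncrossing only removes elements from sets and no $A(j,\theta)$ contains $s_i$ for $j\ne i$, each $A'_i$ still contains $s_i$ and no other terminal. Define $A_\ell:=V\setminus\bigcup_{i\ne\ell}A'_i$; it contains $s_\ell$ and nothing of the other bins, so $(A_1,\ldots,A_k)$ with $A_i=A'_i$ for $i\ne\ell$ is a feasible partition. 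Using symmetry of $f$ and then subadditivity (valid as $f\ge 0$, $f(\emptyset)=0$), $f(A_\ell)=f\bigl(V\setminus A_\ell\bigr)=f\bigl(\bigcup_{i\ne\ell}A(i,\theta)\bigr)\le\sum_{i\ne\ell}f(A(i,\theta))$, hence $\sum_{i=1}^k f(A_i)\le 2\sum_{i\ne\ell}f(A(i,\theta))$.

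Taking expectation over $\theta$ uniform on $(0,1]$ gives $\Ex_\theta[\sum_{i=1}^k f(A_i)]\le 2\sum_{i\ne\ell}\hat f(\bx_i)=2(\optcr-\hat f(\bx_\ell))\le 2(1-1/k)\optcr$, so some threshold $\theta$ yields an integral partition of cost at most $2(1-1/k)\optcr$; since $\sum_{i\ne\ell}f(A(i,\theta))$ is piecewise constant in $\theta$ with polynomially many breakpoints, such a $\theta$ can even be found efficiently. This gives the claimed integrality gap bound.

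The only point I expect to need care is verifying that the uncrossing of Lemma~\ref{lem:uncross} keeps each terminal $s_i$ inside $A'_i$ and outside the other sets; as noted, this holds because each uncrossing step replaces a pair $(A_i,A_j)$ by $(A_i,A_j\setminus A_i)$ or $(A_i\setminus A_j,A_j)$, operations that only delete elements, while initially $s_i$ lies only in $A(i,\theta)$ among the sets being uncrossed. Everything else is bookkeeping, and I do not anticipate a genuine obstacle.
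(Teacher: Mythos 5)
Your proposal is correct and matches the paper's own argument (given in the main body with the refined $2(1-1/k)$ version in the appendix): $\theta$-rounding on all but the most expensive index, uncrossing via Lemma~\ref{lem:uncross}, assigning the leftover set to the dropped terminal, and bounding its cost by symmetry plus subadditivity. No substantive differences to note.
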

\begin{proof}
	Let $\bx$ be an optimal solution to \CR for a given instance of
	\SymSubMP. Let $A(1, \theta),\ldots, A(k, \theta)$ be sets
	obtained by applying $\theta$-rounding to each $i$. By the
	property of $\theta$-rounding, we observe that $\sum_i \Ex[f(A(i,
	\theta))] = \sum_i \hat{f}(\bx_i) = \optcr$.  Note that $s_i$
	belongs only to $A(i, \theta)$.  We now apply
	Lemma~\ref{lem:uncross} to $A(1, \theta),\ldots,A(k, \theta)$ to
	obtain $A'_1,\ldots,A'_k$. We have $\sum_i f(A'_i) \le \sum_i
	f(A(i, \theta))$. Let $V' = V - \cup_i A'_i$.  By symmetry of $f$,
	$f(V') = f(\cup_i A'_i)$ and, since $f$ is sub-additive, $f(V') =
	f(\cup_i A'_i) \le \sum_i f(A'_i) \le \sum_i f(A(i, \theta))$. We
	can allocate $V'$ to any index $i$ and the total cost of the
	allocation is upper bounded by $f(V') + \sum_i f(A'_i) \le 2\sum_i
	f(A(i, \theta))$. Thus the expected cost of the allocation is at
	most $2\optcr$. The allocation is feasible since $s_i$ belongs
	only to $A(i, \theta)$ and hence to $A'_i$.  One can refine this
	argument slightly to obtain a $2(1-1/k)$ bound; we give the
	details in Appendix~\ref{app:smp}.
\end{proof}

\smallskip
\noindent
In a previous version of this paper, we raised the following question.

\smallskip
\noindent
\textbf{Question.} Is the integrality gap of \CR for \SymSubMP at most $1.5$?

\noindent
As we already noted, we have shown in subsequent work
\cite{ChekuriE11b} that the integrality gap is at most $1.5 - 1/k$.

\medskip
\noindent
\textbf{Rounding for \AHMC and \HMP:} Calinescu \etal
\cite{CalinescuKR98} gave a new geometric relaxation for \MC, and a
rounding procedure that gave a $(1.5-1/k)$-approximation; the
integrality gap was subsequently improved to a bound of $1.3438
-\eps_k$ in \cite{KargerKSTY99}, while the best known lower bound is
$8/(7+1/k - 1)$ \cite{FreundK00}. Calinescu \etal \cite{CalinescuKR98}
derived their relaxation as a way to improve the integrality gap of
$2(1-1/k)$ for a natural distance based linear programming relaxation;
in fact, it often goes unnoticed that \cite{CalinescuKR98} shows the
equivalence of their geometric relaxation to that of another
relaxation obtained by adding valid strengthening constraints to the
distance based relaxation. Interestingly, when we specialize \MCSA to
\MC, \CR becomes the geometric relaxation of \cite{CalinescuKR98}! The
rounding procedure in \cite{CalinescuKR98} can be naturally extended
to rounding \CR for \SubMP and we describe it below.

\begin{algo}
\underline{\textbf{\CKR}}
\\\> let $x$ be a solution to \SubMPRel
\\\> pick a random permutation $\pi$ of $\{1, 2,
\cdots, k\}$
\\\> pick $\theta \in [0, 1)$ uniformly at random
\\\> $S \leftarrow \emptyset$ \qquad \Comment{set of all assigned
vertices}
\\\> for $i = 1$ to $k - 1$
\\\>\> $A_{\pi(i)} \leftarrow \left(\{v \;|\; x(v, \pi(i)) \geq
\theta\} - S\right)$
\\\>\> $S \leftarrow S \cup A_{\pi(i)}$
\\\> $A_{\pi(k)} \leftarrow V - S$
\\\> return $(A_1, \cdots, A_k)$
\end{algo}

\noindent
\CKR uses the same $\theta$ for all $i$ and a random permutation, both of
which are crucially used in the $1.5$-approximation analysis for \MC.
In this paper we investigate \CKR and other roundings for \AHMC and \HMP.

\noindent
Although \AHMC and \HMP appear similar, their objective functions are
different. The objective of \AHMC is to remove a minimum weight subset
of hyperedges such that the terminals are separated, whereas the
objective of \HMP is to minimize $\sum_e w(e) p(e)$, where $p(e)$ is
the number of non-trivial parts that $e$ is partitioned into (a part
is non-trivial if some vertex of $e$ is in that part but not all of
$e$). For graphs we have that either $p(e) = 0$ or $p(e) = 2$, and
therefore the two problems \AHMC and \HMP are equivalent; this is the
reason why one can view \MC as a partition problem as well. However,
when the hyperedges can have size larger than $2$, the objective
function values are not related to each other (it is easy to see that
the \HMP objective is always larger).

\HMP and \AHMC have been studied for their theoretical interest and
their applications. It is easy to see from its definition that \HMP is
a special case of \SymSubMP. It has been observed by a simple yet nice
reduction \cite{OkumotoFN10} that \AHMC is a special case of \SubMP.
In addition, it has been observed that \AHMC is
approximation-equivalent to the {\em node}-weighted multiway cut
problem in graphs (\nodeMC) \cite{GargVY04}.

We show that \CKR gives a $(1.5-1/k)$-approximation to \HMP and
a tight $H_\Delta$-approximation for \AHMC with maximum hyperedge
size $\Delta$. Note that when $\Delta=2$, $H_\Delta = 1.5$ and when
$\Delta=3$, $H_\Delta \simeq 1.833$. For $\Delta > 3$,
\CKR gives a worse than $2$ bound while we give an alternate rounding which
gives a $2(1-1/k)$-approximation. Our analysis of \CKR differs from
that in \cite{CalinescuKR98} since we cannot use the ``edge
alignment'' properties of the fractional solution that hold for graphs
and were exploited in \cite{CalinescuKR98}; our analysis for \AHMC is
inspired by the proof given by Williamson and Shmoys \cite{ShmoysW10}.

It is natural to wonder whether \CKR is crucial to obtaining a bound
that is better than $2$ for these problems, and in particular whether
it gives a $1.5$-approximation for \SymSubMP.  We show that a
$1.5-1/k$-approximation for \HMP (and hence \MC also) can be obtained
via a different algorithm as well; in particular, the crucial
ingredient in \CKR for \MC when viewed as a special case of \HMP is
the correlation provided by the use of the same $\theta$ for all $i$;
one can replace the random permutation by the uncrossing scheme in
Lemma~\ref{lem:uncross}. We describe this algorithm in the next
section. However, for \AHMC, the random permutation is important in
proving the $H_\Delta$-bound.

\subsection{A $1.5$-approximation for \HMPfull} \label{subsec:hmp}
We start by understanding the objective function of \SubMPRel in the
context of \HMP. Let $\bx$ be a feasible fractional solution, and let
$\bx_i = (x(v_1, i), \cdots, x(v_n, i))$ be the allocation to $i$.
Recall that $f$ here is the hypergraph cut function.  What is
$\sum_{i=1}^n \hat{f}(\bx_i)$?  For each terminal $i$ and each
hyperedge $e$, let $I(e, i) = [\min_{v \in e} x(v, i), \max_{v \in e}
x(v, i)]$. Let $d(e, i)$ denote the length of $I(e, i)$, and let $d(e)
= \sum_{i = 1}^k d(e, i)$. Note that $d(e) \in [0, |e|]$. 

\begin{lemma} \label{lem:sm-dist}
	$\sum_{i = 1}^k \hat{f}(\bx_i) = \sum_e w(e)d(e)$.
\end{lemma}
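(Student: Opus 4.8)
The plan is to compute $\hat{f}(\bx_i)$ for a single terminal $i$ directly from the definition of the Lov\'asz extension and then sum over $i$. Recall $\hat{f}(\bx_i) = \int_0^1 f(\bx_i^\theta)\, d\theta$, where $\bx_i^\theta$ is the indicator vector of the set $A(i,\theta) = \{v : x(v,i) \ge \theta\}$. Since $f$ is the hypergraph cut function $f(A) = \sum_{e \in \delta(A)} w(e)$ (with the scaling convention in force for \HMP; here I should be careful whether the factor of $\tfrac12$ is present — for the hypergraph cut/partition function as used for \HMP one counts each hyperedge once when it is split, so I will take $f(A) = \sum_{e : e \cap A \ne \emptyset,\ e \not\subseteq A} w(e)$), linearity lets me write $\hat{f}(\bx_i) = \sum_e w(e) \int_0^1 \mathbf{1}[e \text{ is cut by } A(i,\theta)]\, d\theta$.

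**Next I would** identify, for a fixed hyperedge $e$ and terminal $i$, the set of thresholds $\theta$ for which $A(i,\theta)$ cuts $e$. The hyperedge $e$ is cut by $A(i,\theta)$ precisely when some vertex of $e$ is in $A(i,\theta)$ and some vertex of $e$ is not, i.e.\ when $\max_{v \in e} x(v,i) \ge \theta$ but $\min_{v \in e} x(v,i) < \theta$. This is exactly the condition $\theta \in (\min_{v\in e} x(v,i),\ \max_{v\in e} x(v,i)] = I(e,i)$ (up to the endpoint, which has measure zero). Hence $\int_0^1 \mathbf{1}[e \text{ cut by } A(i,\theta)]\, d\theta = d(e,i)$, the length of $I(e,i)$. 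Summing, $\hat{f}(\bx_i) = \sum_e w(e)\, d(e,i)$.

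**Finally**, summing over all terminals $i$ and swapping the order of summation gives $\sum_{i=1}^k \hat{f}(\bx_i) = \sum_{i=1}^k \sum_e w(e)\, d(e,i) = \sum_e w(e) \sum_{i=1}^k d(e,i) = \sum_e w(e)\, d(e)$, which is the claimed identity.

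**The only real subtlety** — not a deep obstacle but the one point needing care — is pinning down the exact normalization of the cut function $f$ so that "$e$ contributes once per terminal that partially cuts it" matches the \HMP objective $\sum_e w(e) p(e)$ (where $p(e)$ counts non-trivial parts), and checking that the half-open-versus-closed interval ambiguity at $\theta \in \{0\} \cup \{x(v,i)\}$ contributes nothing since it is a finite set of measure zero. Once the convention is fixed, the proof is the short measure-of-an-interval computation above; I expect the author's proof to be essentially this.
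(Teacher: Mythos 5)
Your proof is correct and is essentially identical to the paper's: both compute, for each hyperedge $e$ and terminal $i$, the set of thresholds $\theta$ for which $e \in \delta(A(i,\theta))$, observe it is exactly the interval $(\min_{v\in e} x(v,i),\ \max_{v\in e} x(v,i)]$ of measure $d(e,i)$, and sum over $e$ and $i$. Your normalization worry is resolved the way you guessed: for \HMP the paper uses the unscaled hypergraph cut function (each hyperedge counted once per part it crosses), so no factor of $\tfrac12$ appears.
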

\begin{proof}
	Consider a hyperedge $e$.  Let $A(i, \theta)$ be the set whose
	characteristic vector is $\bx_i^{\theta}$. For each $\theta \in
	[0, \min_{v \in e} x(v, i)]$, the set $A(i, \theta)$ contains all
	the vertices of $e$, and thus $e \notin \delta(A(i, \theta))$.
	For each $\theta \in (\min_{v \in e} x(v, i), \max_{v \in e} x(v,
	i)]$, the set $A(i, \theta)$ contains at least one vertex of $e$
	but not all of the vertices of $e$, and thus $e \in \delta(A(i,
	\theta))$.  Finally, for each $\theta \in (\max_{v \in e} x(v, i),
	1]$, the set $A(i, \theta)$ does not contain any vertex of $e$,
	and thus $e \notin \delta(A(i, \theta))$. Therefore the
	contribution of $e$ to $\hat{f}(\bx_i)$ is equal to $(\max_{v
	\in e} x(v, i) - \min_{v \in e} x(v, i)) w(e) = d(e, i) w(e)$.
\end{proof}

\noindent
A crucial technical lemma that we need is the following which states
that the contribution of any $i$ to $d(e)$ is at most $d(e)/2$.

\begin{lemma} \label{lem:interval-distance}
	For any $i$, $d(e, i) \leq d(e) / 2$.
\end{lemma}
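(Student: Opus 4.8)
The plan is to exploit the partition (sum-to-one) constraint at two carefully chosen vertices of $e$. Fix the label $i$. The interval $I(e,i)$ has endpoints attained by actual vertices of $e$, so pick $u,w\in e$ with $x(u,i)=\min_{v\in e}x(v,i)$ and $x(w,i)=\max_{v\in e}x(v,i)$; then $x(w,i)-x(u,i)=d(e,i)$. The idea is that a large spread of label $i$ across $e$ forces the remaining labels, collectively, to have at least as large a spread across $\{u,w\}$, and hence across $e$.

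Concretely, from the constraints $\sum_{j=1}^k x(u,j)=1$ and $\sum_{j=1}^k x(w,j)=1$ we get $\sum_{j=1}^k\bigl(x(w,j)-x(u,j)\bigr)=0$, and isolating the $j=i$ term gives $\sum_{j\neq i}\bigl(x(u,j)-x(w,j)\bigr)=x(w,i)-x(u,i)=d(e,i)$; in particular $\sum_{j\neq i}\lvert x(u,j)-x(w,j)\rvert\ge d(e,i)$. On the other hand, for every $j\neq i$ both $x(u,j)$ and $x(w,j)$ lie in $I(e,j)$ (since $u,w\in e$), which has length $d(e,j)$, so $\lvert x(u,j)-x(w,j)\rvert\le d(e,j)$. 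Chaining the two inequalities yields $d(e,i)\le\sum_{j\neq i}d(e,j)=d(e)-d(e,i)$, i.e.\ $d(e,i)\le d(e)/2$.

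I do not expect any real obstacle: the only nontrivial step is the choice of the two extreme vertices $u,w$ for label $i$ and the observation that the equality constraints at $u$ and $w$ transfer label $i$'s spread onto the other labels; everything else is interval containment. Note the argument only uses feasibility of $\bx$ for \SubMPRel (indeed just the assignment constraints $\sum_j x(v,j)=1$) and nothing about the hypergraph cut function, so the same bound holds for the $d(e,i)$ associated with any feasible fractional assignment.
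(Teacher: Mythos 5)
Your argument is correct and is essentially the paper's own proof: both fix the two vertices of $e$ attaining $\min$ and $\max$ of $x(\cdot,i)$, use the constraints $\sum_j x(v,j)=1$ at those two vertices to express $d(e,i)$ as $\sum_{j\neq i}\bigl(x(u,j)-x(w,j)\bigr)$, and then bound each term by the interval length $d(e,j)$ to get $d(e,i)\le d(e)-d(e,i)$. The detour through absolute values is a harmless cosmetic difference.
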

\begin{proof}
	Let $u = \mathrm{argmax}_{v \in e} x(v, i)$, and $w =
	\mathrm{argmin}_{v \in e} x(v, i)$. We have
	\begin{eqnarray*}
		d(e, i) &=& x(u, i) - x(w, i)\\
		&=& \left(1 - \sum_{j \neq i} x(u, j)\right) - \left(1 -
		\sum_{j \neq i} x(w, j)\right)\\
		&=& \sum_{j \neq i} (x(w, j) - x(u, j))\\
		&\leq& \sum_{j \neq i} \left(\max_{v \in e} x(v, j) - \min_{v
		\in e} x(v, j)\right)\\
		&=& \sum_{j \neq i} d(e, j)\\
		&=& d(e) - d(e, i)
	\end{eqnarray*}
	Therefore $d(e, i) \leq d(e) / 2$.
\end{proof}

\medskip
\noindent
The algorithm \SymMPR that we analyze is described below. We can prove
that \CKR gives the same bound; however, \SymMPR and its analysis are
perhaps more intuitive in the context of symmetric functions.  The
algorithm does $\theta$-rounding to obtain sets $A(1,
\theta),\ldots,A(k, \theta)$ and then uncrosses these sets to make
them disjoint without increasing the expected cost (see
Lemma~\ref{lem:uncross}).

\begin{algo}
\underline{\textbf{\SymMPR}}
\\\> let $x$ be a feasible solution to \SubMPRel
\\\> pick $\theta \in [0, 1]$ uniformly at random
\\\> $A(i, \theta) \leftarrow \{v \;|\; x(v, i) \geq \theta\}$ for
each $i$ $(1 \leq i \leq k)$
\\\> \Comment{uncross $A(1, \theta), \cdots, A(k, \theta)$}
\\\> $A'_i \leftarrow A(i, \theta)$ for each $i$ $(1 \leq i \leq k)$
\\\> while there exist $i \neq j$ such that $A'_i \cap A'_j \neq
\emptyset$
\\\>\> if $(f(A'_i) + f(A'_j - A'_i) \leq f(A'_i) + f(A'_j))$
\\\>\>\> $A'_j \leftarrow A'_j - A'_i$
\\\>\> else
\\\>\>\> $A'_i \leftarrow A'_i - A'_j$
\\\> return $(A'_1, \cdots, A'_{k - 1}, V - (A'_1 \cup \cdots
A'_{k - 1}))$
\end{algo}

\begin{theorem} \label{thm:hmp-main-thm}
	\SymMPR achieves an $1.5$-approximation for \HMP.
\end{theorem}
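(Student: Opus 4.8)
The plan is to bound the expected cost of the output of \SymMPR in terms of $\sum_e w(e)d(e)$, which equals $\optcr$ by Lemma~\ref{lem:sm-dist}. By Lemma~\ref{lem:uncross} the uncrossing step only decreases the cost, so it suffices to bound $\Ex\left[\sum_{i} f(A'_i)\right]$ plus the cost of dumping the leftover vertices $V \setminus (A'_1 \cup \cdots \cup A'_{k-1})$ into the last part. Since uncrossing preserves the union, $\cup_i A'_i = \cup_i A(i,\theta)$, so the leftover set is exactly $V \setminus \bigcup_i A(i,\theta)$, which by symmetry and subadditivity of the hypergraph cut function $f$ has $f\big(V \setminus \bigcup_i A(i,\theta)\big) = f\big(\bigcup_i A(i,\theta)\big) \le \sum_i f(A'_i)$. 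So the whole output costs at most $2\sum_i f(A'_i) \le 2\sum_i f(A(i,\theta))$ — but this only recovers the weak $2(1-1/k)$ bound, and the real work is to do better by charging more carefully, hyperedge by hyperedge.

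The key step is a per-hyperedge accounting. Fix a hyperedge $e$ with weight $w(e)$. I want to compute the expected contribution of $e$ to the total output cost. The contribution to $\sum_i f(A(i,\theta))$ from $e$ is $w(e)$ times the number of indices $i$ for which $e \in \delta(A(i,\theta))$, i.e. for which $\theta$ falls in the open-closed interval $I(e,i)$ of length $d(e,i)$; in expectation over $\theta$ this is $w(e)\sum_i d(e,i) = w(e)\,d(e)$, reproving Lemma~\ref{lem:sm-dist}. The uncrossing can only remove such cut contributions (moving to subsets $A'_i \subseteq A(i,\theta)$ can only split $e$ into fewer nontrivial parts — here I should double-check that uncrossing never \emph{increases} the number of parts of any single hyperedge; it decreases $\sum_i f(A'_i)$, and since $f$ is a nonnegative sum over hyperedges this is automatic in aggregate, though I may want the per-edge statement). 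The extra cost is the contribution of $e$ to $f$ of the leftover part $V \setminus \bigcup_i A(i,\theta)$: this is $w(e)$ iff $e$ is split by that set, i.e. iff some vertex of $e$ lies in $\bigcup_i A(i,\theta)$ and some vertex of $e$ does not. A vertex $v$ fails to lie in $\bigcup_i A(i,\theta)$ exactly when $x(v,i) < \theta$ for all $i$, which is impossible since $\sum_i x(v,i) = 1$ forces some $x(v,i) \ge 1/k > 0$... so for $\theta$ small this never happens; more precisely the leftover set is empty whenever $\theta \le \max_v \min\{\text{something}\}$. I need to compute $\Pr[e \text{ is cut by the leftover set}]$ as a function of $\theta$ and integrate.

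The clean way: for each $i$, $e \subseteq A(i,\theta)$ iff $\theta \le \min_{v\in e} x(v,i)$, call this value $a_i = \min_{v\in e}x(v,i)$; and $e \cap A(i,\theta) = \emptyset$ iff $\theta > \max_{v\in e}x(v,i) =: b_i$, with $b_i - a_i = d(e,i)$. The leftover set $R_\theta = V \setminus \bigcup_i A(i,\theta)$ contains all of $e$ iff $\theta > b_i$ for every $i$; it contains none of $e$ iff for every vertex $v\in e$ there is an $i$ with $x(v,i)\ge\theta$. It cuts $e$ precisely when neither holds. I expect that a short case analysis, combined with Lemma~\ref{lem:interval-distance} ($d(e,i) \le d(e)/2$, equivalently $a_i \ge b_i - d(e)/2$ and also $\sum_j d(e,j) = d(e)$), shows that the total expected contribution of $e$ — cut contribution after uncrossing, which is at most $w(e)d(e)$, \emph{plus} the leftover contribution — is at most $\tfrac{3}{2}w(e)d(e)$; summing over $e$ and using Lemma~\ref{lem:sm-dist} then gives expected cost at most $\tfrac32\optcr$. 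The main obstacle is precisely this final per-edge inequality: I must show that whenever the leftover set cuts $e$ (contributing $w(e)$), the ``wasted'' amount is compensated by the fact that the number of $i$'s cutting $e$ before/after uncrossing is correspondingly small, and that the worst case over all configurations of the $a_i,b_i$ subject to $\sum d(e,i)=d(e)$ and $d(e,i)\le d(e)/2$ is exactly the factor $3/2$. The interplay between the uncrossing (which I only control in aggregate via Lemma~\ref{lem:uncross}, not edge by edge) and this per-edge bound is the delicate point — I may need to argue that the aggregate bound $\sum_i f(A'_i) \le \sum_i f(A(i,\theta))$ can be combined with the leftover bound without re-examining individual edges, i.e. bound $\Ex[\text{output}] \le \Ex[\sum_i f(A(i,\theta))] + \Ex[f(R_\theta)]$ and bound each expectation separately, with $\Ex[f(R_\theta)] \le \tfrac12\optcr$ being the crux.
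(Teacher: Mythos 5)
Your overall decomposition is exactly the paper's: bound the output cost by $\sum_i f(A(i,\theta)) + f(R_\theta)$ where $R_\theta = V \setminus \bigcup_i A(i,\theta)$ (using Lemma~\ref{lem:uncross} in aggregate plus subadditivity of $f$ -- your worry about per-edge behavior of the uncrossing is indeed unnecessary), note $\Ex[\sum_i f(A(i,\theta))] = \optcr$ by Lemma~\ref{lem:sm-dist}, and reduce everything to showing $\Ex[f(R_\theta)] \le \tfrac12\optcr$. But that last bound, which you yourself identify as ``the crux,'' is precisely what you do not prove: you only assert that ``a short case analysis'' combined with Lemma~\ref{lem:interval-distance} should give a per-edge factor of $3/2$, and you even suggest one might need a worst-case optimization over all configurations of the $a_i,b_i$. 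That is a genuine gap, and the speculative framing (bounding cut-plus-leftover contribution per edge by $\tfrac32 w(e)d(e)$ over all configurations) is more complicated than what is actually true and needed.

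The missing argument is a clean per-edge observation (the paper's Lemma~\ref{lem:leftover-set}): let $i^*$ maximize $b_i = \max_{v\in e} x(v,i)$. The set $R_\theta$ can cut $e$ only if (1) $\theta \ge a_i = \min_{v\in e}x(v,i)$ for \emph{every} $i$ (otherwise all of $e$ lies inside some $A(i,\theta)$ and $R_\theta$ misses $e$ entirely), and (2) $\theta \in I(e,\ell)$ for \emph{some} $\ell$ (otherwise each $A(i,\theta)$ either contains all of $e$ or none of it, so $e$ is either fully covered or fully inside $R_\theta$, and in neither case is it cut). Conditions (1) and (2) together force $a_{i^*} \le \theta \le b_{i^*}$, so $\Pr[e \in \delta(R_\theta)] \le d(e,i^*) \le d(e)/2$ by Lemma~\ref{lem:interval-distance}; summing $w(e)$ times this over $e$ gives $\Ex[f(R_\theta)] \le \tfrac12\sum_e w(e)d(e) = \tfrac12\optcr$, completing the proof. (Separately, your parenthetical claim that every vertex always lies in $\bigcup_i A(i,\theta)$ because $\sum_i x(v,i)=1$ forces some $x(v,i)\ge 1/k$ is wrong as stated -- it only holds for $\theta \le 1/k$ -- though this slip does not affect the structure of your argument.)
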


\noindent
We remark that we can change the algorithm and the analysis slightly
to achieve a $(1.5 - 1/k)$-approximation; we give the details in
Appendix~\ref{app:hmp}.

\begin{lemma} \label{lem:leftover-set}
	Let $i^*$ be the index such that the interval $I(e, i^*)$ has the
	rightmost ending point among the intervals $I(e, i)$. More
	precisely, $I(e, i^*)$ is an interval such that $\max_{v \in e}
	x(v, i^*) = \max_i \max_{v \in e} x(v, i)$; if there are several
	such intervals, we choose one arbitrarily. Let $Z_e$ be an
	indicator random variable equal to $1$ iff $e \in \delta(V - (A(1,
	\theta) \cup \cdots \cup A(k, \theta)))$. Then $\Ex[Z_e] \leq d(e,
	i^*)$.
\end{lemma}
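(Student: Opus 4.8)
The plan is to show that the event $Z_e = 1$ can occur only when $\theta$ lies in a particular subinterval of $[0,1]$ whose length is exactly $d(e, i^*)$; since $\theta$ is chosen uniformly from $[0,1]$ in \SymMPR, this immediately yields $\Ex[Z_e] = \Pr[Z_e = 1] \le d(e,i^*)$. To set up, let $b = \max_{v \in e} x(v, i^*)$, which by the choice of $i^*$ equals $\max_i \max_{v\in e} x(v,i)$, and let $a = \min_{v \in e} x(v, i^*)$, so that $d(e,i^*) = b - a$ and $I(e,i^*) = [a,b]$. Write $L = V - (A(1,\theta)\cup\cdots\cup A(k,\theta))$ for the leftover set; a vertex $v$ lies in $L$ iff $x(v,i) < \theta$ for every $i$, and $e \in \delta(L)$ iff $e$ meets both $L$ and $V\setminus L$.

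The first step is to observe that $Z_e = 1$ forces $\theta \le b$: if $\theta > b$, then every $v \in e$ has $x(v,i) \le b < \theta$ for all $i$, so no vertex of $e$ lies in any $A(i,\theta)$, hence $e \subseteq L$ and $e \notin \delta(L)$. The second step is to observe that $Z_e = 1$ forces $\theta > a$: if $\theta \le a$, then every $v\in e$ has $x(v, i^*) \ge a \ge \theta$, so $v \in A(i^*,\theta)$ and therefore $v \notin L$; thus $L \cap e = \emptyset$ and again $e \notin \delta(L)$. Combining the two, $Z_e = 1$ implies $\theta \in (a, b]$, an interval of length $b - a = d(e,i^*)$, so $\Ex[Z_e] \le d(e,i^*)$.

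I do not expect any real obstacle here. The only points requiring care are the precise reading of "$e \in \delta(L)$" (the leftover set must contain a nonempty proper subset of $e$, so it suffices to rule out $e \subseteq L$ in one direction and $L\cap e = \emptyset$ in the other), and the fact that the hypothesis on $i^*$ — that $I(e,i^*)$ has the rightmost right endpoint — is exactly what makes the upper threshold equal $\max_{v\in e} x(v,i^*)$ rather than some larger value. Note that only membership in the union $\bigcup_i A(i,\theta)$ matters for this lemma, so no uncrossing or ordering considerations from the rest of \SymMPR enter the argument.
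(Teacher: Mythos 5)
Your proof is correct and follows essentially the same route as the paper: both arguments show that $Z_e = 1$ confines $\theta$ to (essentially) the interval $I(e,i^*)$, by ruling out $\theta$ above $\max_{v\in e} x(v,i^*)$ (all of $e$ would be leftover) and $\theta$ at or below $\min_{v\in e} x(v,i^*)$ (all of $e$ would lie in $A(i^*,\theta)$), and then using that $\theta$ is uniform on $[0,1]$. Your version is slightly streamlined in that it only invokes the lower-endpoint condition for $i^*$ rather than for every terminal, but this is the same argument.
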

\begin{proof}
	Note that $Z_e$ is equal to $1$ only if $(1)$ for any terminal
	$i$, $\theta$ is at least $\min_{v \in e} x(v, i)$ and $(2)$ there
	exists a terminal $\ell$ such that $\theta \in I(e, \ell)$. If
	there exists a terminal $i$ such that $\theta$ is smaller than
	$\min_{v \in e} x(v, i)$, all of the vertices of $e$ are in $A(i,
	\theta)$. If there does not exist a terminal $\ell$ such that
	$\theta \in I(e, \ell)$, either all of the vertices of $e$ are in
	$A(i, \theta)$ for some $i$ or all of the vertices of $e$ are in
	$V - (A(1, \theta) \cup \cdots \cup A(k, \theta))$.  Finally, we
	note that $(1)$ and $(2)$ imply that $\theta$ is in $I(e, i^*)$.
	By $(2)$, $\theta$ is at most $\max_{v \in e} x(v, i^*)$ and, by
	$(1)$, $\theta$ is at least $\min_{v \in e} x(v, i^*)$.
\end{proof}

\begin{proofof}{Theorem~\ref{thm:hmp-main-thm}}
	It follows from the sub-additivity of $f$ and
	Lemma~\ref{lem:uncross} that the cost of the partition returned by
	\SymMPR is at most
		$$\sum_{i = 1}^k f(A'_i) + f(V - (A'_1 \cup \cdots \cup A'_k))
		\leq \sum_{i = 1}^k f(A(i, \theta)) + f(V - (A(1, \theta) \cup
		\cdots \cup A(k, \theta)))$$
	Let $\optcr = \sum_{i = 1}^k \hat{f}_s(\bx_i)$.  By
	Lemma~\ref{lem:leftover-set} and
	Lemma~\ref{lem:interval-distance},
		$$\Ex[f(V - (A(1, \theta) \cup \cdots \cup A(k, \theta)))]
		\leq \sum_e {w(e)d(e) \over 2} = {\optcr \over 2}$$
	Finally, $\Ex[\sum_{i = 1}^k f(A(i, \theta))] = \optcr$, and
	therefore the expected cost of the allocation is at most $1.5
	\optcr$.
\end{proofof}

%%%%%%%%%%%%
\subsection{Algorithms for \AHMCfull} \label{subsec:hmc}

Now we consider \AHMC. For each hyperedge $e$, pick an arbitrary
representative node $r(e) \in e$. Define the function $f:2^V
\rightarrow \mathbb{R}_+$ as follows: for $A \subseteq V$, let $f(A) =
\sum_{e: r(e) \in A, e \not \subseteq A} w(e)$ be the weight of
hyperedges whose representatives are in $A$ and they cross $A$. It is
easy to verify that $f$ is asymmetric and submodular. \SubMP with this
function $f$ captures \AHMC \cite{OkumotoFN10}.

Let $\bx$ be a feasible fractional allocation and $\bx_i$ be the
allocation for $i$. For each hyperedge $e$ and each terminal $i$, let
$I(e, i) = [\min_{v \in e} x(v, i), \max_{v \in e} x(v, i)]$.  Let
$d(e, i) = x(r(e), i) - \min_{v \in e} x(v, i)$ and $d(e) = \sum_{i =
1}^k d(e, i)$.

\begin{lemma}\label{lem:amc-distances}
	$\sum_{i = 1}^k \hat{f}(\bx_i) = \sum_e w(e)d(e)$.
\end{lemma}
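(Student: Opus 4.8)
The plan is to mirror the proof of Lemma~\ref{lem:sm-dist}, exploiting the fact that $f$ is a nonnegative combination (over hyperedges) of simple ``cut-like'' indicator functions, so that by linearity of the integral defining the Lov\'asz extension it suffices to track the contribution of a single hyperedge $e$ to a single $\hat{f}(\bx_i)$, and then sum.

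Concretely, fix a terminal $i$ and recall that $\hat{f}(\bx_i) = \int_0^1 f(A(i,\theta))\,d\theta$, where $A(i,\theta) = \{v \;|\; x(v,i) \ge \theta\}$. By the definition of $f$, a hyperedge $e$ contributes $w(e)$ to $f(A(i,\theta))$ precisely when $r(e) \in A(i,\theta)$ and $e \not\subseteq A(i,\theta)$, and contributes $0$ otherwise. The first condition, $r(e)\in A(i,\theta)$, holds iff $\theta \le x(r(e),i)$. The second condition, $e \not\subseteq A(i,\theta)$, holds iff some vertex $v\in e$ has $x(v,i) < \theta$, i.e.\ iff $\theta > \min_{v\in e} x(v,i)$. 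Hence $e$ contributes $w(e)$ to $f(A(i,\theta))$ exactly for $\theta$ in the half-open interval $(\min_{v\in e} x(v,i),\, x(r(e),i)]$, whose length is $x(r(e),i) - \min_{v\in e} x(v,i) = d(e,i)$. (Here one should note $r(e)\in e$, so $x(r(e),i) \ge \min_{v\in e} x(v,i)$ and this interval is well-defined, possibly empty, in which case $d(e,i)=0$.) Therefore the contribution of $e$ to $\hat{f}(\bx_i)$ is $w(e)\, d(e,i)$.

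Summing this identity over all hyperedges $e$ gives $\hat{f}(\bx_i) = \sum_e w(e)\, d(e,i)$, and then summing over all terminals $i$ and using $d(e) = \sum_{i=1}^k d(e,i)$ yields $\sum_{i=1}^k \hat{f}(\bx_i) = \sum_e w(e)\, d(e)$, as claimed. There is no real obstacle here; the only points requiring a moment's care are that $f$ decomposes as a sum over hyperedges so that the integral can be taken termwise, and the (trivial) check that the interval $(\min_{v\in e} x(v,i), x(r(e),i)]$ is the correct contributing range — in contrast with Lemma~\ref{lem:sm-dist}, the relevant endpoint on the right is $x(r(e),i)$ rather than $\max_{v\in e} x(v,i)$, which is exactly the asymmetry in $f$ showing up in the distance parameter.
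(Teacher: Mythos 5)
Your proof is correct and matches the paper's argument: both identify, for each terminal $i$ and hyperedge $e$, the contributing range of $\theta$ as $(\min_{v \in e} x(v,i),\, x(r(e),i)]$ of length $d(e,i)$, and then sum over hyperedges and terminals by linearity. Your write-up is if anything slightly more explicit than the paper's in stating the two conditions ($r(e) \in A(i,\theta)$ and $e \not\subseteq A(i,\theta)$) and in handling the degenerate empty-interval case, but the approach is the same.
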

\begin{proof}
	Consider a hyperedge $e$. Let $A(i, \theta)$ be the set whose
	characteristic vector is $\bx_i^{\theta}$. For each $\theta \in
	[0, \min_{v \in e} x(v, i)]$, the set $A(i, \theta)$ contains all
	the vertices of $e$ and therefore $e \notin \delta(A(i, \theta))$.
	For each $\theta \in (\min_{v \in e} x(v, i), x(r(e), i)]$, the
	set $A(i, \theta)$ contains the representative $r(e)$ of $e$ and
	$e \in \delta(A(i, \theta))$. Finally, for each $\theta \in
	(x(r(e), i), 1]$, the set $A(i, \theta)$ does not contain the
	representative $r(e)$. Therefore the contribution of $e$ to
	$f(\bx_i)$ is equal to $(x(r(e), i) - \min_{v \in e} x(v, i))
	w(e) = d(e, i) w(e)$.
\end{proof}

\begin{algo}
\underline{\textbf{\HR}}
\\\> let $x$ be a solution to \SubMPRel
\\\> pick $\theta \in (1/2, 1]$ uniformly at random
\\\> for $i = 1$ to $k - 1$
\\\>\> $A(i, \theta) \leftarrow \{v \;|\; x(v, i) \geq \theta\}$
\\\> $U(\theta) \leftarrow V - (A(1, \theta) \cup \cdots \cup A(k - 1,
\theta))$
\\\> return $(A(1, \theta), \cdots, A(k - 1, \theta), U(\theta))$
\end{algo}

\noindent
The main goal of this section is to show that \HR is a
$2$-approximation, and \CKR is a $H_{\Delta}$-approximation for \AHMC,
where $\Delta$ is the maximum hyperedge size. We remark that we can
show that the integrality gap of \CR is at most $2(1 - 1/k)$ using a
connection between the distance LP for \nodeMC considered in
\cite{GargVY04} and \CR for \AHMC; we give the details in
Appendix~\ref{app:hmc-improved-gap}. 

\begin{lemma} \label{lem:interval-sizes}
	Let $e$ be any hyperedge, and let $z$ be any vertex in $e$. Let
	$R(z) = \{i \;|\; x(z, i) = \max_{v \in e} x(v, i)\}$. Then
	$\sum_{i \in R(z)} |I(e, i)| \leq d(e)$.
\end{lemma}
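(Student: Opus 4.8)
The plan is to bound $\sum_{i \in R(z)} |I(e,i)|$ by rewriting each interval length $|I(e,i)| = \max_{v\in e} x(v,i) - \min_{v\in e}x(v,i)$ in terms of the allocation of the two extremal vertices, and then to exploit the constraint $\sum_i x(v,i) = 1$ to trade the ``upper endpoint at $i$'' for a sum of endpoint gaps over the other terminals, exactly as in the proof of Lemma~\ref{lem:interval-distance}. The key difference here is that $d(e,i)$ for \AHMC is $x(r(e),i) - \min_{v\in e}x(v,i)$, which need not equal $|I(e,i)|$; equality holds only when $r(e)$ is a maximizer of $x(\cdot,i)$ over $e$. The point of restricting to $i \in R(z)$ is to get a handle on a \emph{common} vertex $z$ that achieves the maximum for every $i$ in the index set, so that the endpoints can be compared coherently across these terminals.

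First I would fix $i \in R(z)$ and let $w_i = \mathrm{argmin}_{v\in e} x(v,i)$, so that $|I(e,i)| = x(z,i) - x(w_i,i)$ since $z$ achieves the max for this $i$. Using $\sum_{j} x(z,j) = \sum_j x(w_i,j) = 1$, I would write
\begin{align*}
|I(e,i)| &= x(z,i) - x(w_i,i) = \sum_{j\neq i}\bigl(x(w_i,j) - x(z,j)\bigr).
\end{align*}
Now for each $j\neq i$, the term $x(w_i,j) - x(z,j)$ is at most $\max_{v\in e}x(v,j) - \min_{v\in e}x(v,j) = |I(e,j)|$ when $j\notin R(z)$ or more simply, I want to bound it by a ``directed'' distance; the natural candidate is $x(z,j) - $ something, but since we are summing over $i\in R(z)$, the cleaner route is to bound $x(w_i,j) - x(z,j) \le \max_{v\in e} x(v,j) - x(z,j)$, which is nonnegative, and then recognize this as the part of $|I(e,j)|$ lying above $x(z,j)$. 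Summing over all $i\in R(z)$ and swapping the order of summation should produce, for each $j$, a quantity bounded by $|R(z)\setminus\{j\}|$ copies of the gap above $x(z,j)$ plus, for the $i=j$ contribution, the gap $x(z,j) - \min_{v\in e} x(v,j)$; carefully accounting should collapse the whole sum into $\sum_j (\max_{v\in e}x(v,j) - \min_{v\in e}x(v,j)) \le \sum_j d(e,j)$ once we use that $x(r(e),j) \ge \min_{v\in e} x(v,j)$ and, crucially, that $\max_{v\in e}x(v,j) \le 1 - \sum_{\ell\neq j}\min_{v\in e}x(v,\ell)$ lets us re-express the max-gap in terms of $d(e,j)$'s.

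The main obstacle I anticipate is precisely the last linkage: $|I(e,j)|$ is governed by $\max_{v\in e}x(v,j)$, whereas $d(e,j)$ only records the gap up to $x(r(e),j)$, which may be strictly smaller. Overcoming this will require the same ``$\max$ is controlled by the complement of the $\min$s'' identity used in Lemma~\ref{lem:interval-distance}: namely $\max_{v\in e}x(v,j) = x(v^*,j) = 1 - \sum_{\ell\neq j}x(v^*,\ell) \le 1 - \sum_{\ell\neq j}\min_{v\in e}x(v,\ell)$, and then feeding this back so that the aggregated upper endpoints telescope against the aggregated lower endpoints. Once that identity is in place, summing $|I(e,i)|$ over $i\in R(z)$ and regrouping by the ``other'' index should leave exactly $\sum_{j=1}^k\bigl(x(r(e),j) - \min_{v\in e}x(v,j)\bigr) = d(e)$, possibly with slack, which is the claimed bound. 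I would double check the case $|R(z)| = 1$ separately as a sanity check, since then the statement reduces to $|I(e,i)| \le d(e)$, which is immediate because $x(z,i) = \max$ forces $d(e,i) = |I(e,i)|$ only if $r(e)$ is also a maximizer — so even this base case already exercises the $\max$-versus-$r(e)$ subtlety and confirms the identity is doing the real work.
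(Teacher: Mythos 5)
There is a genuine gap: your decomposition pivots on the wrong reference vertex, and the inequality you are counting on to close the argument is false. Since $d(e,j)=x(r(e),j)-\min_{v\in e}x(v,j)\le |I(e,j)|$ for every $j$, your claimed collapse $\sum_j\bigl(\max_{v\in e}x(v,j)-\min_{v\in e}x(v,j)\bigr)\le\sum_j d(e,j)$ goes in the wrong direction: with $e=\{u,z\}$, $u=r(e)$, $k=2$, $x(u,\cdot)=(1/2,1/2)$, $x(z,\cdot)=(1,0)$, the left side is $1$ while $d(e)=1/2$. (The identity you invoke does give, coordinatewise, $|I(e,j)|\le 1-\sum_{\ell}\min_{v\in e}x(v,\ell)=d(e)$, which is exactly Corollary~\ref{cor:interval-size}; but the lemma's content is that the \emph{sum} over all of $R(z)$ is bounded by a single $d(e)$, and that is what your route does not deliver.) Moreover, your relaxation $x(w_i,j)-x(z,j)\le\max_{v\in e}x(v,j)-x(z,j)$ followed by swapping the summations is already too lossy: it leaves a multiplicity $|R(z)|$ on each coordinate outside $R(z)$, and the resulting bound $|R(z)|\sum_{j\notin R(z)}\bigl(\max_{v\in e}x(v,j)-x(z,j)\bigr)$ can exceed $d(e)$. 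For instance, take $e=\{u,z\}$, $u=r(e)$, $k=3$, $x(z,\cdot)=(1/2,1/2,0)$, $x(u,\cdot)=(2/5,2/5,1/5)$: then $R(z)=\{1,2\}$, $d(e)=1/5$, the lemma holds with equality ($\sum_{i\in R(z)}|I(e,i)|=1/5$), but your intermediate bound equals $2/5$. So no accounting after that step can recover the claimed bound.

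The paper's proof avoids both problems by comparing $z$ to the representative $u=r(e)$ itself rather than to the per-coordinate minimizers $w_i$. If $z=u$ the claim is immediate. Otherwise, for $i\in R(z)$ one has $x(z,i)\ge x(u,i)$, so $R(z)\subseteq B(z):=\{i:\;x(z,i)\ge x(u,i)\}$, and $|I(e,i)|=x(z,i)-\min_{v\in e}x(v,i)=(x(z,i)-x(u,i))+d(e,i)$. Mass conservation $\sum_i x(z,i)=\sum_i x(u,i)=1$ gives $\sum_{i\in B(z)}(x(z,i)-x(u,i))=\sum_{i\in S(z)}(x(u,i)-x(z,i))\le\sum_{i\in S(z)}d(e,i)$, where $S(z)$ is the complement of $B(z)$ and the last step uses $x(z,i)\ge\min_{v\in e}x(v,i)$. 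Adding $\sum_{i\in B(z)}d(e,i)$ yields $\sum_{i\in B(z)}\bigl(x(z,i)-\min_{v\in e}x(v,i)\bigr)\le d(e)$, and restricting to $R(z)\subseteq B(z)$ gives the lemma. Your instinct that the $\max$-versus-$r(e)$ discrepancy is the crux is correct, but the fix is to charge only the coordinates where $z$ itself is a maximizer at full interval length and to route everything else through $x(u,\cdot)$, never through the full interval lengths $|I(e,j)|$ for $j\notin R(z)$.
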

\begin{proof}
	Let $u$ be the representative of $e$. If $z = u$, the lemma is
	immediate. Therefore we may assume that $z \neq u$. We partition
	$\{1, 2, \cdots, k\}$ into two sets: the set $S(z)$ consisting of
	all coordinates $i$ such that $x(z, i)$ is smaller than $x(u, i)$,
	and the set $B(z)$ consisting of all coordinates $i$ such that
	$x(z, i)$ is at least $x(u, i)$. Since $\sum_{i = 1}^k x(z, i)$
	and $\sum_{i = 1}^k x(u, i)$ are both equal to $1$, the total
	difference between $x(u, i)$ and $x(z, i)$ over all coordinates $i
	\in S(z)$ is equal to the total difference between $x(z, i)$ and
	$x(u, i)$ over all coordinates $i \in B(z)$. Therefore
	\begin{align*}
		\sum_{i \in B(z)}(x(z, i) - x(u, i)) &= \sum_{i \in S(z)}
		(x(u, i) - x(z, i))\\
		&\leq \sum_{i \in S(z)} \left(x(u, i) - \min_{v \in e} x(v,
		i)\right)\\
		&\leq d(e) - \sum_{i \in B(z)} \left(x(u, i) - \min_{v \in e}
		x(v, i)\right)
	\end{align*}
	Since $R(z)$ is a subset of $B(z)$, the lemma follows.
\end{proof}

\begin{corollary} \label{cor:interval-size}
	For each $i$, the length of the interval $I(e, i)$ is at most
	$d(e)$.
\end{corollary}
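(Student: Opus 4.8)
The plan is to derive this as an immediate consequence of Lemma~\ref{lem:interval-sizes}. Fix a terminal $i$. Let $z \in e$ be a vertex that attains the maximum value of $x(\cdot, i)$ over $e$, i.e.\ $x(z, i) = \max_{v \in e} x(v, i)$ (such a $z$ exists since $e$ is finite and nonempty). By the very definition of $R(z) = \{j \;|\; x(z, j) = \max_{v \in e} x(v, j)\}$, this choice of $z$ forces $i \in R(z)$.

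Now apply Lemma~\ref{lem:interval-sizes} to the hyperedge $e$ and this vertex $z$: it gives $\sum_{j \in R(z)} |I(e, j)| \leq d(e)$. Every term $|I(e, j)|$ in that sum is nonnegative (it is the length of an interval), and $i$ is one of the indices in $R(z)$, so in particular $|I(e, i)| \leq \sum_{j \in R(z)} |I(e, j)| \leq d(e)$. Since $i$ was arbitrary, the corollary follows.

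I do not anticipate any real obstacle here: the only thing to notice is that each coordinate $i$ lies in the set $R(z)$ for a suitably chosen maximizing vertex $z$, after which Lemma~\ref{lem:interval-sizes} and nonnegativity of interval lengths finish the argument in one line. (One should just make sure $e$ is nonempty so that the argmax vertex $z$ exists, which is implicit in the setup.)
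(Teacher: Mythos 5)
Your proof is correct and follows essentially the same route as the paper: both deduce the bound from Lemma~\ref{lem:interval-sizes} by exhibiting a vertex $z$ of $e$ with $i \in R(z)$ and then discarding the remaining nonnegative interval lengths. The only difference is cosmetic — the paper instantiates $z$ at the global maximizer $\beta = \max_i \max_{v \in e} x(v,i)$, which literally covers the terminal attaining that maximum (the case used in Theorem~\ref{thm:hmc-hr}), whereas your per-$i$ choice of an argmax vertex yields the stated ``for each $i$'' claim directly.
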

\begin{proof}
	Let $\beta = \max_{i = 1}^k \max_{v \in e} x(v, i)$. Let
	$s_{\ell}$ be a terminal and let $b$ be a vertex in $e$ such that
	$x(b, \ell) = \beta$. Since $\ell$ is in $R(b)$, the corollary
	follows from Lemma~\ref{lem:interval-sizes}.
\end{proof}

\begin{theorem} \label{thm:hmc-hr}
	Let $F$ be the set of all hyperedges crossing the partition
	returned by \HR. For each hyperedge $e$, $\Pr[e \in F] \leq 2
	d(e)$.
\end{theorem}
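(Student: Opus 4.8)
The plan is to fix a hyperedge $e$ and bound the probability that $e$ is cut by the partition $(A(1,\theta),\ldots,A(k-1,\theta),U(\theta))$, where $\theta$ is uniform in $(1/2,1]$. The key structural fact I want to use is that because $\theta > 1/2$, the sets $A(1,\theta),\ldots,A(k-1,\theta)$ are automatically \emph{pairwise disjoint}: if $v\in A(i,\theta)\cap A(j,\theta)$ with $i\ne j$ then $x(v,i)+x(v,j) > 1$, contradicting $\sum_\ell x(v,\ell)=1$. So no uncrossing is needed; the only thing that can go wrong is that $e$ straddles the boundary of one of these sets, or straddles the boundary of the leftover set $U(\theta)$.

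The first step is to split the event $\{e\in F\}$ into two contributions. Since $f$ is the representative-based cut function, $e$ contributes to the cut of a part $A$ exactly when $r(e)\in A$ but $e\not\subseteq A$. So $e\in F$ iff either (a) for some $i\le k-1$ the part $A(i,\theta)$ contains $r(e)$ but not all of $e$ — which by the interval description in Lemma~\ref{lem:amc-distances} happens precisely when $\theta$ lies in the half-open interval $(\min_{v\in e}x(v,i),\,x(r(e),i)]$, an interval of length $d(e,i)$ — or (b) $r(e)\in U(\theta)$ but $e\not\subseteq U(\theta)$. For case (a), by a union bound over $i\le k-1$, the probability is at most $\frac{1}{1/2}\sum_{i=1}^{k-1} d(e,i)\cdot\mathbf{1}[\text{interval}\subseteq(1/2,1]]$; more carefully, since $\theta$ is uniform on an interval of length $1/2$, each such event has probability at most $2$ times the length of the overlap of $(\min_v x(v,i),x(r(e),i)]$ with $(1/2,1]$, which is at most $2\,d(e,i)$. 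Summing over $i\le k-1$ gives at most $2\sum_{i=1}^{k-1} d(e,i)$, but I need the final bound to be $2d(e)=2\sum_{i=1}^k d(e,i)$, so I actually want to argue that case (a) and case (b) together are charged to the $d(e,i)$'s without double counting the index $i$ that "owns" the rightmost part.

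The cleaner route — and the one I would actually carry out — is to identify, for the given $\theta$, the unique index $i$ (if any) whose interval-structure is "active" at $\theta$, and to show that $e$ being cut forces $\theta$ to lie in an interval associated with some single index, whose total length summed over indices is $d(e)$. Concretely: since $\theta>1/2$, for each vertex $v\in e$ there is at most one index $i$ with $x(v,i)\ge\theta$; so either all of $e$ lands in one common part $A(i,\theta)$ (then $e$ is uncut, unless that part has $r(e)\notin$ it — handle separately), or $e$ is split. When $e$ is split and $r(e)\in A(i,\theta)$ for some $i\le k-1$, then $\theta\in(\min_{v\in e}x(v,i),x(r(e),i)]$ — note this interval sits inside $(\min_v x(v,i),\max_v x(v,i)]=I(e,i)$. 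When $r(e)\in U(\theta)$ and $e$ is split, $\theta$ must exceed $x(r(e),i^*)$ where... — here I'd invoke Corollary~\ref{cor:interval-size} ($|I(e,i)|\le d(e)$) together with the fact that $\theta>1/2$ restricts which indices can be active, to show the "leftover" bad event has probability at most $2(d(e)-\sum_{i\le k-1}d(e,i)\cdot[\ldots])$, i.e. it eats up exactly the slack.

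The main obstacle I anticipate is exactly this bookkeeping: ensuring the union bound over the $k-1$ explicit parts plus the leftover part $U(\theta)$ sums to $2d(e)$ and not something like $2d(e) + (\text{extra from the leftover})$. The trick is that the $\theta$-ranges making each of the $k$ events occur are (essentially) disjoint sub-intervals of $(1/2,1]$ — at most one part can "own" $r(e)$ for a given $\theta$ — so the total bad $\theta$-measure is at most $\sum_{i=1}^k d(e,i) = d(e)$, and dividing by the interval length $1/2$ yields the factor $2$. I would therefore structure the proof as: (i) disjointness of the $A(i,\theta)$'s from $\theta>1/2$; (ii) the interval characterization of "$e$ cut by part $i$" from Lemma~\ref{lem:amc-distances}, giving a $\theta$-interval of length $\le d(e,i)$ contained in $(1/2,1]$; (iii) the characterization of "$r(e)\in U(\theta)$, $e$ split", showing its $\theta$-set is disjoint from those in (ii) and has total length bounded by the remaining $d(e,i)$'s (using Corollary~\ref{cor:interval-size} to handle the $k$-th index $\pi(k)$ which has no explicit part); (iv) $\Pr[e\in F]\le \frac{\text{total bad measure}}{1/2}\le 2d(e)$.
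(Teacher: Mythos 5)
Your decomposition into per-part cut events is fine up to a point: the events ``$A(i,\theta)$ contains $r(e)$ but not all of $e$'' for $i\le k-1$ and ``$r(e)\in U(\theta)$ but $e\not\subseteq U(\theta)$'' are indeed pairwise disjoint in $\theta$ (at most one part owns $r(e)$), and each of the first $k-1$ events confines $\theta$ to an interval of length at most $d(e,i)$. The genuine gap is in step (iii): the leftover event is \emph{not} bounded by ``the remaining $d(e,i)$'s'' (and there is no index $\pi(k)$ here --- \HR has no permutation; index $k$ simply has no explicit part). Concretely, take $k=3$, a two-vertex hyperedge $e=\{u,v\}$ with $r(e)=u$, $x(u)=(0.5,\,0.5,\,0)$ and $x(v)=(0.9,\,0,\,0.1)$. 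Then $d(e,1)=d(e,3)=0$, $d(e,2)=0.5$, all the sets $B_i$ for $i\le k-1$ are empty, yet the leftover event $\{r(e)\in U(\theta),\ e\not\subseteq U(\theta)\}$ occurs for every $\theta\in(0.5,0.9]$, i.e.\ with probability $0.8$, while $2\,d(e,3)=0$. So the ``each event is charged to its own $d(e,i)$'' bookkeeping collapses exactly at the part you flagged as the anticipated obstacle; the leftover's measure must be charged globally against $d(e)$ (here against $d(e,2)$, an index whose own bad set is empty), and no disjointness argument produces that charge.

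The paper avoids the decomposition entirely: let $i^*$ be the index whose interval $I(e,i^*)$ has the rightmost right endpoint. If $\theta>\max_{v\in e}x(v,i^*)$ then no vertex of $e$ lies in any $A(i,\theta)$, so $e\subseteq U(\theta)$; if $\theta\le\min_{v\in e}x(v,i^*)$ then, because $\theta>1/2$ forces each vertex into at most one part, \emph{all} of $e$ lies in $A(i^*,\theta)$ (or, when $i^*=k$, all of $e$ lies in $U(\theta)$). Hence $e\in F$ forces $\theta\in I(e,i^*)$, a single interval, and then Corollary~\ref{cor:interval-size} (i.e.\ the simplex inequality of Lemma~\ref{lem:interval-sizes}) gives $|I(e,i^*)|\le d(e)$, so $\Pr[e\in F]\le 2d(e)$. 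You do gesture at both ingredients ($\theta>1/2$ restricting which indices are active, and Corollary~\ref{cor:interval-size}), but you never state or prove the single-interval capture, and if you instead apply Corollary~\ref{cor:interval-size} only to the leftover event on top of the $2\sum_{i\le k-1}d(e,i)$ from your case (a), you get $4d(e)$, not $2d(e)$. As written, the proposal does not yield the theorem; completing it correctly essentially amounts to reproducing the paper's argument.
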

\begin{proof}
	Let $I(e, i^*)$ be the interval with the rightmost right interval
	among the intervals $I(e, 1), \cdots, I(e, k)$; if there are
	several such intervals, we pick one arbitrarily. Note that $e$ is
	in $F$ only if $\theta$ is in the interval $I(e, i^*)$.
	Therefore the probability that $e$ is in $F$ is at most $2 |I(e,
	i^*)|$. By Corollary~\ref{cor:interval-size}, the length of $I(e,
	i^*)$ is at most $d(e)$.
\end{proof}

\iffalse
\begin{proof}
	Let $\beta = \max_{i = 1}^k \max_{v \in e} x(v, i)$. Let
	$s_{\ell}$ be a terminal and let $b$ be a vertex in $e$ such that
	$x(b, \ell) = \beta$. Let $\alpha = \min_{v \in e} x(v, \ell)$,
	and let $a$ be a vertex in $e$ such that $x(a, \ell) = \alpha$.
	Note that $e$ is in $F$ only if $\theta$ is in the interval
	$[\max\{1/2, x(a, \ell)\}, x(b, \ell)]$, which happens with
	probability at most $2(x(b, \ell) - x(a, \ell))$. The theorem now
	follows from Lemma~\ref{lem:interval-sizes}, since $\ell$ is in
	$L(b)$.
\end{proof}
\fi

\begin{theorem} \label{thm:hmc-ckr-cut-prob}
	Let $F$ be the set of all hyperedges crossing the partition
	returned by \CKR. For each hyperedge $e$, $\Pr[e \in F] \leq
	H_{|e|} \cdotp d(e)$.
\end{theorem}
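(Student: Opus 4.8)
The plan is to fix a hyperedge $e$ and bound the probability that $e$ crosses the partition produced by \CKR, following the classical CKR-analysis style adapted to hyperedges by the method of Williamson and Shmoys. Recall that in \CKR we pick a single $\theta \in [0,1)$ and a random permutation $\pi$; vertex $v$ is assigned to the first terminal $\pi(i)$ (in permutation order) for which $x(v,\pi(i)) \ge \theta$, and any leftover vertices go to $\pi(k)$. Since $f$ counts a hyperedge $e$ only through its representative $r(e)$, the hyperedge $e$ is ``cut'' precisely when $r(e)$ is assigned to some terminal $i$ but $e \not\subseteq A_i$, i.e.\ some other vertex of $e$ is assigned to a different terminal (or $r(e)$ lands in the leftover set while some vertex of $e$ does not, which is the $\pi(k)$ boundary case). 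The first step is therefore to express $\Pr[e \in F]$ in terms of the intervals $I(e,i) = [\min_{v\in e} x(v,i), \max_{v \in e} x(v,i)]$ and the quantities $d(e,i) = x(r(e),i) - \min_{v\in e} x(v,i)$.

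Next I would condition on the value of $\theta$ and analyze which terminals are ``active'' at $\theta$, meaning $I(e,i)$ contains $\theta$ (more precisely, $\theta \le \max_{v\in e} x(v,i)$ so that terminal $i$ could capture some but not all of $e$). Order the terminals so that their right endpoints $\max_{v\in e} x(v,i)$ are non-increasing. For a fixed $\theta$, let $t$ be the terminal with the largest right endpoint among those whose interval is ``relevant'' at $\theta$; the key combinatorial observation—exactly as in the CKR/Shmoys--Williamson argument—is that $e$ is cut only if some terminal $i$ with $\theta \in$ (the relevant portion of) $I(e,i)$ comes before, in the random permutation $\pi$, the terminal that would otherwise capture all of $e$. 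Summing over the at most $|e|$ candidate terminals and using that a uniformly random permutation places the $j$-th-ranked (by right endpoint) terminal first among the top $j$ with probability $1/j$, the conditional cut probability at a given $\theta$ telescopes into a sum bounded by $\sum_{j=1}^{|e|} \frac{1}{j}(\text{length of the }j\text{-th interval's contribution at }\theta)$.

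The main work—and the step I expect to be the chief obstacle—is the bookkeeping that converts this per-$\theta$ bound into the clean statement $\Pr[e\in F]\le H_{|e|}\,d(e)$ after integrating over $\theta\in[0,1)$. Concretely, after integration the $j$-th term contributes $\frac{1}{j}$ times the measure of $\theta$ for which the $j$-th-ranked terminal is the ``pivotal'' one, and one must show this measure is at most the contribution of that terminal to $d(e)$, i.e.\ at most $d(e,i)$ for the corresponding $i$. This requires the hyperedge analogue of interval-alignment: that at any $\theta$ which is dangerous for the $j$-th terminal, that same $\theta$ lies in the interval $[\min_{v\in e} x(v,i), x(r(e),i)]$ of length $d(e,i)$, not merely in the full interval $I(e,i)$. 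The reason is that if $\theta > x(r(e),i)$ then $r(e)\notin A_i$, so terminal $i$ cannot be the one that captures $r(e)$, and if $\theta \le \min_{v\in e}x(v,i)$ then $i$ captures all of $e$ (before any later terminal), which would make $e$ uncut through $i$. Hence only $\theta \in (\min_{v\in e}x(v,i), x(r(e),i)]$ is dangerous through terminal $i$, giving measure $\le d(e,i)$. Combining, $\Pr[e\in F] \le \sum_{i} c_i\, d(e,i)$ where each coefficient $c_i \le H_{|e|}$ (since a terminal can be at rank at most $|e|$ among those with distinct relevant vertices, and the harmonic weights $1/1 + 1/2 + \cdots$ are absorbed into at most $|e|$ terms), yielding $\Pr[e \in F] \le H_{|e|} \sum_i d(e,i) = H_{|e|}\, d(e)$.

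I would close by noting the boundary case: when $r(e)$ or the relevant vertices fall into $A_{\pi(k)}=V-S$, the same accounting applies because the leftover terminal behaves like a terminal whose interval extends to cover the residual mass, and the inequality $d(e,i)\le d(e)$ from the definition keeps every coefficient controlled. The cleanest route is to mimic the Shmoys--Williamson proof verbatim with ``edge $e$ is separated'' replaced by ``$r(e)$ is separated from some vertex of $e$'', tracking the representative throughout; the asymmetry of $f$ is precisely what makes the interval $[\min_{v\in e}x(v,i), x(r(e),i)]$—rather than the full $I(e,i)$—the correct object, and verifying that substitution is where the care is needed.
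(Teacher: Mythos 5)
Your overall skeleton (a terminal must both ``split'' $e$, i.e.\ $\theta$ lies in its interval, and be first in $\pi$ among the terminals whose intervals reach at least as far right, giving the harmonic factor $1/j$ for the $j$-th ranked terminal) is the same as the paper's. But the step you yourself flag as the crux is wrong. You claim that terminal $i$ is dangerous only for $\theta \in (\min_{v\in e}x(v,i),\, x(r(e),i)]$, of measure $d(e,i)$, on the grounds that if $\theta > x(r(e),i)$ then $i$ cannot capture $r(e)$. That reasoning does not cover the real failure mode: since the parts form a partition, $e\in F$ exactly when $e$ is split across parts, and terminal $i$ can cause this without ever capturing $r(e)$ --- it can steal a non-representative vertex $v\in e$ with $x(v,i)\ge\theta > x(r(e),i)$ while appearing earlier in $\pi$ than the terminal that eventually gets $r(e)$. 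Concretely, for an edge $e=\{r(e),v\}$ with $x(r(e),i)=0$ and $x(v,i)=1/2$ one has $d(e,i)=0$, yet terminal $i$ splits $e$ for every $\theta\le 1/2$ and can be the settling terminal; your accounting assigns it zero dangerous measure, so the inequality ``$\Pr[\text{cut via }i]\le c_i\,d(e,i)$'' is false and the final sum $\sum_i c_i\,d(e,i)\le H_{|e|}\,d(e)$ does not follow.

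What is missing is precisely the paper's Lemma~\ref{lem:interval-sizes}: the per-terminal charge must be the \emph{full} split interval $|I(e,i)|$ (so $\Pr[X_i\wedge Y_i]\le |I(e,i)|/i$ after ordering by right endpoints), and these lengths cannot be compared to $d(e,i)$ terminal by terminal. Instead one groups the indices by the vertex $z\in e$ attaining $\max_{v\in e}x(v,i)$ and proves, via the simplex constraints $\sum_i x(z,i)=\sum_i x(r(e),i)=1$, that each group's total interval length is at most $d(e)$; with at most $|e|$ groups and the decreasing weights $1/i$, the worst case is one group of mass $d(e)$ per harmonic weight, giving $H_{|e|}\,d(e)$. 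Your proposal replaces this genuinely nontrivial global charging argument with a per-terminal claim that fails, so as written the proof has a gap exactly at the step you identified as the main obstacle.
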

\begin{proofsketch}
	We say that $s_i$ \emph{splits} $e$ if $\theta \in I(e, i)$. Let
	$X_i$ be the event that $s_i$ splits $e$. We say that $s_i$
	\emph{touches} $e$ if $\theta \leq \max_{v \in e} x(v, i)$.
	(Note that $\max_{v \in e} x(v, i)$ is the right endpoint of
	the interval $I(e, i)$.) We say that $s_i$ \emph{settles} $e$ if
	$s_i$ is the first terminal in the permutation $\pi$ that touches
	$e$. Let $Y_i$ be the event that $s_i$ settles $e$.

	Note that the edge $e$ is in $F$ only if there is a terminal $s_i$
	that splits \emph{and} settles $e$. Therefore we can upper bound
	the probability that $e$ is in $F$ by $\sum_{i = 1}^k \Pr[X_i
	\wedge Y_i]$.

	We relabel the terminals so that the ordering of the intervals
	$\{I(e, i)\}_{1 \leq i \leq k}$ from right to left according to
	their ending point is $I(e, 1), I(e, 2), \cdots, I(e, k)$. (If
	there are several intervals with the same ending point, we break
	ties arbitrarily.) After relabeling the intervals, $s_i$ settles
	$e$ only if, for each $j < i$, $\pi(s_i) < \pi(s_j)$. This
	observation, together with the fact that $s_i$ splits $e$ with
	probability $|I(e, i)|$, implies that $\Pr[X_i \wedge Y_i] \leq
	|I(e, i)| / i$.

	Finally, let $L(z) = \{i \;|\; x(z, i) = \max_{v \in e} x(v,
	i)\}$. If an index $i$ belongs to more than one set $L(z)$, we
	only add $i$ to one of the sets (chosen arbitrarily). Note that,
	by Lemma~\ref{lem:interval-sizes}, the total length of the
	intervals $I(e, i)$ where $i \in L(z)$ is at most $d(e)$. This,
	together with the fact that the sets $L(z)$ are disjoint and their
	union is $\{1, 2, \cdots, k\}$, implies that $\sum_{i = 1}^k
	\Pr[X_i \wedge Y_i]$ is at most $H_{|e|} \cdotp d(e)$.
\end{proofsketch}

\begin{prop} \label{prop:ahmc-ckr-tight}
	The analysis in Theorem~\ref{thm:hmc-ckr-cut-prob} is tight.
\end{prop}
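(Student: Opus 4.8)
The plan is to prove the proposition by exhibiting, for every integer $\Delta \ge 2$, a one‑parameter family of inputs — a feasible fractional solution $\bx$ to the relaxation \SubMPRel (with the hypergraph separation function $f$), together with a distinguished hyperedge $e$ of size $\Delta$ — on which \CKR cuts $e$ with probability $(H_\Delta - 1/k)\,d(e)$, where $k$ is the number of terminals. Letting $k\to\infty$ with $\Delta$ fixed, this ratio tends to $H_\Delta = H_{|e|}$, so the factor in Theorem~\ref{thm:hmc-ckr-cut-prob} cannot be improved.

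First I would write down the instance. Take terminals $s_1,\ldots,s_k$ and $\Delta$ additional vertices $u_1,\ldots,u_\Delta$, and let the hypergraph consist of the single unit‑weight hyperedge $e=\{u_1,\ldots,u_\Delta\}$ with representative $r(e)=u_1$. Fix a small $\ell>0$ and reals $a_1>a_2>\cdots>a_\Delta>0$ so that the intervals $[a_j,a_j+\ell]$, $1\le j\le\Delta$, are pairwise disjoint and \emph{stacked} (i.e.\ $a_{j-1}=a_j+\ell$) and $\sum_{j=1}^\Delta a_j = 1-\ell$; concretely $a_j = t-(j-1)\ell$ with $t=(1-\ell+\ell\binom{\Delta}{2})/\Delta$, so that for $\Delta$ fixed and $\ell\to 0$ we have $t\to 1/\Delta$ and everything stays in $[0,1]$. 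Set $x(s_i,i)=1$, $x(s_i,j)=0$ for $j\ne i$, $x(u_m,m)=a_m+\ell$, $x(u_m,j)=a_j$ for $j\le\Delta$ with $j\ne m$, and $x(u_m,j)=0$ for $\Delta<j\le k$. One checks $\sum_j x(u_m,j)=\ell+\sum_{j=1}^\Delta a_j=1$, so $\bx$ is feasible. For this solution $I(e,j)=[a_j,a_j+\ell]$ has length $\ell$ for $j\le\Delta$, with unique maximizer $u_j$, and is a single point for $j>\Delta$; since $x(r(e),1)=a_1+\ell$ is the maximum in coordinate $1$ while $x(r(e),j)=a_j$ is the minimum in every other coordinate, $d(e,1)=\ell$, $d(e,j)=0$ for $j\ne 1$, hence $d(e)=\ell$. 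Note the indexing already matches the right‑to‑left ordering used in the proof of Theorem~\ref{thm:hmc-ckr-cut-prob}, and $\sum_j |I(e,j)|/j = \ell H_\Delta = H_\Delta d(e)$, so the combinatorial bound coming from Lemma~\ref{lem:interval-sizes} is already tight on $\bx$.

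The substantive step is to compute $\Pr[e\in F]$ exactly for \CKR run on $\bx$ (recall $e\in F$ iff the part containing $r(e)$ does not contain all of $e$), by conditioning on $\theta$. When $\theta<a_\Delta$ every $u_m$ has coordinate $\ge\theta$ in all of $1,\ldots,\Delta$, so the $\pi$‑first of $s_1,\ldots,s_\Delta$ grabs all of $e$ and $e\notin F$; when $\theta>a_1+\ell$ no $u_m$ has any coordinate $\ge\theta$, so all of $e$ falls to $A_{\pi(k)}$ and $e\notin F$. So only $\theta\in I(e,j)$ matters, and there the stacking forces the set of terminals ``touching'' $e$ to be exactly $\{s_1,\ldots,s_j\}$; a short case check of which $u_m$ have coordinate $\ge\theta$ in each of coordinates $1,\ldots,j$ shows that $u_j$ is high in all of $1,\ldots,j$ while every other $u_m$ is high only in $1,\ldots,j-1$. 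Hence if the $\pi$‑first of $s_1,\ldots,s_j$ is $s_{j'}$ with $j'<j$ it claims all of $e$ ($e\notin F$); if it is $s_j$ and $j\ge 2$ it claims exactly $\{u_j\}$ while $r(e)=u_1$ is claimed strictly later or by $A_{\pi(k)}$, so $e\in F$; and if $j=1$ (the only toucher is $s_1$) then $e\in F$ unless $s_1=\pi(k)$, in which case $e$ is not cut. Therefore $\Pr[e\in F\mid\theta\in I(e,j)]=1/j$ for $2\le j\le\Delta$ and $=1-1/k$ for $j=1$, giving $\Pr[e\in F]=\ell(1-1/k)+\sum_{j=2}^\Delta \ell/j = (H_\Delta-1/k)\,d(e)$; sending $k\to\infty$ finishes the proof.

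I expect the delicate point — and the reason the example attains $H_{|e|}$ only in the limit — to be the behaviour of \CKR at the topmost interval $I(e,1)$: when $\theta$ lands there, the lone touching terminal $s_1$ is $\pi(k)$ with probability $1/k$, and then all still‑unassigned vertices of $e$ (including $r(e)$) are dumped into $A_{\pi(k)}$ together, so $e$ is not cut. Taking $k=\Delta$ loses a whole $d(e)/\Delta$ this way and only yields an $H_{\Delta-1}$ lower bound; padding with $k-\Delta$ dummy terminals (empty intervals, all $u_m$ set to $0$ there) dilutes the loss to $d(e)/k\to 0$ without affecting anything else. A secondary thing to verify carefully is that no $\theta$ outside $\bigcup_j I(e,j)$ ever cuts $e$, and that the disjoint‑stacking and unique‑maximizer properties survive the normalization $\sum_j a_j = 1-\ell$.
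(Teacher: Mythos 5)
Your construction and computation are correct, and this is essentially the same approach as the paper: a single hyperedge whose $\Delta$ label intervals are stacked, each of length $d(e)$, with the representative at the bottom of every interval except the rightmost one, so that conditioned on $\theta$ landing in the $j$-th interval the edge is cut exactly when $s_j$ is the first toucher in $\pi$. If anything, your accounting is more careful than the paper's one-line verification: you make explicit the $d(e)/k$ loss when the lone toucher of the top interval is $\pi(k)$, so the construction attains $(H_\Delta-1/k)\,d(e)$ and the factor $H_{|e|}$ is matched only in the limit $k\to\infty$, whereas the paper asserts probability at least $H_{|e|}\,d(e)$ outright.
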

\begin{proof}
	Let $e$ be a hyperedge with representative $u$. Let $\epsilon \in
	(0, 1)$ be such that $\epsilon |e| \leq 1$. Consider a solution
	$x$ that assigns the following values to the vertices of $e$. For
	each terminal $i > |e|$ and each vertex $z \in e$, we have $x(z,
	i) = 0$.  For each terminal $i$ such that $1 < i \leq |e|$, we
	have $x(u, i) = (|e| - i) \epsilon$. Finally, $x(u, 1) = 1 -
	\sum_{i = 2}^k x(u, i)$.  Let $v_2, \cdots, v_{|e|}$ denote the
	remaining vertices of $e$ (other than $u$). Now consider an index
	$j$ such that $2 \leq j \leq |e|$. We have $x(v_j, 1) = \epsilon$,
	$x(v_j, j) = x(u, j) - \epsilon$, and $x(v_j, i) = x(u, i)$ for
	all $i \neq j$. Note that $\sum_{i = 1}^k x(v, i)$ is equal to $1$
	for all vertices $v \in e$ and the distance $d(e)$ is equal to
	$\epsilon$. It is straightforward to verify that $e$ is in $F$
	with probability at least $H_{|e|} \cdotp d(e)$.
\end{proof}

%%%%%%%%%%%%
\section{Submodular Cost Labeling} \label{sec:subml}
In this section we consider \SubML, which generalizes \monMCSA,
uniform metric labeling, hub location, and other problems. A natural
algorithm here is \KT, which we have already introduced in
Section~\ref{sec:monotone}. We also describe a different algorithm,
\SymMLR, which is appropriate for \SubML when the cut function is an
arbitrary symmetric submodular function. We obtain several results
that we state below.

The next two results consider the \SubML problem on hypergraphs in
which $h$ is the following function. For each edge hyperedge $e$, pick
an arbitrary representative node $r(e) \in e$. We define the function
$h:2^V \rightarrow \mathbb{R}_+$ as follows: for $A \subseteq V$, let
$f(A) = \sum_{e: r(e) \in A, e \not \subseteq A} w(e)$ be the weight
of hyperedges whose representatives are in $A$ and they cross $A$. We
refer to this function as the hypergraph separation cost function.

\begin{theorem} \label{thm:uml-hypercut}
	If $h$ is the hypergraph separation cost function and each $g_i$
	is modular, \KT achieves a $\Delta$-approximation for \SubML.
\end{theorem}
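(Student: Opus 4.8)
The plan is to apply \KT to an optimal fractional solution $\bx$ of \CR and to bound, in expectation, the two parts of the cost of the returned partition $(A_1,\dots,A_k)$ separately: the label cost $\sum_i g_i(A_i)$ and the cut cost $\sum_i h(A_i)$. For the label cost I would use the \KT property of \cite{KleinbergT02} (already invoked in the sketch of Theorem~\ref{thm:monotone-kt-main}) that every vertex $v$ is assigned to $i$ with probability exactly $x(v,i)$; since each $g_i$ is modular with $g_i(\emptyset)=0$, this gives $\Ex[\sum_i g_i(A_i)]=\sum_i\sum_v g_i(\{v\})x(v,i)=\sum_i\hat g_i(\bx_i)$, i.e.\ the label cost is reproduced exactly in expectation. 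Hence it suffices to prove $\Ex[\sum_i h(A_i)]\le\Delta\sum_i\hat h(\bx_i)$, since then the total is at most $\sum_i\hat g_i(\bx_i)+\Delta\sum_i\hat h(\bx_i)\le\Delta\,\optcr\le\Delta\,\opt$.

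For the cut cost, note that because $(A_1,\dots,A_k)$ is a partition, a hyperedge $e$ contributes to $\sum_i h(A_i)$ exactly through the part containing its representative $r(e)$, and it contributes $w(e)$ precisely when that part does not contain all of $e$ — equivalently, when the vertices of $e$ do not all lie in one part; call this event ``$e$ is split''. Thus $\Ex[\sum_i h(A_i)]=\sum_e w(e)\Pr[e\text{ is split}]$, and since Lemma~\ref{lem:amc-distances} (its proof uses only the explicit form of the hypergraph separation cost function, so it applies verbatim to $h$) gives $\sum_i\hat h(\bx_i)=\sum_e w(e)d(e)$, it is enough to show $\Pr[e\text{ is split}]\le\Delta\,d(e)$ for each $e$.

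The core of the proof is this bound. Write $m_i=\min_{v\in e}x(v,i)$ and $M_i=\max_{v\in e}x(v,i)$, so $I(e,i)=[m_i,M_i]$, and view \KT as an i.i.d.\ stream of pairs $(i_t,\theta_t)$ drawn uniformly from $\{1,\dots,k\}\times[0,1]$, a vertex $v$ being assigned to $i_t$ at the first round $t$ with $x(v,i_t)\ge\theta_t$. Call round $t$ a \emph{touch} of $e$ if $\theta_t\le M_{i_t}$; no vertex of $e$ is assigned before the first touch (when $\theta_t>M_{i_t}$ no $v\in e$ has $x(v,i_t)\ge\theta_t$), and at the first touching round $t_1$, if moreover $\theta_{t_1}\le m_{i_{t_1}}$ then every $v\in e$ satisfies $x(v,i_{t_1})\ge m_{i_{t_1}}\ge\theta_{t_1}$, so all of $e$ is assigned to $i_{t_1}$ simultaneously and $e$ is not split. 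Hence $e$ being split forces $\theta_{t_1}>m_{i_{t_1}}$, and since the stream is i.i.d.\ the pair $(i_{t_1},\theta_{t_1})$ is distributed as $(i,\theta)$ conditioned on $\theta\le M_i$, so
$$\Pr[e\text{ is split}]\ \le\ \Pr\bigl[\theta_{t_1}>m_{i_{t_1}}\bigr]\ =\ \frac{\sum_{i=1}^{k}(M_i-m_i)}{\sum_{i=1}^{k}M_i}\ =\ \frac{\sum_{i=1}^{k}|I(e,i)|}{\sum_{i=1}^{k}M_i}\,.$$
I would finish by bounding the denominator below, $\sum_i M_i\ge\sum_i x(r(e),i)=1$, and the numerator above, $\sum_i|I(e,i)|\le\sum_{z\in e}\sum_{i\in R(z)}|I(e,i)|\le|e|\,d(e)$, where $R(z)=\{i:x(z,i)=M_i\}$: every coordinate lies in some $R(z)$, and each inner sum is at most $d(e)$ by Lemma~\ref{lem:interval-sizes}. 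This yields $\Pr[e\text{ is split}]\le|e|\,d(e)\le\Delta\,d(e)$, completing the argument. (For $\Delta=2$ this recovers the $2$-approximation of \cite{KleinbergT02} for uniform metric labeling on graphs.)

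The step I expect to be the main obstacle is this last one. The obvious route — union-bounding over the pairs $\{v,r(e)\}$ for $v\in e$ and using a ``factor-$2$'' \KT guarantee for each pair — only gives $\Pr[e\text{ is split}]\le 2(|e|-1)d(e)$, which is worse than $\Delta\,d(e)$ once $\Delta\ge3$. The point is to reason instead about the single first round that touches $e$, collapsing the entire ``split'' event into the one-round ratio $\sum_i|I(e,i)|/\sum_i M_i$, after which Lemma~\ref{lem:interval-sizes} (summed over all vertices of $e$) and the trivial bound $\sum_i M_i\ge1$ suffice.
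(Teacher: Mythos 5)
Your proposal is correct and follows essentially the same route as the paper: it preserves the modular label cost exactly via the Kleinberg--Tardos assignment probabilities, and bounds the cut cost by showing each hyperedge is split with probability at most $\Delta\,d(e)$ by conditioning on the first round that touches $e$, bounding the numerator $\sum_i |I(e,i)|$ by $\Delta\, d(e)$ via Lemma~\ref{lem:interval-sizes} and the denominator $\sum_i \max_{v\in e} x(v,i)$ below by $1$. This is exactly the argument of Lemma~\ref{lem:kt-split-ahmc} combined with Lemma~\ref{lem:amc-distances}, just spelled out more explicitly as a conditional-probability computation over the i.i.d.\ rounds.
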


\begin{theorem} \label{thm:ml-hypercut}
	If $h$ is the hypergraph separation cost function and each $g_i$
	is a monotone submodular function, \KT achieves an $O(\ln n +
	\Delta)$ approximation for \SubML.
\end{theorem}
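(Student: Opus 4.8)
The plan is to run \KT on an optimal solution $\bx$ to \CR for the given \SubML instance and to bound the two parts of the objective separately. Writing $\optcr = \optcr^g + \optcr^h$ with $\optcr^g = \sum_{i=1}^k \hat g_i(\bx_i)$ and $\optcr^h = \sum_{i=1}^k \hat h(\bx_i)$, I would show $\Ex[\sum_i g_i(A_i)] = O(\ln n)\,\optcr^g$ for the label-assignment part and $\Ex[\sum_i h(A_i)] \le \Delta\,\optcr^h$ for the separation part; adding these and using $\optcr \le \opt$ gives the claimed $O(\ln n + \Delta)$ bound.

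For the label cost, the analysis of Theorem~\ref{thm:monotone-kt-main} carries over unchanged. \KT terminates within $O(k\log n)$ iterations with high probability; in each iteration the index $i$ and threshold $\theta$ are chosen independently of the past, so the expected $g_i$-value of the set newly assigned in that iteration is at most $\Ex_{i,\theta}[g_i(\{v : x(v,i)\ge\theta\})] = \tfrac1k\sum_i \hat g_i(\bx_i) = \tfrac1k\optcr^g$, using monotonicity to discard the already-assigned elements. Since each $g_i$ is subadditive ($g_i(\emptyset)=0$), $\sum_i g_i(A_i)$ is bounded by the sum of these newly-assigned-set costs over all iterations, and summing over the (random) number of iterations yields $\Ex[\sum_i g_i(A_i)] = O(\ln n)\,\optcr^g$.

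The main step is the separation cost. Since each vertex ends up with exactly one label, $\sum_{i=1}^k h(A_i)$ equals the total weight $\sum_e w(e)$ over the hyperedges $e$ that are \emph{not monochromatic} in the output labeling (a monochromatic $e$ contributes $0$; otherwise the label of $r(e)$ charges $w(e)$ exactly once), so it suffices to bound $\Pr[e\text{ not monochromatic}]$ for each $e$. Here I would use that in \KT a vertex $v$ receives the label of the first iteration $(i,\theta)$ with $\theta \le x(v,i)$. If the first iteration $(i,\theta)$ with $\theta \le \max_{v\in e} x(v,i)$ in fact satisfies $\theta \le \min_{v\in e} x(v,i)$, then all vertices of $e$ are still unassigned at that point and are all assigned to $i$, so $e$ is monochromatic. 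Hence $\Pr[e\text{ not monochromatic}]$ is at most the probability that the first such ``touching'' iteration touches only a proper subset of $e$; by the i.i.d.\ structure of the iterations this probability equals $\frac{\sum_i |I(e,i)|}{\sum_i \max_{v\in e} x(v,i)} \le \sum_{i=1}^k |I(e,i)|$, where the inequality uses $\sum_i \max_{v\in e} x(v,i) \ge \sum_i x(u,i) = 1$ for any fixed $u \in e$.

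Finally I would bound $\sum_{i=1}^k |I(e,i)| \le |e|\cdot d(e) \le \Delta\, d(e)$ by the same disjointification used in the analysis of \CKR for \AHMC (Theorem~\ref{thm:hmc-ckr-cut-prob}): choosing for each coordinate $i$ a vertex of $e$ attaining $\max_{v\in e} x(v,i)$ partitions $\{1,\ldots,k\}$ into sets $R_z$, $z\in e$, with $R_z \subseteq R(z)$, and Lemma~\ref{lem:interval-sizes} gives $\sum_{i\in R(z)} |I(e,i)| \le d(e)$ for each $z$, so summing over $z\in e$ proves the bound. Combining with Lemma~\ref{lem:amc-distances} gives $\Ex[\sum_i h(A_i)] = \sum_e w(e)\Pr[e\text{ not monochromatic}] \le \Delta \sum_e w(e) d(e) = \Delta\,\optcr^h$, as required. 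The part I expect to need the most care is the ``monochromatic'' event analysis: it only yields an \emph{upper} bound on $\Pr[e\text{ not monochromatic}]$ (a touching iteration that touches a proper subset can still, with small probability, end up leaving $e$ monochromatic), and one must also be careful that the per-iteration expectations are conditioned correctly on reaching that iteration, both here and in the label-cost part.
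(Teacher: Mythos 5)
Your proposal is correct and follows essentially the same route as the paper: the same split into assignment cost (handled by the argument of Theorem~\ref{thm:monotone-kt-main}) and separation cost, with your per-hyperedge ``first touching iteration cuts with probability at most $\Pr[\mathrm{cut}]/\Pr[\mathrm{touch}] \le \Delta d(e)$'' analysis being exactly the paper's Lemma~\ref{lem:kt-split-ahmc}, including the use of Lemma~\ref{lem:interval-sizes} to get $\sum_i |I(e,i)| \le \Delta d(e)$ and Lemma~\ref{lem:amc-distances} to convert to $\partcr$. The only differences are presentational (you argue via ``not monochromatic'' rather than the paper's ``split'' event, and you inline the lemma), so no further changes are needed.
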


\begin{algo}
\underline{\textbf{\SymMLR}}
\\\> let $x$ be a solution to \CR
\\\> $S \leftarrow \emptyset$ \qquad \Comment{set of all assigned
vertices}
\\\> $A_i \leftarrow \emptyset$ for all $i$ $(1 \leq i \leq k)$
\\\> while $S \neq V$
\\\>\> pick $i \in \{1, 2, \cdots, k\}$ uniformly at random
\\\>\> pick $\theta \in [0, 1]$ uniformly at random
\\\>\> $A_i \leftarrow A_i \cup \{v \;|\; x(v, i) \geq \theta\}$
\\\>\> $S \leftarrow S \cup \{v \;|\; x(v, i) \geq \theta\}$
\\\> \Comment{uncross $A_1, \cdots, A_k$}
\\\> $A'_i \leftarrow A_i$ for all $i$ $(1 \leq i \leq k)$
\\\> while there exist $i \neq j$ such that $A'_i \cap A'_j \neq
\emptyset$
\\\>\> if $(f(A'_i) + f(A'_j - A'_i) \leq f(A'_i) + f(A'_j))$
\\\>\>\> $A'_j \leftarrow A'_j - A'_i$
\\\>\> else
\\\>\>\> $A'_i \leftarrow A'_i - A'_j$
\\\> return $(A'_1, \cdots, A'_k)$
\end{algo}

\begin{theorem} \label{thm:ml-symmetric}
	If $h$ is a symmetric submodular function and each $g_i$ is a
	monotone submodular function, \SymMLR achieves an $O(\ln
	n)$ approximation for \SubML.
\end{theorem}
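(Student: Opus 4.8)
The plan is to split the cost of the partition returned by \SymMLR into a labeling part and a cut part and bound each separately against $\optcr$. Since the Lov\'asz extension is linear in the underlying function, $\hat f_i(\bx_i)=\hat g_i(\bx_i)+\hat h(\bx_i)$, and therefore
\[
\optcr \;=\; \sum_{i=1}^k\hat g_i(\bx_i)\;+\;\sum_{i=1}^k\hat h(\bx_i)\;=:\;G^*+H^*.
\]
As in Section~\ref{sec:monotone} I will assume $g_i(\emptyset)=h(\emptyset)=0$, so that $g_i$ and $h$ are subadditive. The reason \SymMLR is the appropriate algorithm (rather than \KT applied directly) is that \KT's step of discarding already-assigned vertices before charging for them is valid only for monotone functions, which $h$ is not; \SymMLR instead lets the sets produced by the while loop overlap, pays for the overcount by subadditivity, and then restores disjointness with the uncrossing of Lemma~\ref{lem:uncross}, which is available precisely because $h$ is symmetric.

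First I would analyze the while loop exactly as in the proof of Theorem~\ref{thm:monotone-kt-main} (following \cite{KleinbergT02}). In each iteration $i$ and $\theta$ are chosen uniformly, so a fixed vertex $v$ joins the added set with probability $\frac{1}{k}\sum_j x(v,j)=\frac{1}{k}$; hence if $N$ denotes the number of iterations then $\Ex[N]=O(k\log n)$. Write $B_t$ for the set added in iteration $t$, $i_t$ for its index, and $A_i=\bigcup_{t:\,i_t=i}B_t$ for the sets at the end of the loop; these cover $V$. By subadditivity, $\sum_i g_i(A_i)\le\sum_{t=1}^N g_{i_t}(B_t)$ and $\sum_i h(A_i)\le\sum_{t=1}^N h(B_t)$. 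The pairs $(i_t,B_t)$ are i.i.d., with $\Ex[g_{i_t}(B_t)]=\frac{1}{k}\sum_j\hat g_j(\bx_j)=\frac{1}{k}G^*$ and $\Ex[h(B_t)]=\frac{1}{k}H^*$, and $N$ is a stopping time for this sequence; so Wald's identity gives $\Ex[\sum_i g_i(A_i)]\le\Ex[N]\cdot\frac{1}{k}G^*=O(\log n)\,G^*$ and, likewise, $\Ex[\sum_i h(A_i)]\le O(\log n)\,H^*$. (Equivalently, one truncates the loop after $\Theta(k\log n)$ iterations and assigns the few leftover vertices to $A_1$, which affects the expectation negligibly.)

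Next I would verify that the uncrossing phase incurs no additional cost. Applying Lemma~\ref{lem:uncross} with its function taken to be $h$ shows that the uncrossing loop terminates with disjoint sets $A'_i\subseteq A_i$ satisfying $\bigcup_i A'_i=\bigcup_i A_i=V$ and $\sum_i h(A'_i)\le\sum_i h(A_i)$; in particular $(A'_1,\ldots,A'_k)$ is a feasible partition. For the labeling term, observe that every uncrossing step replaces one of $A'_i,A'_j$ by a proper subset of itself and leaves the other unchanged, whichever branch of the test on $h$ is taken; since each $g_\ell$ is monotone, $\sum_\ell g_\ell(A'_\ell)$ is non-increasing throughout. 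Hence, pointwise over the randomness, $\sum_i\bigl(g_i(A'_i)+h(A'_i)\bigr)\le\sum_i\bigl(g_i(A_i)+h(A_i)\bigr)$, and taking expectations the expected cost of the output is $O(\log n)(G^*+H^*)=O(\log n)\optcr$.

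The step I expect to be the crux is recognizing that the two hypotheses play complementary roles in the uncrossing phase: symmetry of $h$ is what makes the uncrossing non-increasing for the cut term, while monotonicity of the $g_i$ makes the very same uncrossing non-increasing for the labeling term --- independently of which set a given step chooses to shrink --- and at the same time supplies the subadditivity used to charge each iteration's labeling cost to $\frac{1}{k}G^*$. Everything else, including the interaction of the random stopping time $N$ with linearity of expectation, is identical to the situation already handled in Theorem~\ref{thm:monotone-kt-main}.
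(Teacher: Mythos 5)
Your proposal is correct and follows essentially the same route as the paper: bound the cost of the (possibly overlapping) balls produced by the while loop via the argument of Theorem~\ref{thm:monotone-kt-main}, then make them disjoint with the uncrossing of Lemma~\ref{lem:uncross}, which cannot increase the cut cost because $h$ is symmetric, nor the labeling cost because the $A'_i$ only shrink and each $g_i$ is monotone. Your added details (the full-ball charging that avoids needing monotonicity of $h$, and Wald's identity/truncation for the random number of iterations) are exactly the points the paper leaves implicit.
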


\noindent
Let $\partcr = \sum_{i = 1}^k \hat{h}(\bx_i)$ be the partition cost of
\CR, and let $\costcr = \sum_{i = 1}^k \hat{g}_i(\bx_i)$ be the
assignment cost of \CR.

Consider the \SubML problem in which $h$ is the hypergraph separation
cost function. For each hyperedge $e$, let $d(e) = \sum_{i = 1}^k
(x(r(e), i) - \min_{v \in e} x(v, i))$. By
Lemma~\ref{lem:amc-distances}, $\partcr = \sum_e w(e) d(e)$. In order
to bound the expected partition cost of the labeling constructed by
\KT, we consider each hyperedge separately, and we give an upper bound
on the probability that the hyperedge has at least two vertices with
different labels. We say that a hyperedge $e$ is split in some
iteration of \KT if there exists an iteration $\ell$ such that at
least one vertex of $e$ is assigned a label in iteration $\ell$ but
not all vertices of $e$ are assigned a label in iteration $\ell$. The
following lemma gives an upper bound on the probability that a
hyperedge $e$ is split.

\begin{lemma}\label{lem:kt-split-ahmc}
	For each hyperedge $e$, the probability that $e$ is split is at
	most $\Delta d(e)$.
\end{lemma}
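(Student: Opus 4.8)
The plan is to reduce the event ``$e$ is split'' to a statement about the first iteration of \KT that assigns a label to a vertex of $e$, and then to invoke Lemma~\ref{lem:interval-sizes}. Call an iteration of \KT a \emph{touching} iteration for $e$ if some previously unassigned vertex of $e$ receives a label in it; since \KT eventually assigns every vertex, a first touching iteration exists with probability $1$. First I would observe that $e$ is split if and only if, in this first touching iteration, the chosen label is given to a nonempty \emph{proper} subset of $e$. Indeed, at the start of that iteration all of $e$ is unassigned (by minimality), and no earlier iteration assigns a vertex of $e$; if the first touching iteration assigns a label to all of $e$, then no later iteration can touch $e$ either, so $e$ is never split; and if it assigns a label to some but not all of $e$, then $e$ is split in that very iteration.

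Next I would compute the per-iteration probabilities, fixing an iteration at whose start $e$ is still entirely unassigned and conditioning on the chosen label being $i$. The vertices of $e$ that receive label $i$ are exactly $\{v \in e : x(v,i) \ge \theta\}$; over the uniform random $\theta \in [0,1]$ this is all of $e$ with probability $\min_{v\in e} x(v,i)$ and a nonempty proper subset of $e$ with probability $\max_{v\in e} x(v,i) - \min_{v\in e} x(v,i) = |I(e,i)|$. Averaging over the uniform random $i$, and noting that since all of $e$ is unassigned the set of previously-assigned vertices is irrelevant to how this iteration treats $e$, I get that in such an iteration $e$ becomes split with probability $q_s := \frac{1}{k}\sum_{i=1}^k |I(e,i)|$, becomes entirely labeled (hence never split) with probability $q_f := \frac{1}{k}\sum_{i=1}^k \min_{v\in e} x(v,i)$, and is left untouched with the remaining probability, independently of the history. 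Letting $E_t$ denote the event that $e$ is entirely unassigned at the start of iteration $t$, this gives $\Pr[E_t] = (1-q_s-q_f)^{t-1}$, and hence
$$\Pr[e \text{ is split}] \;=\; \sum_{t \ge 1} \Pr[E_t]\, q_s \;=\; \frac{q_s}{q_s+q_f}.$$

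It then remains to bound this ratio. Since $r(e) \in e$ we have $\max_{v\in e} x(v,i) \ge x(r(e),i)$ for every $i$, so $q_s + q_f = \frac{1}{k}\sum_{i=1}^k \max_{v\in e} x(v,i) \ge \frac{1}{k}\sum_{i=1}^k x(r(e),i) = \frac{1}{k}$, using the constraint $\sum_{i=1}^k x(r(e),i) = 1$ of \CR. Therefore $\Pr[e \text{ is split}] = q_s/(q_s+q_f) \le k\, q_s = \sum_{i=1}^k |I(e,i)|$. Finally, every index $i$ lies in $R(z) := \{j : x(z,j) = \max_{v\in e} x(v,j)\}$ for some $z \in e$ (pick $z$ attaining $\max_{v\in e} x(v,i)$), so $\bigcup_{z\in e} R(z) = \{1,\dots,k\}$, and applying Lemma~\ref{lem:interval-sizes} to each $z \in e$ gives
$$\sum_{i=1}^k |I(e,i)| \;\le\; \sum_{z\in e}\sum_{i\in R(z)} |I(e,i)| \;\le\; \sum_{z\in e} d(e) \;=\; |e|\, d(e) \;\le\; \Delta\, d(e),$$
where the last step uses $d(e) \ge 0$ and $|e| \le \Delta$; this is the desired bound.

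The step I expect to be the main obstacle is making the identity $\Pr[e \text{ is split}] = q_s/(q_s+q_f)$ rigorous: one must argue that, conditioned on $e$ being entirely unassigned at the start of an iteration, the outcome of that iteration for $e$ (split / entirely labeled / untouched) has the fixed distribution $(q_s,\, q_f,\, 1-q_s-q_f)$ regardless of the history of the algorithm. This is precisely where it matters that whether a vertex $v \in e$ receives a label in the iteration is determined by the single comparison $x(v,i) \ge \theta$ with the fresh random pair $(i,\theta)$, and not by which vertices outside $e$ were labeled earlier. Everything else — the per-$\theta$ probabilities, the inequality $q_s + q_f \ge 1/k$, and the covering argument feeding Lemma~\ref{lem:interval-sizes} — is routine.
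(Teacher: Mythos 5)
Your proof is correct and follows essentially the same route as the paper's: you bound the split probability by the ratio of the per-iteration ``cut'' probability $\frac1k\sum_i |I(e,i)| \le \Delta d(e)/k$ (via Lemma~\ref{lem:interval-sizes} and the covering of indices by the sets $R(z)$, $z\in e$) to the ``touch'' probability $\frac1k\sum_i \max_{v\in e} x(v,i) \ge 1/k$. The only difference is cosmetic: you make explicit, via the first-touching-iteration/geometric-series argument, the conditional step that the paper states directly as ``the probability that $e$ is split in iteration $\ell$ is at most $\Pr[X_\ell]/\Pr[Z_\ell]$.''
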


\noindent
Using Lemma~\ref{lem:kt-split-ahmc}, we can complete the proofs of
Theorem~\ref{thm:uml-hypercut} and Theorem~\ref{thm:ml-hypercut} as
follows.

\begin{proofof}{Theorem~\ref{thm:uml-hypercut}}
	Let $(A_1, \cdots, A_k)$ be the partition returned by \KT, and let
	$\partint = \sum_{i = 1}^k h(A_i)$ and $\costint = \sum_{i = 1}^k
	g_i(A_i)$.  As shown in \cite{KleinbergT02}, \KT assigns label $i$
	to $v$ with probability $x(v, i)$. Thus $\Ex[\costint] = \costcr$.

	By Lemma~\ref{lem:amc-distances}, $\partcr = \sum_e w(e) d(e)$.
	Therefore, by Lemma~\ref{lem:kt-split-ahmc}, $\Ex[\partint] \leq
	\Delta \cdotp \partcr$.
\end{proofof}

\begin{proofof}{Theorem~\ref{thm:ml-hypercut}}
	Let $(A_1, \cdots, A_k)$ be the partition returned by \KT, and let
	$\partint = \sum_{i = 1}^k h(A_i)$ and $\costint = \sum_{i = 1}^k
	g_i(A_i)$. Using the argument in the proof of
	Theorem~\ref{thm:monotone-kt-main}, we can show that
	$\Ex[\costint] \leq O(\ln n) \costcr$. Additionally, by
	Lemma~\ref{lem:kt-split-ahmc}, $\Ex[\partint] \leq \Delta \partcr$.
\end{proofof}

\smallskip
\noindent
Now we turn our attention to the proof of
Lemma~\ref{lem:kt-split-ahmc}.

\begin{proofof}{Lemma~\ref{lem:kt-split-ahmc}}
	Consider iteration $\ell$ of \KT, and let $i_{\ell}$ and
	$\theta_{\ell}$ be the label and threshold in iteration $\ell$.
	We say that iteration $\ell$ \emph{cuts} $e$ if $\theta_{\ell} \in
	[\min_{v \in e} x(v, i_{\ell}), \max_{v \in e} x(v, i_{\ell})]$.
	We say that iteration $\ell$ \emph{touches} $e$ if $\theta_{\ell}$
	is in the interval $[0, \max_{v \in e} x(v, i_{\ell})]$. Let
	$X_{\ell}$ and $Z_{\ell}$ be the events that $\ell$ cuts and
	touches $e$ (respectively).  The probability that $e$ is split in
	iteration $\ell$ is at most $\Pr[X_{\ell}] / \Pr[Z_{\ell}]$. We
	have
		$$\Pr[X_{\ell}] \leq {1 \over k} \sum_{i = 1}^k \left(\max_{v
		\in e} x(v, i) - \min_{v \in e} x(v, i)\right) \leq {\Delta
		d(e) \over k}$$
	where the last inequality follows from
	Lemma~\ref{lem:interval-sizes}. Additionally,
		$$\Pr[Z_{\ell}] = {1 \over k} \sum_{i = 1}^k \max_{v \in e}
		x(v, i) \geq {1 \over k}$$
	The last inequality follows from the fact that, for any vertex $w
	\in e$, $\sum_{i = 1}^k x(w, i) = 1$. It follows that the
	probability that $e$ is split in iteration $j$ is at most $\Delta
	d(e)$.
\end{proofof}

\begin{proofof}{Theorem~\ref{thm:ml-symmetric}}
	Let $i_{\ell}$ and $\theta_{\ell}$ be the label and $\theta$ value
	chosen in the $\ell$-th iteration of the \emph{first} while loop
	\SymMLR. For each $i$, let $\script{A}_i = \cup_{\ell:\; i_{\ell}
	= i} \{v \sep x(v, i) \geq \theta_{\ell}\}$. Let $\partballs =
	\sum_{i = 1}^k f(\script{A}_i)$ and $\costballs = \sum_{i = 1}^k
	g_i(\script{A}_i)$. (Note that $\script{A}_i$ is the set $A_i$ at
	the end of the \emph{first} while loop of \SymMLR.)

	Using the argument in the proof of
	Theorem~\ref{thm:monotone-kt-main}, we can show that
		$$\Ex[\costballs] \leq O(\ln n) \costcr$$
	and
		$$\Ex[\partballs] \leq O(\ln n) \partcr$$
	Let $(A'_1, \cdots, A'_k)$ be the partition returned by \SymMLR.
	Let $\partint = \sum_{i = 1}^k f(A'_i)$ and $\costint = \sum_{i =
	1}^k g_i(A'_i)$. By Lemma~\ref{lem:uncross},
		$$\Ex[\partint] \leq \Ex[\partballs] \leq O(\ln n) \partcr$$
	Since each $g_i$ is
	monotone,
		$$\Ex[\costint] \leq \Ex[\costballs] \leq O(\ln n) \costcr.$$
\end{proofof}

\smallskip
\noindent
\textbf{Integrality Gap Example:}
We remark that we can generalize the integrality gap example of
\cite{KleinbergT02} in order to show that the integrality gap of \CR
is at least $\Delta(1 - 1/k)$ for \SubML when $h$ is the hypergraph
separation cost function, even if each $g_i$ is modular.

Consider a $\Delta$-uniform complete hypergraph on $k$ vertices; all
${k \choose \Delta}$ edges are present, and each edge has unit weight.
For each vertex $i$, the cost of assigning label $j$ to $i$ is zero if
$i \neq j$, and infinity otherwise.

It is easy to see that an optimal integral solution picks a label $i$
and assigns label $i$ to all vertices except $i$, and it assigns some
other label to $i$. Thus the integral optimum is ${k - 1 \choose
\Delta - 1}$.  Setting $x(i, j) = 1 / (k - 1)$ for all $i \neq j$
gives us a fractional solution of cost ${k \choose \Delta} / (k - 1)$.

%%%%%%%%%%%%%%%%%%%%%%%%%%%%%%%%%%%%%%%%%%%%%%
\bigskip
\noindent
\textbf{Acknowledgments:} We thank Lisa Fleischer for suggesting that
we contact Zoya Svitkina about \MCSA and thank Zoya for pointing out
her work in \cite{SvitkinaT06} on monotone \MCSA. CC thanks Jan Vondrak
for pointing out the interpretation of the Lov\'asz extension from his
paper \cite{Vondrak09} which was very helpful in thinking about
rounding procedures. AE thanks Sungjin Im and Ben Moseley for 
discussions.
%%%%%%%%%%%%%%%%%%%%%%%%%%%%%%%%%%%%%%%%%%%%%%
\newpage
\bibliographystyle{plain}

%%%%%%%%%%%%%%%%%%%%%%%%%%%%%%%%%%%%%%%%%%%%%%
\newpage
\appendix

\section{Definition of the Lov\'asz extension}
\label{app:lovasz}
Let $f: \{0, 1\}^n \rightarrow \mathbb{R}$ be a function. The Lov\'asz
extension $\hat{f}$ of $f$ is the function $\hat{f}: [0, 1]^n
\rightarrow \mathbb{R}$ defined as follows. Let $\bx$ be a vector in
$[0, 1]^n$. We relabel the vertices as $1, 2, \cdots, n$ so that
$x_1 \geq x_2 \geq \cdots \geq x_n$; for ease of notation,
let $x_0 = 1$ and $x_{n + 1} = 0$. Let $S_i = \{1, 2, \cdots, i\}$.
The value of $\hat{f}$ at $\bx$ is equal to
	$$\hat{f}(\bx) = \sum_{i = 0}^{n} (x_{i + 1} - x_i)
	f(S_i)$$
It is straightforward to verify that $\sum_{i = 0}^{n} (x_{i + 1} -
x_i) f(S_i) = \Ex_{\theta \in [0, 1]}[f(\bx^{\theta})]$.

Another useful extension for a function $f$ is its convex closure,
which is defined as follows. For each set $S \subseteq V$, we let
$\b1_S$ denote the characteristic vector of $S$; that is, the $i$-th
coordinate of $\b1_S$ is equal to $1$ if $i$ is in $S$ and $0$
otherwise. The convex closure $f$ is the function $f^-: [0,
1]^n \rightarrow \mathbb{R}$ where $f^-(\bx) = \min\{\sum_{S
  \subseteq V} \lambda_S f(S) \;:\; \sum_{S \subseteq V} \lambda_S
\b1_S = \bx, \sum_{S \subseteq V} \lambda_S = 1, \lambda_S \geq
0\}$. The Lov\'asz extension $\hat{f}$ of $f$ is equal to the convex
closure $f^-$ of $f$ iff $f$ is submodular; see for instance
\cite{Dughmi09}.  Using this result, we can show that \CR can be
solved in time that is polynomial in $n$ and $\log\left(\max_{i, S
    \subseteq V} f_i(S)\right)$ via the ellipsoid method.

Since $\hat{f}$ is equal to $\hat{f}^-$, we can write \CR as follows.

\begin{center}
\begin{boxedminipage}{0.5\textwidth}
\vspace{-0.2in}
\begin{align*}
& \textbf{\CR-Primal} &\\
\min \qquad & \sum_{i = 1}^k \sum_{S \subseteq V} \lambda(S, i)
f_i(S) &\\
& \sum_{S: v \in S} \lambda(S, i) = x(v, i) & \forall v, i\\
& \sum_{S \subseteq V} \lambda(S, i) = 1 & \forall i\\
& \sum_{i = 1}^k x(v, i) = 1 & \forall v\\
& \lambda(S, i) \geq 0 & \forall S, i\\
& x(v,i) \geq 0  & \forall v, i
\end{align*}
\end{boxedminipage}
\end{center}

\noindent
\textbf{\CR-Primal} is an LP with exponentially many variables and
polynomially many constraints. Its dual \textbf{\CR-Dual} has
polynomially many variables and exponentially many constraints.

\begin{center}
\begin{boxedminipage}{0.5\textwidth}
\vspace{-0.2in}
\begin{align*}
& \textbf{\CR-Dual} &\\
\max \qquad & \sum_{i = 1}^k \beta_i + \sum_{v \in V} \gamma_v &\\
& \sum_{v \in S} \alpha(v, i) + \beta_i \leq f_i(S) & \forall S, i\\
& \gamma_v \leq \alpha(v, i) & \forall v, i
\end{align*}
\end{boxedminipage}
\end{center}

\noindent
\textbf{Separation oracle for \CR-Dual.} Fix an assignment of values
to the variables $\alpha, \beta, \gamma$ in \textbf{\CR-Dual}.  It is
easy to check in polynomial time whether $\gamma_v \le \alpha(v,i)$
for all $v,i$ since there are only $nk$ such constraints.  Let $g_i(S)
= \sum_{v \in S} \alpha(v, i) + \beta_i$. Note that $g_i$ is a
\emph{modular} function and therefore $f_i - g_i$ is a submodular
function. Using a polynomial time algorithm for submodular function
minimization, for a given $i$, we can check whether $f_i(S) - g_i(S)
\geq 0$ for all sets $S \subseteq V$.

Therefore we can solve \textbf{\CR-Dual} in time that is polynomial in
$n$ and $\log\left(\max_{i, S\subseteq V} f_i(S)\right)$ using the
ellipsoid method. Using standard techniques, we can also construct an
optimal solution for the primal; we omit the details here.

\section{Omitted proofs from Section~\ref{sec:monotone}}
\label{app:monotone}

\begin{proofof}{Proposition~\ref{prop:monotone-greedy-cost}}
	Consider a terminal $i$. For any $\theta$, $A(i, \theta)$ is a
	subset of $\{v \sep v \in V, x(v, i) \geq \theta\}$. Since $f_i$
	is monotone, $f_i(A(i, \theta)) \leq f_i(\{v \sep v \in V, x(v, i)
	\geq \theta\})$. Therefore
	\begin{eqnarray*}
		\Ex_{\theta \in [0, 1]}[A(i, \theta)] &=& \int_0^1 f_i(A(i,
		\theta)) d\theta\\
		&\leq& \int_0^1 f_i(\{v \sep v \in V, x(v, i) \geq \theta\})
		d\theta\\
		&=& \hat{f}_i(\bx_i)
	\end{eqnarray*}
	Finally,
		$$\Ex_{i, \theta}[f_i(A(i, \theta))] = {1 \over k} \sum_{i =
		1}^k \Ex_{\theta \in [0, 1]}[f_i(A(i, \theta)] \leq {1 \over
		k} \optcr$$
\end{proofof}

\begin{proofof}{Proposition~\ref{prop:monotone-greedy-size}}
	Note that
		$$\Ex_{i, \theta}[|A(i, \theta)|] = {1 \over k} \sum_{i = 1}^k
		\Ex_{\theta \in [0, 1]}[|A(i, \theta)|]$$
	We can prove by induction on the size of $U$ that $\sum_{i = 1}^k
	\Ex_{\theta \in [0, 1]}[|A(i, \theta)|]$ is equal to $|U|$.

	If $U$ is empty, the claim trivially holds.  Therefore we
	may assume that $U$ contains at least one element $z$. Let $U' = U
	- \{z\}$, and let $\tilde{x}'$ be the restriction of $\tilde{x}$
	to $U'$; more precisely, $\tilde{x}'(v, i) = \tilde{x}(v, i)$ for
	all $v \in U'$ and all terminals $i$. Let $A'(i, \theta) = \{v
	\;|\; v \in U', \tilde{x}'(v, i) \geq \theta\}$. Note that
	$A'(i, \theta)$ is equal to $A(i, \theta) - \{z\}$ if $\theta$ is
	smaller than $x(z, i)$, and $A'(i, \theta)$ is equal to
	$A(i, \theta)$ otherwise. Therefore
	\begin{eqnarray*}
		\sum_{i = 1}^k \Ex_{\theta \in [0, 1]}[|A(i, \theta)|] &=&
		\sum_{i = 1}^k \int_0^1 |A(i, \theta)| d\theta\\
		&=& \sum_{i = 1}^k \int_0^{x(z, i)} |A'(i, \theta) \cup \{z\}|
		d\theta + \sum_{i = 1}^k \int_{x(z, i)}^1 |A'(i, \theta)|
		d\theta\\
		&=& \sum_{i = 1}^k \int_0^1 |A'(i, \theta)| d\theta + \sum_{i
		= 1}^k x(z, i)\\
		&=& \sum_{i = 1}^k \int_0^1 |A'(i, \theta)| d\theta + 1\\
		&=& |U'|+ 1 \qquad\qquad \mbox{(By induction)}\\
		&=& |U|
	\end{eqnarray*}
\end{proofof}

\section{Omitted proofs from Section~\ref{sec:smp}}
\label{app:smp}

\begin{theorem}
	The integrality gap of \CR for \SymSubMP is at most $2(1-1/k)$.
\end{theorem}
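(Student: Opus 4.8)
The plan is to refine the $\theta$-rounding argument already used for the $2$-approximation so that the leftover (uncovered) set is not charged as a fresh $(k{+}1)$-st part but is instead folded into one of the existing $k$ parts, with that part chosen uniformly at random; the averaging over which part absorbs the leftover is exactly what yields the extra $(1-1/k)$ factor, mirroring the standard savings in multiway-cut style roundings.

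Concretely, I would start by letting $\bx$ be an optimal fractional solution to \CR for the given \SymSubMP instance and apply $\theta$-rounding with a single uniformly random $\theta\in[0,1]$ to obtain sets $A(1,\theta),\ldots,A(k,\theta)$. Using $x(s_i,i)=1$ and $x(s_j,i)=0$ for $j\neq i$ (from the constraints of \SubMPRel), $s_i$ lies in $A(i,\theta)$ and in no other $A(j,\theta)$, and by the defining property of the Lov\'asz extension $\sum_i\Ex_\theta[f(A(i,\theta))]=\sum_i\hat f(\bx_i)=\optcr$. Next I would apply Lemma~\ref{lem:uncross} to get mutually disjoint $A'_1,\ldots,A'_k$ with $A'_i\subseteq A(i,\theta)$, $\cup_iA'_i=\cup_iA(i,\theta)$, and $\sum_if(A'_i)\le\sum_if(A(i,\theta))$; one checks that each uncrossing step keeps $s_i\in A'_i$ (since $s_i$ never lies in any $A'_j$ with $j\neq i$), so in particular the leftover set $V':=V-\cup_iA'_i$ contains no terminal.

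The new ingredient is the allocation of $V'$: pick an index $i^*$ uniformly at random, independently of $\theta$, and set $B_{i^*}:=A'_{i^*}\cup V'$ and $B_j:=A'_j$ for $j\neq i^*$. This $(B_1,\ldots,B_k)$ is a feasible partition since $s_j\in A'_j=B_j$ for $j\neq i^*$ and $s_{i^*}\in A'_{i^*}\subseteq B_{i^*}$. Because the $A'_j$ and $V'$ partition $V$, we have $B_{i^*}=V-\cup_{j\neq i^*}A'_j$, so by symmetry of $f$ and then sub-additivity, $f(B_{i^*})=f(\cup_{j\neq i^*}A'_j)\le\sum_{j\neq i^*}f(A'_j)$. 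Hence $\sum_jf(B_j)=f(B_{i^*})+\sum_{j\neq i^*}f(A'_j)\le 2\sum_{j\neq i^*}f(A'_j)$. Taking expectation over $i^*$ gives $\Ex_{i^*}[\sum_jf(B_j)]\le 2(1-1/k)\sum_jf(A'_j)\le 2(1-1/k)\sum_jf(A(j,\theta))$, and then taking expectation over $\theta$ yields expected cost at most $2(1-1/k)\optcr$, which bounds the integrality gap as claimed.

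The only place the argument needs care — and the one I'd expect to be the main obstacle to write cleanly rather than conceptually hard — is the feasibility bookkeeping: verifying that the uncrossing of Lemma~\ref{lem:uncross} never evicts $s_i$ from $A'_i$, and hence that $V'$ is terminal-free so that merging it into $B_{i^*}$ does not destroy $s_{i^*}\in B_{i^*}$. All of this rests on the fact (already exploited in the $2$-bound proof) that $s_i$ belongs to $A(i,\theta)$ and to no other $A(j,\theta)$; the genuinely new step is merely the uniform choice of the absorbing index $i^*$, which is what upgrades the bound from $2$ to $2(1-1/k)$.
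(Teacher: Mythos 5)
Your proof is correct, and its skeleton (single random $\theta$-rounding, the uncrossing of Lemma~\ref{lem:uncross}, then symmetry plus sub-additivity applied to the complement) is the same as the paper's; where you genuinely diverge is in how the $(1-1/k)$ saving is extracted. The paper's refinement (Appendix~\ref{app:smp}) is deterministic on the fractional side: it relabels so that $\hat{f}(\bx_k)=\max_i \hat{f}(\bx_i)\ge \optcr/k$, performs $\theta$-rounding and uncrossing only on the first $k-1$ coordinates, and assigns the entire leftover $V-\cup_{i<k}A'_i$ to terminal $k$; the gain comes from never paying the largest fractional component, so the rounded cost is charged against $\sum_{i<k}\hat f(\bx_i)\le(1-1/k)\optcr$. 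You instead round and uncross all $k$ coordinates and fold the terminal-free leftover $V'$ into a uniformly random part $i^*$, bounding $f(A'_{i^*}\cup V')=f(\cup_{j\ne i^*}A'_j)\le\sum_{j\ne i^*}f(A'_j)$ and gaining the factor by averaging over $i^*$, since $\Ex_{i^*}\bigl[\sum_{j\ne i^*}f(A'_j)\bigr]=(1-1/k)\sum_jf(A'_j)$. Both arguments are valid and give the same bound; your version keeps the rounding oblivious to the fractional values (and can be derandomized by trying all $k$ choices of $i^*$), while the paper's version exploits the fractional solution to decide in advance which part absorbs the leftover and so needs only $k-1$ roundings and a smaller uncrossing. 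Your feasibility bookkeeping is also right: since $x(s_j,i)=0$ for $i\ne j$, almost surely $s_j$ lies only in $A(j,\theta)$, the uncrossing never evicts a terminal from its own set, hence $V'$ is terminal-free and the merged partition remains feasible.
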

\begin{proof}
	Let $\bx$ be an optimal solution to \CR for a given instance of
	\SymSubMP. Without loss of generality, $\hat{f}(\bx_k) = \max_i
	\hat{f}(\bx_i)$.  Let $A(1, \theta),\ldots, A(k - 1, \theta)$ be
	sets obtained by applying $\theta$-rounding with respect to the
	first $k - 1$ terminals. By the property of $\theta$-rounding, we
	observe that $\sum_{i = 1}^{k - 1} \Ex[f(A(i, \theta))] = \sum_{i
	= 1}^{k - 1} \hat{f}(\bx_i) \leq (1 - 1/k) \optcr$.  The last
	inequality follows from the fact that $\hat{f}(\bx_k) \geq \optcr
	/ k$.  Note that $s_i$ belongs only to $A(i, \theta)$.  We now
	apply Lemma~\ref{lem:uncross} to $A(1, \theta),\ldots,A(k - 1,
	\theta)$ to obtain $A'_1,\ldots,A'_{k - 1}$. We have $\sum_{i =
	1}^{k - 1} f(A'_i) \le \sum_{i = 1}^{k - 1} f(A(i, \theta))$. Let
	$A'_k = V - \cup_{1 \leq i \leq k - 1} A'_i$.  By symmetry of $f$,
	$f(A'_k) = f(\cup_{1 \leq i \leq k - 1} A'_i)$ and, since $f$ is
	sub-additive, $f(A'_k) = f(\cup_{1 \leq i \leq k - 1} A'_i) \le
	\sum_{i = 1}^{k - 1} f(A'_i) \le \sum_{i = 1}^{k - 1} f(A(i,
	\theta))$. We allocate $A'_k$ to index $k$ and the total cost of
	the allocation is upper bounded by $\sum_i f(A'_i) \le 2\sum_{i =
	1}^{k - 1} f(A(i, \theta))$.  Thus the expected cost of the
	allocation is at most $2(1 - 1/k)\optcr$. The allocation is
	feasible since, for each $i \neq k$, $s_i$ belongs only to $A(i,
	\theta)$ and hence to $A'_i$, and $s_k$ belongs to $A'_k$.
\end{proof}

\section{Omitted proofs from Subsection~\ref{subsec:hmp}}
\label{app:hmp}

\begin{algo}
\underline{\textbf{\SymMPR}}
\\\> let $x$ be a feasible solution to \SubMPRel
\\\> relabel the terminals so that $\hat{f}(\bx_k) = \max_i
\hat{f}(\bx_i)$
\\\> pick $\theta \in [0, 1]$ uniformly at random
\\\> $A(i, \theta) \leftarrow \{v \;|\; x(v, i) \geq \theta\}$ for
each $i$ $(1 \leq i \leq k - 1)$
\\\> \Comment{uncross $A(1, \theta), \cdots, A(k - 1, \theta)$}
\\\> $A'_i \leftarrow A(i, \theta)$ for each $i$ $(1 \leq i \leq k - 1)$
\\\> while there exist $i \neq j$ such that $A'_i \cap A'_j \neq
\emptyset$
\\\>\> if $(f(A'_i) + f(A'_j - A'_i) \leq f(A'_i) + f(A'_j)$
\\\>\>\> $A'_j \leftarrow A'_j - A'_i$
\\\>\> else
\\\>\>\> $A'_i \leftarrow A'_i - A'_j$
\\\> return $(A'_1, \cdots, A'_{k - 1}, V - (A'_1 \cup \cdots
A'_{k - 1}))$
\end{algo}

\begin{theorem} \label{thm:hmp}
	\SymMPR achieves an $(1.5 - 1/k)$-approximation for \HMP.
\end{theorem}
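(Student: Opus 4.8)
The plan is to mirror the proof of Theorem~\ref{thm:hmp-main-thm}, modifying it to drop the contribution of the most expensive label, exactly the way the $2$-bound for \SymSubMP was improved to $2(1-1/k)$ in Appendix~\ref{app:smp}. After relabelling so that $\hat{f}(\bx_k)=\max_i\hat{f}(\bx_i)$, \SymMPR only $\theta$-rounds labels $1,\dots,k-1$, uncrosses the resulting sets, and dumps everything else into the last part $A'_k = V-(A'_1\cup\cdots\cup A'_{k-1})$. By sub-additivity of $f$ and Lemma~\ref{lem:uncross} (properties (iii) and (iv)), the cost of the returned partition is at most $\sum_{i=1}^{k-1} f(A(i,\theta)) + f(U)$, where $U := V-(A(1,\theta)\cup\cdots\cup A(k-1,\theta))$ is the leftover set; note $U$ is unchanged by the uncrossing since $\bigcup_{i\le k-1} A'_i = \bigcup_{i\le k-1} A(i,\theta)$.

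First I would bound the two terms in expectation. For the sum, $\theta$-rounding gives $\Ex\big[\sum_{i=1}^{k-1} f(A(i,\theta))\big] = \sum_{i=1}^{k-1}\hat{f}(\bx_i) = \optcr - \hat{f}(\bx_k) \le (1-1/k)\optcr$, using $\hat{f}(\bx_k)\ge\optcr/k$. The heart of the argument is the bound $\Ex[f(U)]\le\optcr/2$, which I would obtain from the per-hyperedge inequality $\Pr[e\in\delta(U)]\le d(e)/2$ together with Lemma~\ref{lem:sm-dist}. To prove this inequality, let $j^*\in\{1,\dots,k-1\}$ be the index for which $I(e,j^*)$ has the rightmost endpoint among $I(e,1),\dots,I(e,k-1)$. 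If $e\in\delta(U)$, then on the one hand some vertex of $e$ lies in $U$, so some $v\in e$ has $x(v,i)<\theta$ for every $i\le k-1$, giving $\theta>\min_{v\in e}x(v,j^*)$; on the other hand some vertex of $e$ lies in $\bigcup_{i\le k-1}A(i,\theta)$, so $\theta\le\max_{i\le k-1}\max_{v\in e}x(v,i)=\max_{v\in e}x(v,j^*)$. Hence $\theta\in I(e,j^*)$ and $\Pr[e\in\delta(U)]\le|I(e,j^*)|=d(e,j^*)\le d(e)/2$, the last step by Lemma~\ref{lem:interval-distance}. Summing over hyperedges and using Lemma~\ref{lem:sm-dist} gives $\Ex[f(U)]\le\optcr/2$, so the total expected cost is at most $(1-1/k)\optcr+\optcr/2=(1.5-1/k)\optcr$.

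It then remains to check feasibility and book-keeping, which are routine: for $i\le k-1$ the terminal $s_i$ satisfies $x(s_i,i)=1$ and $x(s_i,j)=0$ for $j\neq i$, so $s_i$ lies only in $A(i,\theta)$ and survives uncrossing inside $A'_i$; and $s_k$ lies in none of $A(1,\theta),\dots,A(k-1,\theta)$, so $s_k\in A'_k=U$. I expect the one genuinely new step to be the leftover bound $\Pr[e\in\delta(U)]\le d(e)/2$: it is the analogue of Lemma~\ref{lem:leftover-set}, but since only $k-1$ of the rounded sets are used one has to verify that the ``rightmost-interval'' argument still confines $\theta$ to a single interval $I(e,j^*)$ with $j^*\le k-1$; the rest is a direct adaptation of the $1.5$-approximation analysis of Theorem~\ref{thm:hmp-main-thm}.
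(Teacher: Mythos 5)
Your proposal is correct and follows essentially the same route as the paper's proof in Appendix~\ref{app:hmp}: drop the label of maximum fractional cost (giving the $(1-1/k)\optcr$ bound on the $k-1$ rounded sets), and bound the leftover part by $\optcr/2$ via the rightmost-interval argument (your per-hyperedge bound $\Pr[e\in\delta(U)]\le d(e,j^*)\le d(e)/2$ is exactly the paper's Lemma~\ref{lem:leftover-set2} combined with Lemma~\ref{lem:interval-distance}, just derived a bit more directly). No gaps.
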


\begin{lemma} \label{lem:leftover-set2}
	Let $i^*$ be the index such that the interval $I(e, i^*)$ has the
	rightmost ending point among the intervals $I(e, i)$, where $1
	\leq i \leq k - 1$. More precisely, $I(e, i^*)$ is an interval
	such that $\max_{v \in e} x(v, i^*) = \max_{1 \leq i \leq k - 1}
	\max_{v \in e} x(v, i)$; if there are several such intervals, we
	choose one arbitrarily. Let $Z_e$ be an indicator random variable
	equal to $1$ iff $e \in \delta(V - (A(1, \theta) \cup \cdots \cup
	A(k - 1, \theta)))$. Then $\Ex[Z_e] \leq d(e, i^*)$.
\end{lemma}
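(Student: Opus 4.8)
The plan is to mirror the proof of Lemma~\ref{lem:leftover-set} almost verbatim, the only change being that the $\theta$-rounding in this variant is performed over the terminals $1,\ldots,k-1$ only, so every quantifier ``for all $i$'' and ``there exists $\ell$'' is now restricted to $\{1,\ldots,k-1\}$. Fix a hyperedge $e$ and let $\theta$ be uniform on $[0,1]$. I would first argue that $Z_e = 1$ forces two conditions on $\theta$: $(1)$ $\theta \ge \min_{v \in e} x(v,i)$ for every $i \le k-1$, and $(2)$ there is some $\ell \le k-1$ with $\theta \in I(e,\ell)$.

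For $(1)$: if some $i \le k-1$ has $\theta < \min_{v \in e} x(v,i)$, then $x(v,i) \ge \theta$ for all $v \in e$, so $e \subseteq A(i,\theta)$ and hence $e$ is disjoint from the leftover set $V - (A(1,\theta) \cup \cdots \cup A(k-1,\theta))$; thus $Z_e = 0$. For $(2)$: if $\theta \notin I(e,i)$ for every $i \le k-1$, then for each such $i$ either $\theta \le \min_{v \in e} x(v,i)$ (and $e \subseteq A(i,\theta)$) or $\theta > \max_{v \in e} x(v,i)$ (and $A(i,\theta)$ contains no vertex of $e$); consequently either $e$ lies entirely inside some $A(i,\theta)$ or $e$ lies entirely inside the leftover set, and in either case $Z_e = 0$.

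Combining $(1)$ and $(2)$ pins $\theta$ down to $I(e,i^*)$: by $(2)$ and the choice of $i^*$ as the index with the rightmost right endpoint among $I(e,1),\ldots,I(e,k-1)$ we get $\theta \le \max_{v \in e} x(v,\ell) \le \max_{v \in e} x(v,i^*)$, while $(1)$ applied with $i = i^*$ gives $\theta \ge \min_{v \in e} x(v,i^*)$. Hence $Z_e = 1$ implies $\theta \in I(e,i^*)$, so $\Ex[Z_e] = \Pr[\theta \in I(e,i^*)] \le |I(e,i^*)| = d(e,i^*)$, using that $\theta$ is uniform on $[0,1]$ and $I(e,i^*) \subseteq [0,1]$.

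There is essentially no obstacle here: the argument is a routine specialization of Lemma~\ref{lem:leftover-set}, and the only point needing a moment's care is to check that excluding terminal $k$ from the rounding does not break the two structural dichotomies above — it does not, since those arguments only ever use terminals that actually participate in the $\theta$-rounding. (The excluded terminal $s_k$ matters only for feasibility of the returned partition, not for this bound, since $x(s_k,i) = 0$ for $i \le k-1$ forces $s_k$ into the leftover set.)
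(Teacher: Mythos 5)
Your proof is correct and follows essentially the same route as the paper's: establish that $Z_e = 1$ forces (1) $\theta \ge \min_{v \in e} x(v,i)$ for all $i \le k-1$ and (2) $\theta \in I(e,\ell)$ for some $\ell \le k-1$, conclude $\theta \in I(e,i^*)$, and bound the probability by $|I(e,i^*)| = d(e,i^*)$. (Only a cosmetic nit: the step should read $\Ex[Z_e] = \Pr[Z_e = 1] \le \Pr[\theta \in I(e,i^*)]$ rather than an equality, since the implication goes one way.)
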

\begin{proof}
	Note that $Z_e$ is equal to $1$ only if $(1)$ for any terminal $i
	\neq k$, $\theta$ is at least $\min_{v \in e} x(v, i)$ and $(2)$
	there exists a terminal $\ell \neq k$ such that $\theta \in I(e,
	\ell)$. If there exists a terminal $i \neq k$ such that $\theta$
	is smaller than $\min_{v \in e} x(v, i)$, all of the vertices of
	$e$ are in $A(i, \theta)$. If there does not exist a terminal
	$\ell \neq k$ such that $\theta \in I(e, \ell)$, either all of the
	vertices of $e$ are in $A(i, \theta)$ for some $i \neq k$ or all
	of the vertices of $e$ are in $V - (A(1, \theta) \cup \cdots \cup
	A(k - 1, \theta))$.  Finally, we note that $(1)$ and $(2)$ imply
	that $\theta$ is in $I(e, i^*)$: by $(2)$, $\theta$ is at most
	$\max_{v \in e} x(v, i^*)$ and, by $(1)$, $\theta$ is at least
	$\min_{v \in e} x(v, i^*)$.
\end{proof}

\begin{proofof}{Theorem~\ref{thm:hmp}}
	It follows from Lemma~\ref{lem:uncross} that the cost of the
	partition returned by \SymMPR is at most
		$$\sum_{i = 1}^{k - 1} f(A'_i) + f(V - (A'_1 \cup \cdots \cup
		A'_{k - 1}))
		\leq \sum_{i = 1}^{k - 1} f(A(i, \theta)) + f(V - (A(1, \theta) \cup
		\cdots \cup A(k - 1, \theta)))$$
	By Lemma~\ref{lem:leftover-set2} and
	Lemma~\ref{lem:interval-distance},
		$$\Ex[f(V - (A(1, \theta) \cup \cdots \cup A(k - 1, \theta)))]
		\leq \sum_e {w(e)d(e) \over 2} = {\optcr \over 2}$$
	Since $\hat{f}(\bx_k) \geq \optcr / k$, we have
		$$\Ex\Bigg[\sum_{i = 1}^{k - 1} f(A(i, \theta))\Bigg] \leq
		\left(1 - {1 \over k}\right) \optcr$$
	Therefore the expected cost of the allocation is at most $(1.5 -
	1/k)\optcr$.
\end{proofof}

\section{Improved integrality gap bound for \AHMC}
\label{app:hmc-improved-gap}
The main goal of this section is to establish a connection between the
distance LP for \nodeMC considered in \cite{GargVY04} and \CR for
\AHMC. This connection gives us the following theorem.

\begin{theorem}
	The integrality gaph of \AHMC is at most $2(1 - 1/k)$ for instances
	of \AHMC.
\end{theorem}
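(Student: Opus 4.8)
The plan is to relate \CR for \AHMC to the distance-based linear program of Garg, Vazirani and Yannakakis~\cite{GargVY04} for \nodeMC, and then to invoke the fact that the latter LP admits a rounding losing only a factor $2(1-1/k)$. Recall the reduction of \cite{OkumotoFN10}: given the hypergraph $\mathcal{G}=(V,\mathcal{E})$ with terminals $S$, one builds a graph $H$ on vertex set $V \cup \{z_e : e \in \mathcal{E}\}$ in which $z_e$ is adjacent to every vertex of $e$, the original vertices of $V$ are unremovable, and $z_e$ has weight $w(e)$; removing a set of $z_e$-nodes disconnects the terminals in $H$ iff the corresponding hyperedges form a multiway cut of $\mathcal{G}$. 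Specializing the GVY distance LP to this instance (and forcing the unremovable vertices to have length $0$, which is without loss) gives the following ``distance LP for \AHMC'': assign $y_e \ge 0$ to each hyperedge so that $\sum_{m=1}^t y_{e_m} \ge 1$ for every hyperpath $s_i = v_0, e_1, v_1, \ldots, e_t, v_t = s_j$ joining two distinct terminals (where $v_{m-1},v_m \in e_m$), minimizing $\sum_e w(e) y_e$; this is a valid relaxation since an integral multiway cut $F \subseteq \mathcal{E}$ yields the feasible $0/1$ solution $y_e = 1$ for $e \in F$ and $0$ otherwise.

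The main step is to convert an optimal solution $x$ of \CR for \AHMC into a feasible solution of this distance LP of no larger cost. The natural choice is $y_e := d(e)$, with $d(e) = \sum_{i=1}^k\bigl(x(r(e),i) - \min_{v\in e} x(v,i)\bigr)$ as in Section~\ref{subsec:hmc}. Its cost is $\sum_e w(e) d(e) = \optcr$ by Lemma~\ref{lem:amc-distances}, so it remains to check feasibility. Fix a hyperpath $s_i = v_0, e_1, \ldots, e_t, v_t = s_j$ with $i \neq j$ and look at coordinate $i$. Since $x(s_i,i)=1$ and $x(s_j,i)=0$, and since consecutive vertices $v_{m-1},v_m$ both lie in $e_m$, so that $|x(v_{m-1},i)-x(v_m,i)| \le \max_{v\in e_m}x(v,i) - \min_{v\in e_m}x(v,i) = |I(e_m,i)|$, a telescoping argument together with the triangle inequality gives $\sum_{m=1}^t |I(e_m,i)| \ge |x(v_0,i)-x(v_t,i)| = 1$. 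By Corollary~\ref{cor:interval-size}, $|I(e_m,i)| \le d(e_m)$ for each $m$, hence $\sum_m y_{e_m} = \sum_m d(e_m) \ge 1$, as required.

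Putting the pieces together: the GVY distance LP for the reduced \nodeMC instance has optimum at most $\optcr$; the $2(1-1/k)$-approximate rounding of \cite{GargVY04} produces from it a node multiway cut of weight at most $2(1-1/k)\,\optcr$ removing only $z_e$-nodes (the unremovable vertices carry length $0$, so region growing never selects them); and translating this cut back through the reduction of \cite{OkumotoFN10} yields a hypergraph multiway cut of weight at most $2(1-1/k)\,\optcr$. Since \CR is a relaxation, this gives $\opt \le 2(1-1/k)\,\optcr$, i.e., the integrality gap of \CR for \AHMC is at most $2(1-1/k)$.

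I expect the main obstacle to be bookkeeping rather than depth. One has to (i) state the GVY distance LP precisely and verify that the \cite{OkumotoFN10} reduction makes it equivalent to the hypergraph distance LP above, in particular that GVY's rounding applied to the reduced instance never removes an original vertex; and (ii) deal with the fact that $d(e)$ depends on the arbitrary choice of representative $r(e)$. Point (ii) is exactly where Corollary~\ref{cor:interval-size} is essential: the width $|I(e,i)|$ in the coordinate that actually matters along the hyperpath need not be controlled by the representative-based quantity $d(e,i)$, but it is always at most $d(e)$, which is precisely what makes $y_e = d(e)$ feasible while still costing only $\optcr$.
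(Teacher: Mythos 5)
Your proposal is correct and follows essentially the same route as the paper: reduce \AHMC to \nodeMC via the construction of \cite{OkumotoFN10}, map the \CR solution to a feasible solution of the \cite{GargVY04} distance LP by assigning each hyperedge-node the value $d(e)=1-\sum_{i}\min_{v\in e}x(v,i)$ (cost exactly $\optcr$ by Lemma~\ref{lem:amc-distances}), and invoke the $2(1-1/k)$ integrality gap of that LP. The only difference is in verifying the path constraints: you telescope a single coordinate and apply Corollary~\ref{cor:interval-size}, while the paper bounds each edge's contribution by half the $\ell_1$ distance between the endpoints' simplex points and telescopes that; both checks are valid and of comparable effort.
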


\noindent
The reader may wonder whether we can prove the above theorem directly
without recourse to the result from \cite{GargVY04}; we believe that it
can be done but there are some technical hurdles that we plan to 
address in a future version of the paper.
As we already noted, \AHMC and \nodeMC are approximation equivalent.
Okumoto \etal \cite{OkumotoFN10} gave an approximation-preserving
reduction from \AHMC to \nodeMC.  The reduction maps an instance of
\AHMC to an instance of \nodeMC as follows. Let $G = (V, E)$ be an
instance of \AHMC, namely a hypergraph with weights on the edges and
$k$ distinguished vertices which we call terminals. We construct a
bipartite graph $H$ as follows.  We add all of the vertices of $G$ to
$H$, and we give them infinite weight. For each hyperedge $e$ of $G$,
we add a vertex to $H$ of weight $w_e$, and we connect it to all of
vertices of $V$ that are contained in the hyperedge. The terminals of
$H$ are the terminals of $G$, and it is straightforward to verify that
a multiway cut in $G$ corresponds to a node multiway cut in $H$ of the
same weight and vice-versa.

We remark that there is an approximation-preserving reduction from
\nodeMC to \AHMC as well. Let $G = (V, E)$ be an instance of \nodeMC,
namely a graph with weights on the vertices and $k$ distinguished
vertices called terminals. We may assume without loss of generality
that the terminals form an independent set of $G$. We subdivide each
edge of $G$, except the edges incident to the terminals. Now we can
view the resulting graph $G'$ as a bipartite graph with the terminals
and the subdividing vertices on the left, and all other vertices on
the right. We construct a hypergraph $H$ as follows: the vertices of
$H$ are the left vertices of $G'$ and, for each vertex $v$ on the
right, $H$ has a hyperedge consisting of all neighbors of $v$ (in the
subdivided graph $G'$). The weight of the hyperedge corresponding to
$v$ is $w(v)$.  The terminals of $H$ are the terminals of $G$, and it
is straightforward to verify that a node multiway cut in $G$
corresponds to a multiway cut in $H$ of the same weight and vice-versa.

Consider an instance $G$ of \AHMC, and let $\bx$ be a feasible solution
to \CR for this instance.  Using the first reduction, we map an
instance $G$ of \AHMC to the instance $H$ of \nodeMC described above.
In the following, we show that we can map the solution $\bx$ to a
solution $\bd$ to the distance LP relaxation for $H$. The distance LP
has a variable $d_v$ for each non-terminal $v$ with the interpretation
that $d_v$ is $1$ if $v$ is in the node multiway cut. Let
$\script{P}_{s_i, s_j}$ denote the set of all paths of $H$ from $s_i$
to $s_j$.

\begin{center}
\begin{boxedminipage}{0.5\textwidth}
\begin{align*}
\vspace{-0.2in}
(\textsc{Distance-LP})\qquad \min& \;\; \sum_v d_v w_v\\
& \sum_{v \in p} d_v &\geq 1 & \qquad \forall i \neq j, \forall p \in
\script{P}_{s_i, s_j}\\
& d_v &\geq 0 & \qquad \forall v \in V(H) - \{s_1, \cdots, s_k\}
\end{align*}
\end{boxedminipage}
\end{center}

\noindent
We map the solution $\bx$ to \CR to a solution $\bd$ to
\textsc{Distance-LP} as follows. For each vertex
$v \in V$, we set $d_v = 0$ (recall that all the vertices in $V$ have
infinite weight). Let $z$ be a vertex of $H$ that corresponds to the
hyperedge $e$ of $G$. Let $u$ be the representative of $e$, and let
	$$d_z = \sum_{i = 1}^k \left(x(u, i) - \min_{v \in e} x(v,
	i)\right) = 1 - \sum_{i = 1}^k \min_{v \in e} x(v, i)$$
Now consider a path $p$ of $H$ between two terminals $s_a$ and $s_b$.
Let $p = v_0 - z_1 - v_1 - z_2 - \cdots - z_{\ell} - v_{\ell}$, where
$z_j$ is in $V(H) - V$, $v_j$ is in $V$, $v_0 = s_a$, and $v_{\ell} =
s_b$. Let $e_j$ be the hyperedge corresponding to $z_j$; node that
$e_j$ contains $v_{j - 1}$ and $v_j$.  For each vertex $v \in V$, let
$\overline{x}(v) = (x(v, 1), \cdots, x(v, k))$ denote the point on the
$k$-dimensional simplex to which $v$ is mapped by the solution $\bx$,
and let $||\cdot||_1$ denote the $\ell_1$ norm of a vector.

\begin{eqnarray*}
	\sum_{j = 1}^{\ell} d_{z_j} &=& \sum_{j = 1}^{\ell} \left( 1 -
	\sum_{i = 1}^k \min_{v \in e_j} x(v, i)\right)\\
	&\geq& \sum_{j = 1}^{\ell} \left( 1 - \sum_{i = 1}^k \min\{x(v_{j - 1}, i),
	x(v_j, i)\}\right) \qquad \mbox{($e_j$ contains
	$v_{j - 1}$ and $v_j$)}\\
	&=& \sum_{j = 1}^{\ell} \sum_{i = 1}^k \left(x(v_{j - 1}, i) -
	\min\{x(v_{j - 1}, i), x(v_j, i)\}\right)\\
	&=& \sum_{j = 1}^{\ell} \sum_{i = 1}^k \max\{0, x(v_{j - 1}, i)\ -
	x(v_j, i)\}\\
	&=& {1 \over 2} \sum_{j = 1}^{\ell} \sum_{i = 1}^k |x(v_{j - 1},
	i) - x(v_j, i)|\\
	&=& {1 \over 2} ||\overline{x}(v_{j - 1}) -
	\overline{x}(v_j)||_1\\
	&\geq& {1 \over 2} ||\overline{x}(s_a)  - \overline{x}(s_b)||_1 = 1
\end{eqnarray*}
Therefore $\bd$ is a feasible solution to \textsc{Distance-LP}. Garg,
Vazirany, and Yannakakis \cite{GargVY04} showed that the integrality
gap of \textsc{Distance-LP} is at most $2(1 - 1/k)$. Therefore the
integrality gap of \CR is at most $2(1 - 1/k)$ as well. The above
argument
also establishes that \CR is at least as strong a relaxation as
\textsc{Distance-LP} for \nodeMC. Easy examples show that it is strictly
stronger.
\end{document}